%% file: OnBorisovThm-arXiv.tex
\documentclass[]{amsart}
\usepackage{amsmath,amssymb,amsthm,latexsym}
\usepackage[authoryear,round]{natbib}

\renewcommand\Large{\@setfontsize\Large{15}{18}}
\usepackage{relsize}
\newcommand{\LargeMath}[1]{\mathlarger{\mathlarger{#1}}}

\usepackage{url}
\usepackage{tikz}
\usepackage{xspace}
\usepackage{graphicx}
\usepackage{psfrag}

\DeclareFontFamily{U}{MnSymbolC}{}
\DeclareSymbolFont{MnSyC}{U}{MnSymbolC}{m}{n}
\DeclareMathSymbol{\diamonddot}{\mathbin}{MnSyC}{"7E}
\DeclareFontShape{U}{MnSymbolC}{m}{n}{
    <-6>  MnSymbolC5
   <6-7>  MnSymbolC6
   <7-8>  MnSymbolC7
   <8-9>  MnSymbolC8
   <9-10> MnSymbolC9
  <10-12> MnSymbolC10
  <12->   MnSymbolC12}{}

\newcommand{\semph}[1]{\emph{\textbf{#1}}} 
\newcommand{\lemph}[1]{\emph{#1}}

\newcommand{\MATH}[1]{\ensuremath{#1}\xspace}
\newcommand{\MATHSF}[1]{\MATH{\mathsf{#1}}}
\newcommand{\MATHFRAK}[1]{\MATH{\mathfrak{#1}}}
\newcommand{\MATHBF}[1]{\MATH{\mathbf{#1}}}
\newcommand{\MATHCAL}[1]{\MATH{\mathcal{#1}}}

\newcommand{\Ev}{\MATHSF{Ev}}
\newcommand{\IM}{\MATHSF{IM}}
\newcommand{\IOb}{\MATHSF{IOb}}
\newcommand{\Q}{\MATHSF{Q}}
\newcommand{\M}{\MATHFRAK{M}}

\newcommand{\ax}[1]{\MATHSF{#1}}
\newcommand{\taxis}{\MATHBF{t}}
\newcommand{\de}{\MATH{\stackrel{\text{\tiny def}}{=}}}
\newcommand{\defiff}{\MATH{\ \stackrel{\text{\tiny def}}{\Longleftrightarrow}\ }}

\newcommand{\Id}{\MATHSF{Id}}
\newcommand{\Pres}{\MATHSF{P}}
\newcommand{\Co}{\MATHSF{C}}

\newcommand{\wrt}{w.r.t.\ }
\newcommand{\eg}{e.g.,\ }
\newcommand{\ie}{i.e.\ }
\newcommand{\cf}{cf.\ }

\theoremstyle{definition} 
\newtheorem{thm}{Theorem}

\newtheorem{cor}[thm]{Corollary} 
\newtheorem{prop}[thm]{Proposition}
\newtheorem{lem}[thm]{Lemma}

\newtheorem{rem}[thm]{Remark}

\newtheorem{que}{Question}
\newcommand{\QED}{\ensuremath{\hfill\Box}}

\newcommand*{\vv}[1]{\MATH{\vec{\mkern0mu#1\mkern2mu}}}

\newcommand{\vvu}{\vv{u}}
\newcommand{\vvv}{\vv{v}}
\newcommand{\vve}{\vv{e}}
\newcommand{\vvp}{\vv{p}} 
\newcommand{\vvq}{\vv{q}}
 
\newcommand{\BAX}{\ax{Borisov's\, Axioms}}

\newcommand{\E}{\MATHCAL{E}}
\newcommand{\C}{\MATHCAL{C}}
\newcommand{\GH}{\MATHSF{H}} 
\newcommand{\G}{\MATHSF{G}}
\newcommand{\MQ}{\MATHFRAK{Q}}
\newcommand{\R}{\mathbb{R}}
\newcommand{\W}{\mathbb{W}} 
\newcommand{\Poi}{\MATHSF{Poi}}
\newcommand{\cPoi}{{}_{\text{\normalsize \cc}}\MATHSF{Poi}}
\newcommand{\Gal}{\MATHSF{Gal}}
\newcommand{\Eucl}{\MATHSF{Eucl}} 
\newcommand{\cEucl}{{}_{\text{\normalsize \cc}}\MATHSF{Eucl}}
\newcommand{\Triv}{\MATHSF{Triv}}
\newcommand{\Trivi}{\MATHSF{Triv}^{\uparrow}}
\newcommand{\Eucli}{\MATHSF{Eucl}^{\uparrow}}
\newcommand{\cEucli}{{}_{\text{\normalsize \cc}}\MATHSF{Eucl}^{\uparrow}}
\newcommand{\Poii}{\MATHSF{Poi}^{\uparrow}}
\newcommand{\cPoii}{{}_{\text{\normalsize \cc}}\MATHSF{Poi}^{\uparrow}}
\newcommand{\Gali}{\MATHSF{Gal}^{\uparrow}}
\newcommand{\cGali}{{}_{\text{\normalsize \cc}}\MATHSF{Gal}^{\uparrow}}

\newcommand{\Sym}{\MATHSF{Sym}}
\newcommand{\Trf}{\MATHSF{Trf}}
\newcommand{\Tran}{\MATHSF{Tran}}
\newcommand{\Lin}{\MATHSF{Lin}}
\newcommand{\wl}{\MATHSF{wl}}

\newcommand{\cc}{\ensuremath{\mathit{c}}\xspace}
\newcommand{\sqed}[1]{\left|#1\right|^2}

\newcommand{\sct}{\textsc{t}}
\newcommand{\scx}{\textsc{x}}
\newcommand{\scy}{\textsc{y}}
\newcommand{\scz}{\textsc{z}}

\newcommand{\edot}{\odot}
\newcommand{\mdot}{\diamonddot}

\title[Groups of Transformations Implied by the Principle of Relativity]
      {Groups of Worldview Transformations Implied by Einstein's 
       Special Principle of Relativity over Arbitrary Ordered Fields}

      \author{Judit Madar\'asz}
      \address{Judit Madar\'asz, HUN-REN Alfr\'ed R\'enyi Institute of Mathematics, Budapest, Hungary}
      \email{madarasz.judit@renyi.hu}

      \author{Mike Stannett}
      \address{Mike Stannett, School of Computer Science, The University of Sheffield, Sheffield, UK}
      \email{m.stannett@sheffield.ac.uk}

      \author{Gergely Sz\'ekely}
      \address{Gergely Sz\'ekely, HUN-REN Alfr\'ed R\'enyi Institute of Mathematics, Budapest, Hungary  \& University of Public Service, Budapest, Hungary.}
      \email{szekely.gergely@renyi.hu}

\begin{document}

\maketitle

\begin{abstract}
In 1978, Yu.\ F.\ Borisov presented an axiom system using a few basic
assumptions and four explicit axioms, the fourth being a formulation
of the relativity principle; and he demonstrated that this axiom
system had (up to choice of units) only two models: a relativistic one
in which worldview transformations are Poincar\'e transformations and
a classical one in which they are Galilean.  In this paper, we
reformulate Borisov's original four axioms within an intuitively
simple, but strictly formal, first-order logic framework, and convert
his basic background assumptions into explicit axioms. Instead of
assuming that the structure of physical quantities is the field of
real numbers, we assume only that they form an ordered field. This
allows us to investigate how Borisov's theorem depends on the
structure of quantities.

We demonstrate (as our main contribution) how to
construct Euclidean, Galilean, and Poincar\'e models of Borisov's
axiom system over every non-Archimedean field. We also demonstrate the
existence of an infinite descending chain of models and transformation
groups in each of these three cases, something that is not possible
over Archimedean fields.

As an application, we note that there is a model of Borisov's axioms that satisfies the relativity
principle, and in which the worldview transformations are Euclidean isometries.
Over the field of reals it is easy to eliminate this model using
natural axioms concerning time's arrow and the absence of instantaneous motion. 
In the case of non-Archimedean
fields, however, the Euclidean isometries appear intrinsically as
worldview transformations in models of Borisov's axioms and neither
the assumption of time's arrow, nor the rejection of instantaneous
motion, can eliminate them.
\end{abstract}




\section{Introduction}
${}$\\

In his famous 1905 paper, \cite{Einstein} based his special
theory of relativity on two explicit postulates: (1) the
\emph{principle of relativity}, according to which all inertial
observers (coordinate systems) are equivalent; and (2) the \emph{light
  postulate}, according to which there exists a ``stationary''
coordinate system in which all light signals travel with the same
speed. A few years later, beginning in 1910,
\cite{Ign1910c,Ignatowsky,Ign1911b} attempted to simplify
the theory by removing the need for the light postulate. Assuming the
principle of relativity, some ideas from electrodynamics, and various
hidden assumptions, he deduced that the associated (homogeneous)
coordinate system transformations must be Lorentz
transformations. 

\clearpage

\cite{Frank+Rothe} then considered
both Einstein's and Ignatowsky's arguments in more detail, identifying
four assumptions (rather than two) made by Einstein. They argued not
only that the light postulate is unnecessary, but that two more of
these assumptions can also be deduced by considering how
transformations compose with one another and act on points and
lines. Their system was less restrictive than Ignatowsky's, because
transformations were shown to split into three distinct classes:
Galilean transformations; Lorentz transformations; and a rather
unusual class they called ``Doppler'' transformations
\citep[eqn.~129]{Frank+Rothe}.

In 1978, Yu.\ F.\ Borisov presented the axiom system with which we
shall mostly be concerned in this paper. In his axiom system,
  Borisov presented a few basic background assumptions and
four explicit axioms, the fourth being a formulation of the relativity
principle. Then he demonstrated that his axiom system
had (up to choice of units) only two models: a relativistic one in
which worldview transformations are Poincar\'e transformations (the
more general, inhomogeneous, counterparts of Ignatowsky's Lorentz
transformations), and a classical one in which they are Galilean~\citep{Borisov1978} 
(\cf~\cite[\S 10, pp.~60-61]{Guts}). Gyula D\'avid
subsequently showed (using a different framework) the existence
  of a model that also satisfies the relativity principle and in which
  worldview transformations are Euclidean isometries. He also proved
(over the field of reals) a characterization theorem stating that the
principle of relativity with some auxiliary assumptions implies that the
worldview transformations between inertial observers are either
Euclidean isometries, or else Galilean or Poincar\'e
transformations~\citep{dgyNoPh}. However, he eliminated the models
corresponding to Euclidean isometries by adding a new assumption, that
motion from one spatial location to another cannot be
instantaneous. Similarly, Euclidean isometries do not appear in
Borisov's models because they are eliminated by his assumption
\citep[\textsf{Axiom\,II}]{Borisov1978} that there is an arrow of time.

All of these studies implicitly assume that coordinates and other
physically observable quantities can be represented as values from the
real number field ($\mathbb{R}$), even though this assumption is not
well-founded. We have no empirical reason to make this assumption,
because all practical measurements yield only rational approximations
-- even quantum electro-dynamics (QED), widely regarded as the most
precisely tested physical theory so far, only boasts accuracies to
around 12 decimal digits~\citep{QEDAccuracy}. Indeed, one of our goals
in writing this paper is to investigate what properties of numbers are
actually needed if we want to model certain theories. For example, two
of us have previously shown elsewhere that special relativity can also
be modeled over the field of rational numbers~\citep{MSzRac}. We will
demonstrate, in fact, that special relativity theories defined over
\emph{non-Archimedean} fields are fundamentally different to those
defined over Archimedean fields like $\mathbb{R}$, because models
featuring Euclidean isometries as worldview transformations cannot be
eliminated, neither by introducing Borisov's arrow of time assumption,
nor by allowing D\'avid's ban on instantaneous motion. Moreover, the
group of worldview transformations contains an infinite descending
chain of proper worldview transformation subgroups -- this is in sharp
contrast to the situation when the field is Archimedean, where no such
descending chains are possible.

These results argue that non-standard analysis has an important role to play in the study of physical theories, since they show that one cannot
rule out --  on purely empirical grounds -- the possibility that infinitesimals and other non-standard values are physically relevant; they provide strong confirmatory evidence that the common preference for using real numbers to represent scalars and coordinates is one of convention rather than necessity. While a few authors have considered the use of non-standard analysis in mathematical physics and the study of stochastic processes (\eg \cite{nonstandard-physics,Nelson87,Alb87,BV88,CC95,Perlis2016}), most work in the field has understandably focussed on applications within mathematics itself. Our results suggest that further investigation of non-standard applications to mathematical physics is warranted, not just in terms of its usefulness as a tool for simplifying proofs, but in its own right: if non-standard scalars are taken to be physically meaningful, how does that change our understanding of the world around us?

As this brief summary illustrates, there is long history of research
addressing the logical foundations of relativity theory, with each
generation of researchers discovering implicit (unstated) assumptions
embedded in the work of their predecessors. We believe it is important
to avoid hidden assumptions wherever possible, so as to place physical
theory on secure logical and strictly mathematical foundations. Our
preferred approach is to express assumptions as simple, strictly
formal, first-order, explicit axioms, which are then used to support
formal proof-driven investigation. This provides the clarity we need
for ensuring that hidden assumptions are captured explicitly (\cf
\cite[\S Why FOL?]{BigBook} and~\cite[\S 11]{SzPhD}), while at the
same time making it relatively simple to verify whether the axioms are
strong enough to do the job we require of
them~\citep{GBT15,StannettNemeti}.

\subsection*{Contributions.} In this paper:

\begin{enumerate}
\item  We reformulate Borisov's original four axioms within an
  intuitively simple, but strictly formal, first-order logic
  framework, and convert his basic background assumptions into
  explicit axioms.
\item  Instead of assuming that the structure of physical
  quantities is the field of real numbers, we assume only that they
  form an ordered field. This allows us to investigate how Borisov's
  theorem depends on the structure of quantities.
\item  We characterize the groups of worldview transformations
  corresponding to models of (our reformulation of) Borisov's axiom
  system over any ordered field, \cf Theorems~\ref{thm:wtrf} and
  \ref{thm:model}. Using this characterization, we show that Borisov's
  \ax{Axiom\,III} is not needed to prove his main theorem, \cf
  Theorem~\ref{thm:B0}.
\item  We demonstrate (as our main contribution) how to construct
  Euclidean, Galilean, and Poincar\'e models of Borisov's axiom system
  over every non-Archimedean field, \cf Theorem~\ref{thm:non-arch}. In
  this theorem, we also demonstrate the existence of an infinite
  descending chain of models and transformation groups in each of
  these three cases, something that is not possible over Archimedean
  fields like $\mathbb{R}$.
\item  We show, in the case of non-Archimedean fields, that the
  Euclidean isometries appear intrinsically as worldview
  transformations in models of Borisov's axioms. Neither Borisov's
  assumption of time's arrow, nor D\'avid's rejection of instantaneous
  motion, can eliminate them.
\end{enumerate}

\section{Basic notations and axioms}

\subsection{The underlying logical framework}

Throughout this paper we consider models of the form
\[
  \M = \langle \Ev,\IM,\IOb,\Q, +,\cdot,\leq,\Co,\Pres \rangle
\]
where the various components are intended to have the following
interpretations:
\begin{itemize} 
\item $\Ev$ is a nonempty set of \semph{events};
\item $\IM$ is a nonempty set of \semph{inertial motions};
\item $\IOb$ is a nonempty set of \semph{inertial observers}
  (reference systems);
\item $\MQ=(\Q,+,\cdot,\leq)$ is a structure of \semph{quantities};
\item $\Co\subseteq\IOb\times\Ev\times\Q^4$ is a relation used to
  express \semph{coordinatization}; and
\item $\Pres\subseteq\IM \times \Ev$ is a relation used to express
  \semph{participation}.
\end{itemize} 

\noindent In other words, we use the following many-sorted first-order
logic language:
\begin{itemize}
\item $\Ev$, $\IM$, $\IOb$,and $\Q$ are four sorts representing
  different kinds of basic entities (events, inertial motions,
  inertial observers, and quantities);
\item $+$, $\cdot$, and $\le$ are the usual operation symbols and
  ordering relation, defined on the sort $\Q$;
\item $\Co$ is a relation of sort $\IOb\times\Ev\times\Q^4$, where the
  expression $\Co(k,e,\vvp)$ represents the idea that \lemph{inertial
    observer $k\in\IOb$ coordinatizes (``sees'') event $e\in\Ev$ at
    coordinate point $\vvp\in\Q^4$}; and
\item $\Pres$ is a relation of sort $\IM\times\Ev$, where the
  expression $\Pres(i,e)$ represents the idea that \lemph{inertial
    motion $i\in\IM$ participates in event $e\in\Ev$}, or in other
  words worldline of inertial motion $i$ contains event $e$.
\end{itemize}

We have chosen this formal language in accordance with Borisov's
choice of basic concepts, but using some notations of the
Andr\'eka--N\'emeti school to make it easier to connect the results of
this paper to the school's general project of logic based axiomatic
foundations of relativity theories. Axiomatic approaches to relativity
theory have extensive literature, see \eg \cite{AMNsamples}. Most of
these approaches use entirely different basic concepts, and hence it
is not at all straightforward to check whether these different axioms
systems capture the same physical theory. By showing the connection
between two radically different approaches, \cite{Comparing} takes an
important first step in bringing these sporadic axiom systems
together. See \cite{Friend15} and \cite{FriMol15} for general
discussions of Andr\'eka--N\'emeti school's project and methodology
from points of view of epistemological significance and role in
scientific explanation.

\subsection{Borisov's basic assumptions}
In his paper, Borisov declares four axioms
(\ax{Axiom\,I}--\ax{Axiom\,IV}), which are in turn based upon a number
of basic assumptions, which we here refer to as
$\ax{BA\,1}$--$\ax{BA\,4}$.

\vspace{12pt}
\begin{description}
\item[\boxed{\ax{BA\,1}}] \emph{$\MQ=(\Q,+,\cdot,\leq)$ is an ordered
  field in the sense of abstract algebra.}\footnote{In \citep{Borisov1978}, $\MQ$ is assumed to be the field of
      real numbers.}
\end{description}

\vspace{12pt}
\begin{description}
\item[\boxed{\ax{BA\,2}}] \emph{For every inertial observer
  $k\in\IOb$, the coordinatization\footnote{In \citep{Borisov1978}, 
	inertial observers and their coordinatizations
    are identified.} of $k$ is a bijection $\Co_k \colon \Ev\to \Q^4$.}
\end{description}
\vspace{12pt}

Given inertial observer $k\in\IOb$, the binary relation
$\Co_k\subseteq \Ev\times\Q^4$ between events and coordinate points
described in \ax{BA\,2} can be defined by
\[
   \Co_k(e,\vvp)\defiff\Co(k,e,\vvp).
\]

By \ax{BA\,2}, we can introduce the \semph{worldview transformations},
$f_{kh}$, \lemph{between inertial observers $h$ and $k$} as the
composition of bijections $\Co_k$ and $\Co^{-1}_h$, \ie
\[
  f_{kh}\de\Co_k\circ\Co^{-1}_h \colon \Q^4\to\Q^4.
\]
Again by \ax{BA\,2}, these worldview transformations are bijections
from $\Q^4$ to $\Q^4$, for which
\begin{equation}\label{eq:wwtrf}
  f_{mh}= f_{mk}\circ f_{kh}\qquad\text{ and }\qquad
  f_{kh}^{-1}=f_{hk}
\end{equation}
for all inertial observers $h,k,m\in\IOb$.

Let us define the \lemph{worldline\footnote{In \citep{Borisov1978}, inertial motions and their worldlines are
    identified.} of inertial motion $i\in\IM$} as
\[
  \wl(i)\de\big\{e\in\Ev \colon \Pres(i,e)\big\},
\]
and the \semph{worldline} \lemph{of inertial motion $i\in\IM$}
\semph{according to inertial observer $k\in\IOb$} as
\[
  \wl_k(i)\de\Co_k[\wl(i)].
\]
By \ax{BA\,2}, worldview transformations map worldlines to worldlines,
\ie for all inertial observers $h,k\in\IOb$ and every inertial motion
$i\in\IM$,
\begin{equation}\label{eq:wl}
  f_{kh}[\wl_h(i)]=\wl_k(i).
\end{equation}

\begin{figure}[!htb]
  \begin{center}
    \input{framework.tikz}
    \caption{Illustrating the concepts of coordinatization, worldlines
      and worldview transformations, and the relationship between
      them.}
  \end{center}
\end{figure}

The \semph{time component} and \semph{space component} of
$\vvp=(p_0,p_1,p_2,p_3)\in\Q^4$ are defined respectively as
\[
   p_t\de p_0\qquad\text{and}\qquad \vvp_s\de (p_1,p_2,p_3).
\]
A set $\ell$ is called a \semph{line} if and only if there are
$\vvp,\vvv\in\Q^4$ with $\vvv \neq (0,0,0,0)$ such that $\ell=\{\vvp
+\lambda\cdot\vvv \colon \lambda\in\Q\}$.

The line $\ell$ is \semph{vertical} if and only if $\vvp_s=\vvq_s$ for every
$\vvp,\vvq\in\ell$.

\vspace{12pt}
\begin{description} 
\item[\boxed{\ax{BA\,3}}] \emph{For every\footnote{In our first-order
    logic language, quantifying over lines can be done by quantifying
    over pairs of coordinate points.} vertical line $\ell$ and
  inertial observer $k\in\IOb$, there is an inertial motion $i\in\IM$
  such that the worldline of $i$ according to $k$ is $\ell$, \ie
  $\wl_k(i)=\ell$.}
\end{description}
\vspace{12pt}

The line $\ell$ is of \semph{finite slope} if and only if there are
$\vvp,\vvq\in\ell$ such that $p_t\neq q_t$.

\vspace{12pt}
\begin{description} 
\item[\boxed{\ax{BA\,4}}] \emph{For every inertial motion $i\in\IM$
  and every inertial observer $k\in\IOb$, worldline $\wl_k(i)$ is a
  line of finite slope.}
\end{description}
\vspace{12pt}

\subsection{Borisov's Axioms}
We say that \lemph{inertial motion $i\in\IM$ is \semph{stationary} \wrt inertial observer $k$} if and only if $\wl_k(i)$ is a vertical line.  Then \lemph{inertial observer $h\in\IOb$ is \semph{at rest} \wrt inertial observer $k\in\IOb$}, if and only if, whenever inertial motion $i\in\IM$ is stationary \wrt $h$, then $i$ is also stationary \wrt $k$. 

\vspace{12pt}
\begin{description}
\item[\boxed{\ax{Axiom\, I}}] \emph{There exist inertial observers
  $k,h\in\IOb$, where $h$ is not at rest \wrt $k$.}
\end{description}
\vspace{12pt}

Given $\vvu=(u_0,\ldots,u_{n-1})\in\Q^n$ and
$\vvv=(v_0,\ldots,v_{n-1})\in\Q^n$, the \semph{scalar product} of
$\vvu$ and $\vvv$ is defined in the usual way:
\[
  \vvu\cdot\vvv \de u_0v_0 +\ldots + u_{n-1}v_{n-1}.
\]

The Euclidean length of certain vectors in $\Q^4$ may not exist
because $\MQ$ can be any ordered field, including for example the
field of rational numbers in which square roots are not always
defined. To avoid this problem, we use instead the
\semph{squared-Euclidean length} of $\vvv$, defined as
\[
  \sqed{\vvv}\de \vvv\cdot\vvv = v_0^2+\ldots+v_{n-1}^2.
\]

We call a transformation $T \colon \Q^4\to \Q^4$ a \semph{trivial
  transformation} if and only if (a) it takes vertical lines to
vertical lines; and (b) it is the composition of a translation and a
linear transformation preserving the squared-Euclidean length (or
equivalently, preserving the scalar product).  We say that
transformation $T:\Q^4\to\Q^4$ is \semph{orthochronous} if and only if
$T(1,0,0,0)_t > T(0,0,0,0)_t$. The set of orthochronous trivial
transformations is denoted by $\Trivi$.

\vspace{12pt}
\begin{description} 
\item[\boxed{\ax{Axiom\, II}}] 
${}$
\begin{itemize}
\item[a)] \emph{Given any inertial observer $k\in\IOb$ and any\footnote{Trivial transformations are affine ones
  and hence they can be represented by a $4\times 4$ matrix and a 
$4$-dimensional translation vector.  Therefore, in our first-order logic
  language, quantifying over trivial transformations can be done by
  quantifying over the $20$ quantity parameters
  representing those transformations.}
  orthochronous trivial transformation $\varphi\in\Trivi$, there is an
  inertial observer $h\in\IOb$ for which the worldview transformation
  between $h$ and $k$ is $\varphi$, \ie
  $f_{kh}=\varphi$.}
\item[b)] \emph{For all inertial observers $k,h\in\IOb$, if $h$ is at
  rest \wrt $k$, then $f_{kh}\in\Trivi$.}
\end{itemize}
\end{description}

\vspace{12pt}
\begin{description}
\item[\boxed{\ax{Axiom\, III}}] \emph{Given any inertial motion
  $i\in\IM$, any inertial observer $k\in\IOb$, and any orthochronous
  trivial transformation $\varphi\in\Trivi$, there is an inertial
  motion $i'\in\IM$ such that $\wl_k(i') =\varphi\big[\wl_k(i)\big]$.}
\end{description}

\vspace{12pt}
\begin{description}
\item[\boxed{\ax{Axiom\, IV}}] \emph{Given any inertial observers
  $k,k',h\in\IOb$, there is an inertial observer $h'\in\IOb$ such that
  $f_{hh'}=f_{kk'}$.}
\end{description}
\vspace{12pt}
\begin{figure}[!tb]
  \begin{center}
    \input{ax4.tikz}
      \caption{Illustrating \ax{Axiom\,IV}.}
  \end{center}
\end{figure}

Overall, then, we capture Borisov's axiom system formally as:
\[
  \boxed{\BAX}\de \big\{ \ax{BA\,1}, \dots, \ax{BA\,4},
  \ax{Axiom\,I}, \dots,\ax{Axiom\,IV} \big\}.
\]
\vspace{12pt}

\subsection{\ax{Axiom\,IV} as a formulation of Einstein's Special Principle of Relativity}

The principle of relativity can be formalized in several different
ways, see \eg \cite{MSS3SPR,Gom15,GSz15}. In Borisov's axiom system, Einstein's
principle of relativity is captured by \ax{Axiom\,IV}. This is so
because in terms of worldview transformations \ax{Axiom\,IV} tells
that no inertial observer is distinguished by how its worldview can be
related to those of other observers. In this section, we are going to
explore some equivalent formulations of this central assumption. To do
so, for each inertial observer $k\in\IOb$, we define the
\semph{worldview of $k$} to be
\[
  \W_k\de \big\{f_{kh} \colon h\in\IOb\big\}.
\]
and the \lemph{set of \semph{worldview transformations}} by
\[
  \W \de \bigcup_{k \in \IOb}{\W_k} = \big\{f_{kh} \colon k,h\in\IOb\big\}.
\]
Where we wish to emphasize the underlying model $\M$ on which these
constructs are based, we will add the model name as a suffix, and
write, \eg $\W_{\M}$.
\begin{figure}[!hbt]
  \begin{center}
    \input{Wk.tikz}
          \caption{Illustrating the worldview $\W_k$ of inertial
          observer $k$.}
  \end{center}
\end{figure}

We first show (Proposition~\ref{prop:group}) that \ax{Axiom\,IV} is
equivalent to the claim that all observers have the same worldview,
which is in turn equivalent to saying that each worldview is a group
under composition.

\begin{prop}\label{prop:group} 
Assume \ax{BA\,2}. Then \ax{Axiom\,IV} is equivalent to each of the
following statements (and they are all equivalent to one another):

\begin{enumerate}
\renewcommand{\theenumi}{\roman{enumi}}
    \item $\W_k=\W_h$ for all $k,h\in\IOb$.
    \item $\W_k=\W$ for all $k\in\IOb$.
    \item $\W_k=\W$ for some $k\in\IOb$.
    \item $\W_k$ is closed under composition for all $k\in\IOb$. 
    \item $\W_k$ forms a group under composition for
      some $k\in\IOb$.
\end{enumerate}
\end{prop}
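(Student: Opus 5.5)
The plan is a cycle of implications, using only \ax{BA\,2} and the identities \eqref{eq:wwtrf}. In particular we use $f_{kk}=\Id$, $f_{kh}^{-1}=f_{hk}$ and $f_{mh}=f_{mk}\circ f_{kh}$. Note first that \ax{Axiom\,IV} says exactly $\W_k\subseteq\W_h$ for all $k,h\in\IOb$: given $f_{kk'}\in\W_k$, it provides $h'$ with $f_{kk'}=f_{hh'}\in\W_h$. Since $k,h$ range over all observers, this is equivalent to (i). Also (i)$\Rightarrow$(ii) is immediate because $\W=\bigcup_h\W_h$, and (ii)$\Rightarrow$(iii) is trivial because $\IOb\neq\emptyset$.

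For (iii)$\Rightarrow$(i), fix $k$ with $\W_k=\W$ and take any $h,m$. The aim is to show $\W_h\subseteq\W_m$. Let $f_{hh'}\in\W_h$. Then $f_{mk}\circ f_{hh'}\in\W$? That is not immediate, so instead I will use composition closure, deriving it from (iii) first. Any $f_{ab}\in\W$ equals some $f_{kx}$, and any $f_{cd}\in\W$ equals some $f_{ky}$. For the product we write $f_{kx}\circ f_{ky}$. Here $f_{kx}\circ f_{ky}$ is not directly of the form $f_{k\cdot}$, but $f_{ky}^{-1}=f_{yk}\in\W=\W_k$, so inverses stay in $\W_k$.

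To get closure I use the trick that $\W_k=\W$ contains every $f_{xk}$, hence also $f_{kx}^{-1}$. Given $f_{kx},f_{ky}\in\W_k$, the map $f_{xk}\circ f_{ky}=f_{xy}\in\W=\W_k$, so $\W_k$ is closed under maps of the form $g^{-1}\circ g'$ with $g,g'\in\W_k$. Since $\Id=f_{kk}\in\W_k$, this gives closure under inverses, and hence closure under composition: $g\circ g'=(g^{-1})^{-1}\circ g'$. Thus (iii)$\Rightarrow$(v).

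For (v)$\Rightarrow$(i), let $\W_k$ be a group. Then $f_{hk}=f_{kh}^{-1}\in\W_k$, so $f_{hh'}=f_{hk}\circ f_{kh'}\in\W_k$. This shows $\W_h\subseteq\W_k$ for every $h$. Conversely, $f_{kx}=f_{kh}\circ f_{hk}\circ f_{kx}$. I want to rewrite this as an element of $\W_h$; the problem is that $f_{hk}\circ f_{kx}=f_{hx}\in\W_h$ sits after $f_{kh}$, and I need everything expressed in $\W_h$ form instead. So I argue differently. Given any $h$ and any $g\in\W_k$, pick by the group property an $x$ with $f_{kx}=f_{kh}\circ g$. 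Then $f_{hx}=f_{hk}\circ f_{kx}=g$, so $g\in\W_h$. Hence $\W_k=\W_h$ for all $h$, which is (i).

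The remaining link is (iv). First, (i) together with (v) gives (iv): each $\W_h=\W_k$ is a group, hence closed under composition. For (iv)$\Rightarrow$(v), closure of $\W_k$ together with $\Id=f_{kk}\in\W_k$ is not enough by itself to give inverses. So I will use (iv) for all observers. From $f_{hk}\circ f_{kx}=f_{hx}$ alone nothing new follows; instead I show $\W_k\subseteq\W_h$. Take $g=f_{kx}\in\W_k$. Closure of $\W_h$ gives $f_{hk}\circ f_{kh}=\Id$, which yields nothing. Closure of $\W_k$ gives $f_{kh}\circ f_{kx}=f_{ky}$ for some $y$, hence $f_{hy}=f_{hk}\circ f_{ky}=f_{kx}$, so $g\in\W_h$. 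Thus (iv)$\Rightarrow$(i) directly, since $k,h$ were arbitrary.

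The only delicate step is choosing the right auxiliary observer (the $y$ with $f_{ky}=f_{kh}\circ g$). I expect the bookkeeping of which worldview each composite belongs to to be the main obstacle.
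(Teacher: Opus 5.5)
Your proof is correct. Along the cycle \ax{Axiom\,IV} $\Leftrightarrow$ (i) $\Rightarrow$ (ii) $\Rightarrow$ (iii) $\Rightarrow$ (v) $\Rightarrow$ (i) it is the paper's argument. Your closure of $\W_k$ under $g^{-1}\circ g'$ via $f_{xk}\circ f_{ky}=f_{xy}\in\W=\W_k$ is exactly the paper's computation $f_{d'k}\circ f_{ka'}=f_{d'a'}$. Your two inclusions in (v) $\Rightarrow$ (i) spell out the paper's identity $\W_h=f_{hk}\circ\W_k=f_{kh}^{-1}\circ\W_k=\W_k$.

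The real difference is how (iv) is attached. The paper proves (ii) $\Rightarrow$ (iv) and then (iv) $\Rightarrow$ (v). It builds inverses in $\W_k$ from closure of \emph{another} worldview $\W_m$: it writes $f_{mk}\circ f_{mk}=f_{mh}$, whence $f_{mk}=f_{km}\circ f_{mh}=f_{kh}\in\W_k$. You instead prove (iv) $\Rightarrow$ (i) directly. Closure of $\W_k$ alone gives $f_{kh}\circ g=f_{ky}$ for some $y$, hence $g=f_{hy}\in\W_h$, so $\W_k\subseteq\W_h$ for every $h$. You then recover (iv) from (i) together with (v).

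Your route is slightly more economical. It never has to produce inverses, and it reuses the computation from the second half of your (v) $\Rightarrow$ (i). It also shows that composition-closure of a single $\W_k$ already yields all instances of \ax{Axiom\,IV} with that fixed $k$. The paper's route instead keeps (iv) on the group-theoretic side, by showing directly that the hypothesis forces the worldviews to be closed under inverses.

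In a final write-up, delete the abandoned attempts: the aborted direct (iii) $\Rightarrow$ (i) and the remarks about steps that ``yield nothing''. They obscure what is otherwise a clean argument.
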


\begin{figure}[!htb]
  \begin{center}
    \input{proofdiagram.tikz}
  \end{center}
  \caption{Diagram illustrating the steps proving
    Proposition~\ref{prop:group}}
\end{figure}

\begin{proof}
\ax{Axiom\,IV} says that, given any $f_{kk'} \in \W_k$, there is an
$f_{hh'} \in \W_h$ satisfying $f_{kk'} = f_{hh'}$. So statement (i) is
simply a reformulation of \ax{Axiom\,IV} in terms of worldviews.

(i) $\implies$ (ii): follows because $\W = \bigcup_{h\in\IOb}\W_h$.

(ii) $\implies$ (iii): trivial as $\IOb$ is not empty.

Let us now prove the following:
\begin{equation}\label{eq:group}
  \W_k = \W \implies \W_k \text{ is a group}.
\end{equation}
We know that $\W_k=\W$ is closed under inverses because
$f^{-1}_{kh}=f_{hk}$ by \eqref{eq:wwtrf} (which follows from
\ax{BA\,2}). It remains to show that $\W$ is closed under
composition. To do so, choose any $f_{dc},f_{ba}\in\W$. Since
$\W_k=\W$, there are observers $a',d'\in\IOb$ such that
$f_{ka'}=f_{ba}$ and $f_{kd'}=f_{cd}$. Now \eqref{eq:wwtrf} yields
$f_{dc}\circ f_{ba}= f_{cd}^{-1}\circ f_{ba} = f_{kd'}^{-1}\circ
f_{ka'} = f_{d'k}\circ f_{ka'}=f_{d'a'}\in\W$, as required.

(iii) $\implies$ (v): follows trivially from \eqref{eq:group}.

(v) $\implies$ (i): We have $\W_h = f_{hk} \circ \W_k$ for all $k,
h \in \IOb$ since, by \eqref{eq:wwtrf}, $f_{hm} = f_{hk} \circ f_{km}$
for all $m \in \IOb$. Therefore, since $f_{hk}=f_{kh}^{-1}$ by
\eqref{eq:wwtrf}, we have $\W_h= f_{hk} \circ \W_k=f_{kh}^{-1}\circ
\W_k=\W_k$ for all $h\in\IOb$ if $\W_k$ is a group (as
$f_{kh}\in\W_k$).

(ii) $\implies$ (iv): follows trivially from \eqref{eq:group}.

(iv) $\implies$ (v): Since $\IOb$ is not empty, we only have to
prove that $\W_k$ is closed under composition and inverses for some
$k\in\IOb$. Since, by (iv), $\W_k$ is closed under composition it is
enough to show that it is also closed under inverses. To prove that,
let $f_{km}\in\W_k$ be arbitrary. We need to show that
$f_{mk}=f_{km}^{-1}\in\W_k$. Since $\W_m$ is also closed under
composition by (iv), we have $f_{mk}\circ f_{mk}\in\W_m$. Then there
is $h\in\IOb$ such that $f_{mk}\circ f_{mk}=f_{mh}$. Then
$f_{mk}=f_{mk}^{-1}\circ f_{mk}\circ f_{mk}=f_{km}\circ f_{mk}\circ
f_{mk}=f_{km}\circ f_{mh}=f_{kh}\in\W_k$.
\end{proof}

\section{Transformations}
${}$\\

Throughout this section, we assume \ax{BA\,1}, \ie that
$\MQ=(\Q,+,\cdot,\leq)$ is an ordered field. Let $\cc\in\Q$ and
  suppose $\cc>0$.

We define the \semph{$\cc$-scalar product} of vectors
$\vvp,\vvq\in\Q^4$ as:
\begin{equation*}
  \vvp\edot_{\cc}\vvq\de \cc p_t\cdot \cc q_t + \vvp_s\cdot\vvq_s,
\end{equation*}
and analogously the \semph{$\cc$-Minkowski scalar product} as:
\begin{equation*}
  \vvp\mdot_{\cc}\vvq\de \cc p_t\cdot \cc q_t - \vvp_s\cdot \vvq_s .
\end{equation*}
(Note that the $1$-scalar product coincides with the usual scalar product.) As usual, we write $|v|\de\max\{-v,v\}$ for the absolute value of $v \in \Q$.

We say that a function $T \colon \Q^4\to \Q^4$ is a \semph{linear
  $\cc$-Euclidean isometry} if and only if it is a linear transformation which
preserves the \cc-scalar product, \ie
\[
T\vvp\edot_{\cc} T\vvq=\vvp\edot_{\cc} \vvq 
\]
for every $\vvp,\vvq\in\Q^4$.   We call $T$ a \semph{$\cc$-Euclidean
  isometry} if and only if it is a composition of a linear $\cc$-Euclidean
isometry and a translation.  

We say that $T$ is a \semph{linear Galilean transformation} if and only if $T$ is a linear transformation and, for every
$\vvp,\vvq\in\Q^4$,
\[
|(T\vvp)_t|=|p_t|
\qquad\text{ and }\qquad p_t=q_t=0\implies T\vvp\cdot T\vvq =\vvp\cdot\vvq.
\]
$T$ is a \semph{Galilean transformation}  if and only if it is a composition of a linear
Galilean transformation and a translation.\footnote{In the literature it
  is customary to assume that Galilean transformations preserve time
  orientation, but here we would like to speak also about time
  reversing Galilean transformations.  Hence here it is more natural
  to introduce them this way.}  
  
We call $T$ a \semph{linear $\cc$-Poincar\'e transformation} if and only if it is a linear transformation
which preserves the \cc-Minkowski scalar product, \ie
\[
T\vvp\mdot_{\cc} T\vvq=\vvp\mdot_{\cc} \vvq, 
\]
for every $\vvp,\vvq\in\Q^4$.  $T$ is a \semph{$\cc$-Poincar\'e
  transformation} if and only if it is a composition of a linear
$\cc$-Poincar\'e transformation and a translation.

In the particular case when $\cc = 1$, we sometimes revert to the more familiar standard terminology, viz. $T$ is a \semph{Euclidean isometry} if and only if it is a 1-Euclidean isometry, 
and a \semph{Poincar\'e  transformation} if and only if it is a
1-Poincar\'e transformation. 

Finally, we note that $T$ is a trivial transformation if and only if it is a Euclidean isometry taking vertical lines to vertical lines.

Throughout this paper we investigate the various groups corresponding to these different transformation classes. The notation we use for each of these transformation groups is shown in Table \ref{tab:group-notations}.

\begin{center}
\begin{table}
  \caption{Transformation group notations used in this paper.}
\label{tab:group-notations}
    \begin{tabular}{rcl}
       \hline\hline
      $\Triv_{\MQ}$&$\quad$& Trivial transformations over $\MQ$\\
      $\Eucl_{\MQ}$&&  Euclidean isometries over $\MQ$\\
       $\Poi_{\MQ}$&&   Poincar\'e transformations over $\MQ$\\
      $\cEucl_{\MQ}$&& $\cc$-Euclidean isometries over $\MQ$\\
      $\cPoi_{\MQ}$&&   $\cc$-Poincar\'e transformations over $\MQ$ \\
      $\Gal_{\MQ}$&&  Galilean Transformations over $\MQ$ \\
       \hline\hline
    \end{tabular}
  \end{table}
\end{center}

The orthochronous variants of these sets are respectively denoted by
$\Trivi_{\MQ}$, $\Eucli_{\MQ}$, $\Poii_{\MQ}$, $\cEucli_{\MQ}$,
$\cPoii_{\MQ}$ and $\Gali_{\MQ}$.

As one would expect, all of these sets (except $\cEucli_{\MQ}$) are
groups under composition. 
\label{subset-not-subgroup}
To see that $\cEucli_{\MQ}$
is \emph{not} a group, notice that while it is closed under inverses, it is not closed under composition. For example, for the case
  $\cc=1$, let $R$ be a rotation of $45^\circ$ in the first two
  coordinates which leaves the other two coordinates fixed. Then $R\in\cEucli_{\MQ}$,   but $R\circ R\not\in\cEucli_{\MQ}$.
	
	Let us also note that
\begin{equation*}
  \Triv_{\MQ}=\bigcap_{\cc>0}\cPoi_{\MQ}\cap\bigcap_{\cc>0}\cEucl_{\MQ}\cap \Gal_{\MQ}
\end{equation*}
and hence
\begin{equation*}
  \Trivi_{\MQ}=\bigcap_{\cc>0}\cPoii_{\MQ}\cap\bigcap_{\cc>0}\cEucli_{\MQ} \cap\Gali_{\MQ}.
\end{equation*}

Let us introduce the following notations for the
\semph{squared $\cc$-Euclidean length} and \semph{squared $\cc$-Minkowski length}:
  \begin{equation*}
   \|\vvp\|^2_\cc\de \vvp\edot_\cc\vvp=\cc^2p_t^2+|\vvp_s|^2\quad \text{ and }\quad
   \|\vvp\|^2_{\cc,\mu}\de \vvp\mdot_\cc\vvp=\cc^2p_t^2-\sqed{\vvp_s}
\end{equation*}
We note that squared 1-Euclidean length coincides with the squared-Euclidean length.

Finally, we introduce the following notations for the standard basis of $\Q^4$:
\begin{equation*}
  \vve_{\sct}\de(1,0,0,0),\ 
  \vve_{\scx}\de(0,1,0,0),\ 
  \vve_{\scy}\de(0,0,1,0),\ 
  \vve_{\scz}\de(0,0,0,1).
\end{equation*}

\begin{prop} 
\label{prop:Eucl}
Let $L$ be a linear transformation. Then the following are equivalent:

\begin{enumerate}
\renewcommand{\theenumi}{\roman{enumi}}
\item $L\in\cEucl_{\MQ}$.
\item Given any $i,j\in\{\sct,\scx,\scy,\scz\}$,
\[
L\vve_i\edot_{\cc}L\vve_j=\vve_i\edot_{\cc} \vve_j=
    \begin{cases}
      \cc^2 &\text{ if } i=j =\sct \\
      1     & \text{ if } i=j\neq\sct\\  
     0 & \text{ if } i\neq j.
    \end{cases}
\]
\item Given any $\vvp\in\Q^4$$,
  \|L\vvp\|^2_\cc=\|\vvp\|^2_\cc$, \ie $L$ preserves the squared
  \cc-Euclidean length.
\end{enumerate}
\end{prop}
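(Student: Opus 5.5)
We prove the cycle (i) $\Rightarrow$ (ii) $\Rightarrow$ (iii) $\Rightarrow$ (i). All three conditions concern a fixed linear $L$, and $\edot_\cc$ is a symmetric bilinear form on $\Q^4$. So the argument is the standard one for bilinear forms. It uses only the ring axioms of the ordered field $\MQ$ from \ax{BA\,1}, together with the fact that $2\neq 0$ in an ordered field (indeed $2=1+1>0$). Note that a linear map is a $\cc$-Euclidean isometry exactly when it is a linear $\cc$-Euclidean isometry, since the translation part must then be zero. So (i) means that $L\vvp\edot_\cc L\vvq=\vvp\edot_\cc\vvq$ for all $\vvp,\vvq$.

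\textbf{(i) $\Rightarrow$ (ii).} Specialize the defining identity to basis vectors. Then compute $\vve_i\edot_\cc\vve_j$ directly from the definition $\vvp\edot_\cc\vvq=\cc p_t\cc q_t+\vvp_s\cdot\vvq_s$. This gives $\cc^2$ for $i=j=\sct$, $1$ for $i=j\neq\sct$, and $0$ otherwise.

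\textbf{(ii) $\Rightarrow$ (iii).} Write $\vvp=\sum_i p_i\vve_i$. By linearity of $L$ and bilinearity of $\edot_\cc$,
\[
\|L\vvp\|^2_\cc=\sum_{i,j}p_ip_j\,(L\vve_i\edot_\cc L\vve_j)=\sum_{i,j}p_ip_j\,(\vve_i\edot_\cc\vve_j)=\|\vvp\|^2_\cc .
\]

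\textbf{(iii) $\Rightarrow$ (i).} Use polarization. Symmetry and bilinearity give
\[
\|\vvp+\vvq\|^2_\cc=\|\vvp\|^2_\cc+2\,\vvp\edot_\cc\vvq+\|\vvq\|^2_\cc.
\]
Apply this to $L\vvp,L\vvq$, and use $L(\vvp+\vvq)=L\vvp+L\vvq$ together with (iii) three times. This yields $2\,(L\vvp\edot_\cc L\vvq)=2\,(\vvp\edot_\cc\vvq)$. Dividing by $2$ gives (i).

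The only step needing any care is this last one. One must check that polarization is legitimate over an arbitrary ordered field, which it is because the characteristic is $0$. Everything else is routine expansion.
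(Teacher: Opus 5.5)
Your proof is correct and takes essentially the paper's approach. The paper also uses bilinearity of $\edot_\cc$ in the standard basis and the polarization identity for (iii) $\Rightarrow$ (i); the only cosmetic difference is that it derives (ii) $\Rightarrow$ (i) directly, whereas you go through (iii).
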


\begin{proof}
  (i) $\implies$ (ii) and (i) $\implies$ (iii) hold by
  definition (this is what we mean by \cc-Euclidean transformations,
  $\cc$-scalar product, and squared $\cc$-Euclidean length).

  (ii) $\implies$ (i) follows by direct calculation in the standard
  basis because $L$ is a linear transformation and
  $\edot_\cc \colon \Q^4\times\Q^4\rightarrow\Q$ is a bilinear function.

  (iii) $\implies$ (i) follows because $\edot_\cc$ is symmetric,
  whence we can write
  \[
  \vvp\edot_\cc\vvq =\frac{\|\vvp+\vvq\|^2_\cc
    -\|\vvp\|^2_\cc-\|\vvq\|^2_\cc}{2} \quad\text{ for every
  }\vvp,\vvq\in\Q^4
  \]
  and the claim follows immediately.
\end{proof}

The proofs of Proposition~\ref{prop:Gal} and \ref{prop:Poi} below are
analogous and we leave the details to the reader.

\begin{prop} 
\label{prop:Gal} Let $L$ be a linear transformation. 
Then the following are equivalent:

\begin{enumerate}
\renewcommand{\theenumi}{\roman{enumi}}
\item  $L\in\Gal_{\MQ}$.
\item $(L\vve_\sct)_t = \pm 1,\ (L\vve_\scx)_t= (L\vve_\scy)_t=
(L\vve_\scz)_t=0$;  and 
\[
 L\vve_i\cdot L\vve_j =  \vve_i\cdot \vve_j=
\begin{cases}
1 & \text{ if } i=j\\
0 & \text{ if } i \neq j 
\end{cases}\ \text{ for all }\ i,j\in\{\scx,\scy,\scz\}.
\]
\item Either $(L\vvp)_t=p_t$ for all $\vvp\in\Q^4$ or else
  $(L\vvp)_t=-p_t$ for all $\vvp\in\Q^4$; and $p_t=0\implies
  \sqed{L\vvp} =\sqed{\vvp}.$ \QED
\end{enumerate}
\end{prop}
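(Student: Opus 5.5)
I will follow the pattern of the proof of Proposition~\ref{prop:Eucl} and prove the cycle (i)$\implies$(iii)$\implies$(ii)$\implies$(i). Throughout, $L$ is linear, so $L\in\Gal_{\MQ}$ just means that $L$ is a linear Galilean transformation: the translation part of the composition must vanish because $L\vec 0=\vec 0$.

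\emph{(i)$\implies$(iii).} By definition we have $|(L\vvp)_t|=|p_t|$ for every $\vvp$, and $p_t=0$ implies $\sqed{L\vvp}=\sqed{\vvp}$. The only work is to upgrade the pointwise sign condition to a uniform one. The map $\vvp\mapsto(L\vvp)_t$ is a linear functional $\tau$ with $\tau(\vvp)=\pm p_t$ for each $\vvp$. Applying this to $\vve_\sct$ gives $\tau(\vve_\sct)=\sigma$ with $\sigma\in\{1,-1\}$. Applying it to $\vve_i$ for $i\neq\sct$ gives $\tau(\vve_i)=0$. Linearity then forces $\tau(\vvp)=\sigma p_t$ for all $\vvp$. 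The second clause of (iii) is the special case $\vvp=\vvq$ of the definition.

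\emph{(iii)$\implies$(ii).} Evaluating the first clause at the basis vectors gives $(L\vve_\sct)_t=\pm1$ and $(L\vve_i)_t=0$ for $i\in\{\scx,\scy,\scz\}$. For the scalar products, note that the vectors with $p_t=0$ form a linear subspace. On this subspace $\sqed{L\vvp}=\sqed{\vvp}$, so we can polarize exactly as in Proposition~\ref{prop:Eucl}:
\[
  L\vvp\cdot L\vvq=\frac{\sqed{L\vvp+L\vvq}-\sqed{L\vvp}-\sqed{L\vvq}}{2}=\vvp\cdot\vvq
\]
for $p_t=q_t=0$. This uses that $\vvp+\vvq$ also has zero time component and that $2\neq0$ in an ordered field. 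Specializing to $\vve_i,\vve_j$ with $i,j\in\{\scx,\scy,\scz\}$ gives the Gram conditions.

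\emph{(ii)$\implies$(i).} Write $\vvp=p_t\vve_\sct+p_x\vve_\scx+p_y\vve_\scy+p_z\vve_\scz$. Linearity together with the time-component conditions gives $(L\vvp)_t=\pm p_t$, hence $|(L\vvp)_t|=|p_t|$. If $p_t=q_t=0$, then $\vvp$ and $\vvq$ lie in the span of the spatial basis vectors, and bilinearity of the scalar product with the given Gram values yields $L\vvp\cdot L\vvq=\vvp\cdot\vvq$.

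The only step that is not a direct calculation is the sign-uniformity argument in (i)$\implies$(iii). Even that is immediate once the time component is viewed as a linear functional; the rest follows the earlier proposition.
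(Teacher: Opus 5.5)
Your proof is correct and takes essentially the route the paper intends. The paper leaves this proposition to the reader as analogous to Proposition~\ref{prop:Eucl}, namely a direct computation in the standard basis plus polarization, and you use exactly those ingredients, arranged as the cycle (i)$\implies$(iii)$\implies$(ii)$\implies$(i); your linear-functional argument upgrading the pointwise condition $|(L\vvp)_t|=|p_t|$ to a uniform sign usefully makes explicit the one step that, unlike in Proposition~\ref{prop:Eucl}, is not literally ``by definition'' here.
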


\begin{prop}
\label{prop:Poi}
  Let $L$ be a linear transformation. Then the following are
  equivalent:
  
\begin{enumerate}
\renewcommand{\theenumi}{\roman{enumi}}
\item $L\in \cPoi_{\MQ}$.
\item Given any $i,j\in\{\sct,\scx,\scy,\scz\}$,
\[
L\vve_i\mdot_{\cc}L\vve_j= \vve_i\mdot_{\cc} \vve_j=
    \begin{cases}
      \cc^2 &\text{ if } i=j=\sct \\
      -1& \text{ if } i=j\neq \sct\\
      0 & \text{ if } i\neq j.
    \end{cases}
    \]
\item Given any $\vvp\in\Q^4$,
  $\|L\vvp\|^2_{\cc,\mu}=\|\vvp\|^2_{\cc,\mu}$, \ie $L$ preserves
  the squared \cc-Minkowskian length. \QED
\end{enumerate}
\end{prop}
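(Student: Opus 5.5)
The plan is to mirror the proof of Proposition~\ref{prop:Eucl}, with $\edot_\cc$ replaced by $\mdot_\cc$. The key point is that $\mdot_\cc$ is a symmetric bilinear form on $\Q^4$. It is bilinear because it is a linear combination of the bilinear maps $(\vvp,\vvq)\mapsto p_tq_t$ and $(\vvp,\vvq)\mapsto \vvp_s\cdot\vvq_s$. It is symmetric by inspection. Only the field operations are used, so everything works over an arbitrary ordered field $\MQ$, as guaranteed by \ax{BA\,1}.

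The implications (i) $\implies$ (ii) and (i) $\implies$ (iii) follow directly from the definitions. If $L$ preserves $\mdot_\cc$ on all pairs, it preserves it on basis pairs, and on the diagonal $\vvp=\vvq$. For (ii) we also compute $\vve_i\mdot_\cc\vve_j$ for the standard basis. We get $\vve_\sct\mdot_\cc\vve_\sct=\cc\cdot1\cdot\cc\cdot1=\cc^2$. For a spatial $i$ we get $\vve_i\mdot_\cc\vve_i=-1$. For $i\neq j$ the value is $0$, since the time components and spatial dot products vanish.

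For (ii) $\implies$ (i), write $\vvp=\sum_i p_i\vve_i$ and $\vvq=\sum_j q_j\vve_j$. By linearity of $L$ and bilinearity of $\mdot_\cc$,
\[
L\vvp\mdot_\cc L\vvq=\sum_{i,j}p_iq_j\,(L\vve_i\mdot_\cc L\vve_j)=\sum_{i,j}p_iq_j\,(\vve_i\mdot_\cc\vve_j)=\vvp\mdot_\cc\vvq .
\]

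For (iii) $\implies$ (i), I would use the polarization identity. Symmetry and bilinearity give
\[
\vvp\mdot_\cc\vvq=\frac{\|\vvp+\vvq\|^2_{\cc,\mu}-\|\vvp\|^2_{\cc,\mu}-\|\vvq\|^2_{\cc,\mu}}{2}.
\]
Division by $2$ is legitimate because an ordered field has characteristic $0$. Applying this identity to $L\vvp$ and $L\vvq$, and using $L(\vvp+\vvq)=L\vvp+L\vvq$, yields preservation of the $\cc$-Minkowski scalar product.

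There is no real obstacle here. The only points needing care are that bilinearity and symmetry hold despite the indefinite signature, and that $2$ is invertible in $\MQ$.
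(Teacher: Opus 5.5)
Your proposal is correct and matches the paper's intended proof: the paper leaves this case to the reader as analogous to Proposition~\ref{prop:Eucl}, whose proof uses the same steps you give. Namely, (i)$\implies$(ii) and (i)$\implies$(iii) by definition, (ii)$\implies$(i) by bilinearity in the standard basis, and (iii)$\implies$(i) by polarization using symmetry. One point you leave implicit, as the paper does, is that a linear $L\in\cPoi_{\MQ}$ has trivial translation part, since $L(0,0,0,0)=(0,0,0,0)$; so (i) really does say that $L$ itself preserves $\mdot_\cc$.
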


\section{Theorems}
\label{sec:theorems}
${}$\\

In this section we outline the various theorems that we'll be proving in this paper; the proofs follow in \ref{sec:constructions} and \ref{proof:non-arch}. Our main result is Theorem \ref{thm:non-arch}, showing how Borisov's theorem changes when fields other than $\R$ are considered. Whenever we speak about a set $\G$ of transformations as a group, we mean the group $\langle \G,\circ\rangle$. Recall that if we assume \ax{BA\,2} and \ax{Axiom\,IV}, then $\W_k=\W$ for every $k\in\IOb$ and $\W$ is a group (by Proposition~\ref{prop:group}). Recall also what Borisov's theorem states:

\begin{center}\parbox{.9\textwidth}{%
\underline{\textsc{Borisov's Theorem}}~\citep{Borisov1978}\newline{}
Suppose that \BAX hold and that $\MQ$ is the ordered field $\R$ of reals.  Then there are just two possibilities: either
\begin{align*}
\W&=\Gali_\R, \text{ or else }\\
\W&=\cPoii_\R \text{ for some }\cc>0.
\end{align*}
}
\end{center}
\label{rem1} 
In the first-order logic framework developed in
  this paper, it can be shown that Borisov's Theorem remains true if
we omit the assumption that $\MQ$ is the ordered field $\R$ of reals,
and simply assume instead that $\MQ$ is an Archimedean ordered field
in which every positive number has a square root (we omit the
details).
 It is an open question whether the statement remains
true if we omit the assumption that positive numbers have square
roots. Note, however, that if
we drop the assumption that $\MQ = \R$ without adding anything new, then Borisov's Theorem is no longer valid, because $\W$ can then be a \lemph{proper}
subgroup of $\Gali$ and $\cPoii$.  Indeed, $\W$ can even be an
orthochronous subgroup of $\cEucl$.

As usual, we will write $\GH\le\G$ to mean that \GH is a subgroup of \G, and $\GH<\G$
to indicate that it is a proper subgroup. Analogously, we write
$\M<\M'$ to mean that $\M$ is a proper submodel of $\M'$.

\begin{thm}[\sc{Main Theorem}]
\label{thm:non-arch} Let $\MQ$ be any
non-Archimedean ordered field.  Then, for every $\cc>0$, there are
models $\M^E$, $\M^G$ and $\M^P$ of \BAX over \MQ such that
\begin{align*}
  \Trivi_{\MQ}&< \W_{\M^E} \subset\cEucli_{\MQ}, \\
  \Trivi_{\MQ}&< \W_{\M^G} < \Gali_{\MQ},   \\
  \Trivi_{\MQ}&< \W_{\M^P} < \cPoii_{\MQ}.
\end{align*}
Moreover, in each of these three cases, there is a strictly
  descending countably infinite chain of models
\begin{align*}
& \ldots <  \M^E_i  < \ldots < \M^E_1<\M^E_0,\\
& \ldots <  \M^G_i  < \ldots < \M^G_1<\M^G_0,\\
& \ldots <  \M^P_i  < \ldots < \M^P_1<\M^P_0 
\end{align*}
over $\MQ$  of \BAX such that
\begin{align*}
  & \Trivi_{\MQ}<\ldots<\W_{\M^E_i}<\ldots < \W_{\M^E_1} <\W_{\M^E_0}\subset\cEucli_{\MQ}, \\
  & \Trivi_{\MQ}<\ldots< \W_{\M^G_i}<\ldots < \W_{\M^G_1} <\W_{\M^G_0} <\Gali_{\MQ},  \\
  & \Trivi_{\MQ}< \ldots< \W_{\M^P_i}<\ldots < \W_{\M^P_1} <\W_{\M^P_0} <\cPoii_{\MQ}. 
\end{align*}
\QED
\end{thm}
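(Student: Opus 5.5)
Since the paper promises characterization results (Theorems~\ref{thm:wtrf} and \ref{thm:model}) that I cannot yet cite, I will rely on the standard construction: from a group $\G$ of transformations build a model. Given $\G$ with $\Trivi_{\MQ}\le\G$, let $\Ev\de\Q^4$, let $\IOb\de\G$ with $\Co_g\de g^{-1}$ (so $f_{kh}=k^{-1}\circ h$ and $\W=\G$), and let $\IM$ be the set of all images $g[\ell]$ of vertical lines $\ell$ under $g\in\G$, with $\Pres$ given by membership. The plan is to verify the axioms for such a model.

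\ax{BA\,1} and \ax{BA\,2} are immediate. \ax{BA\,3} holds since $\G$ contains the translations. For \ax{BA\,4} we need every $g\in\G$ to map vertical lines to finite-slope lines. \ax{Axiom\,I} needs some element of $\G$ that does not preserve verticality. \ax{Axiom\,II}a) holds because $\Trivi\le\G$. \ax{Axiom\,II}b) needs that any element of $\G$ mapping vertical lines to vertical lines is orthochronous trivial. \ax{Axiom\,III} holds because $\IM$ is closed under $\G$. \ax{Axiom\,IV} follows from Proposition~\ref{prop:group}, since $\W_k=\G$ is a group. Thus everything reduces to choosing groups $\G$ satisfying three conditions: (a) $\Trivi<\G$; (b) elements of $\G$ preserving verticality are in $\Trivi$; and (c) no element maps a vertical line to a horizontal, infinite-slope line. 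Descending chains of groups then yield descending chains of models, because each model is the restriction of the previous one to a subgroup.

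For the groups, I exploit the non-Archimedean structure. Let $F\subseteq\Q$ be the finite elements (those bounded by some $n\in\mathbb N$), which form a convex subring. For the Galilean and Poincar\'e cases take
\[
\G_0\de\{T\in\Gali_{\MQ}\ (\text{resp. }\cPoii_{\MQ}) : \text{the relative velocity of }T\text{ is infinitesimal}\}.
\]
More generally, fix an infinitesimal $\varepsilon>0$ and let $\G_i$ be the set of those $T$ whose boost velocity $v$ satisfies $|v|\le \varepsilon^{i}\cdot\delta$ for some infinitesimal $\delta$, i.e. $v$ lies in the convex ideal $\varepsilon^i I$, where $I$ is the set of infinitesimals. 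These are nested ideals, strictly decreasing, all nonzero. Closure under composition must be checked. For Galilean maps the velocities add after rotation, and rotations preserve length, so the ideal property suffices. For Poincar\'e maps, relativistic velocity addition $(u+v)/(1+uv/\cc^2)$ stays in the ideal because the denominator is a unit; this is the main computational obstacle, and I would handle it through the entries of Lorentz matrices, requiring the off-diagonal time--space entries to lie in the ideal. Condition (c) holds since velocities are finite, condition (b) holds since zero velocity means trivial, and condition (a) holds since $\varepsilon^i\delta\neq0$. Properness in $\Gali$ and $\cPoii$ follows from the existence of non-infinitesimal velocities.

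For the Euclidean case I use $\cc$-Euclidean linear maps whose matrix $A$ satisfies $A\equiv \mathrm{diag}(\pm)$-trivial modulo the ideal, meaning the time--space mixing entries lie in $\varepsilon^iI$ and the time--time entry is near $1$. Such maps form a group, since products of matrices congruent to block-diagonal orthogonal matrices modulo an ideal of $F$ are again so, and they are orthochronous. This is exactly why $\W_{\M^E}\subset\cEucli$ holds even though $\cEucli$ is not a group. Condition (c) holds because the image of $\vve_\sct$ has time component near $1$, hence nonzero. Condition (b) holds because a map that preserves verticality and also preserves the $\cc$-scalar product forces the mixing entries to be $0$, making the map trivial. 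The main obstacle is checking the Euclidean group closure and (b) carefully, together with making the chains strictly decreasing, which I would witness by a rotation in the $t$--$x$ plane by a mixing angle whose sine lies in $\varepsilon^iI\setminus\varepsilon^{i+1}I$. Such an element exists over the field itself via rational parametrizations $\left(\frac{1-s^2}{1+s^2},\frac{2s}{1+s^2}\right)$ scaled appropriately by $\cc$, so no square roots are needed.
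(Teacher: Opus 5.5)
Your overall architecture matches the paper's. You build a model $\M(\G)$ from a group (cf.\ Theorem~\ref{thm:model} and Proposition~\ref{prop:egyszeru}), and you cut out subgroups by requiring the image of $\vve_{\sct}$ to be close to $\vve_{\sct}$ at a scale given by a convex, doubling-closed set of infinitesimals; your $\varepsilon^iI$ are clouds.

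\textbf{Main gap: the parameter $\cc$.} The theorem allows $\cc$ to be \emph{any} positive element of a non-Archimedean field, in particular an infinitesimal one. You define the Poincar\'e groups by absolute boost velocity, $v\in\varepsilon^iI$. Every element of $\cPoii_{\MQ}$ has $|v|<\cc$, so for infinitesimal $\cc$ your $\G_0$ is all of $\cPoii_{\MQ}$, and $\W_{\M^P_0}<\cPoii_{\MQ}$ fails. Your sentence ``properness follows from the existence of non-infinitesimal velocities'' is false there. If moreover $\cc<\varepsilon^n$ for every $n$ (possible for your arbitrarily fixed $\varepsilon$, e.g.\ in fields with two independent infinitesimal scales), then every $\G_i$ equals $\cPoii_{\MQ}$ and the chain collapses. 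Likewise, ``the denominator $1+uv/\cc^2$ is a unit'' is unjustified for infinitesimal $\cc$: $u,v$ comparable to $\cc$ make $uv/\cc^2$ non-infinitesimal, and the denominator can even be infinitesimal.

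The Euclidean case has the same scale problem. For a $\cc$-Euclidean (or $\cc$-Poincar\'e) linear map $A$, write $A\vve_{\sct}=(a,\vec b)$ and $\vec\beta\de\big((A\vve_{\scx})_t,(A\vve_{\scy})_t,(A\vve_{\scz})_t\big)$. Then $\sqed{\vec\beta}=\sqed{\vec b}/\cc^4$. Your ``congruence modulo an ideal of $F$'' argument needs all entries to be limited, so it only works if \emph{both} mixing blocks are constrained, or if $\cc$ is a unit of $F$. Otherwise, for infinitesimal $\cc$, $\vec\beta$ can be unlimited. Also, since $\sqed{\vec b}\le\cc^2$ holds automatically, a chain defined by $\vec b\in\varepsilon^iI$ can again fail to be strict.

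\textbf{How to repair it.} You can measure velocities in units of $\cc$, i.e.\ require $v/\cc\in\varepsilon^iI$. Alternatively, constrain both off-diagonal blocks; this forces all entries to be limited, and then your mod-ideal closure argument works for all three families and every $\cc$. The paper instead constructs everything for $\cc=1$ and conjugates by the space scaling $S_\cc$. That conjugation fixes $\Trivi_{\MQ}$ and $\Gali_{\MQ}$ and carries $\Eucli_{\MQ}$, $\Poii_{\MQ}$ onto $\cEucli_{\MQ}$, $\cPoii_{\MQ}$ (Proposition~\ref{prop:conjugation}). The paper also gets closure uniformly from $f[B_\C]=B_\C$ (Proposition~\ref{prop:ballpres}), rather than from velocity addition or matrix congruences.

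\textbf{Smaller gap: existence of Poincar\'e witnesses.} Over an ordered field without square roots, a $\cc$-Poincar\'e map with boost velocity $v$ exists only if $1-v^2/\cc^2$ is a square. So ``(a) holds since $\varepsilon^i\delta\neq0$'' does not follow, and your properness and strictness witnesses in the Poincar\'e case need an explicit hyperbolic parametrization. For example, take $A\vve_{\sct}=(r_t,\cc r_x,0,0)$ and $A\vve_{\scx}=(r_x/\cc,r_t,0,0)$ with $r_t=\frac{1+s^2}{1-s^2}$ and $r_x=\frac{2s}{1-s^2}$, just as you did for rotations. The paper does this in Lemma~\ref{lem:therearetransformations} and the proof of Theorem~\ref{thm:szendvics1}.
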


Note the use of \emph{subset} ($\subset$) -- as opposed to \emph{subgroup} ($<$) -- inclusion in the cases involving $\cEucli_{\MQ}$  (and likewise below); this is because, as observed on page~\pageref{subset-not-subgroup}, $\cEucli_{\MQ}$ is not a group.

\begin{thm}
\label{thm:groupmodel}
Let $\MQ$ be an ordered field and let $\G$ be a group for which
either
\begin{align*}
  \Trivi_{\MQ} &< \G \subset \cEucli_{\MQ}, \text{ or}\\
  \Trivi_{\MQ} &< \G \le       \Gali_{\MQ}, \text{ or}\\
  \Trivi_{\MQ} &< \G \le       \cPoii_{\MQ}.
\end{align*}
  Then there is a model \M of \BAX over $\MQ$ such that $\W_{\M}=\G$.
\QED
\end{thm}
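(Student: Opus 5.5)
Given such a group $\G$, I will build the model canonically from $\G$ itself. Take $\Ev\de\Q^4$, $\IOb\de\G$, and the coordinatization $\Co(g,e,\vvp)\defiff g(e)=\vvp$; this makes $\Co_g=g$ a bijection, so \ax{BA\,2} holds. The worldview transformations are then $f_{kh}=k\circ h^{-1}$, and $\W=\W_k=\G$ for every $k$ because $\G$ is a group containing the identity. In particular \ax{Axiom\,IV} holds by Proposition~\ref{prop:group}(v). For inertial motions I take $\IM\de\{g^{-1}[\ell] : g\in\G,\ \ell \text{ a vertical line}\}$, with $\Pres(i,e)\defiff e\in i$. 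Thus $\wl(i)=i$ and $\wl_k(i)=k[i]$. \ax{BA\,1} is the hypothesis on $\MQ$. \ax{BA\,3} holds because $k^{-1}[\ell]\in\IM$. \ax{BA\,4} requires that every $k\circ g^{-1}\in\G$ maps vertical lines to lines of finite slope. Each element of $\G$ is affine, so it maps lines to lines. The image of a vertical line cannot be horizontal: in the Galilean case time is preserved up to sign, and in the Poincar\'e case a vertical direction is timelike, so its image is timelike. In the $\cEucli$ case it also holds, since orthochronous means $(L\vve_\sct)_t>0$ and the image of the vertical line is in direction $L\vve_\sct$.

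\ax{Axiom\,III} is immediate. Given $i=g^{-1}[\ell]$, $k$, and $\varphi\in\Trivi\subseteq\G$, set $i'\de k^{-1}\varphi k g^{-1}[\ell]$. Then $k^{-1}\varphi k g^{-1}\in\G$ gives $i'\in\IM$, and $\wl_k(i')=\varphi[\wl_k(i)]$. \ax{Axiom\,II}(a) follows similarly: given $k$ and $\varphi\in\Trivi$, take $h\de\varphi^{-1}\circ k\in\G$, so that $f_{kh}=\varphi$.

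For \ax{Axiom\,I} and \ax{Axiom\,II}(b), I first characterize ``$h$ at rest \wrt $k$''. The motions stationary \wrt $h$ are exactly those $i=h^{-1}[\ell]$ with $\ell$ vertical. So $h$ is at rest \wrt $k$ iff $f_{kh}=kh^{-1}$ maps every vertical line to a vertical line. I then have to show that an element $T\in\G$ mapping vertical lines to vertical lines lies in $\Trivi$. Write $T$ as an affine map with linear part $L$; $L$ preserves the vertical direction, so $L\vve_\sct=\lambda\vve_\sct$.

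\begin{itemize}
\item In the Galilean case, $|\lambda|=1$ and orthochrony give $\lambda=1$. The spatial part is orthogonal by Proposition~\ref{prop:Gal}, so $L$ is a Euclidean isometry and $T\in\Trivi$.
\item In the $\cc$-Poincar\'e case, $\cc^2\lambda^2=\cc^2$ gives $\lambda=1$. Then $L\vve_\scx\mdot_\cc\vve_\sct=0$ forces the spatial images to have zero time component, and Proposition~\ref{prop:Poi} makes them orthonormal.
\item The $\cc$-Euclidean case is the same argument using Proposition~\ref{prop:Eucl}.
\end{itemize}

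This proves \ax{Axiom\,II}(b). For \ax{Axiom\,I}, pick $g\in\G\setminus\Trivi$, which exists by the strict inclusion. By the characterization just proved, $g$ does not preserve verticality, so observer $g$ is not at rest \wrt the identity observer.

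The main obstacle is that last characterization, the claim that verticality-preserving members of $\G$ are orthochronous trivial transformations. In particular it needs the small linear-algebra argument in each of the three ambient groups, including the non-group $\cEucli$ case, where only membership of individual elements matters. Finally, although the statement only asks for the assertion $\W_\M=\G$, I note that it was already obtained at the start.
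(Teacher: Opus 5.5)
Your proof is correct and is essentially the paper's own argument. Because $\Trivi_{\MQ}\le\G$ contains all translations, your $\IM=\{g^{-1}[\ell]\}$ coincides with the paper's $\{g[\taxis] \colon g\in\G\}$, so you build the same model $\M(\G)$; your verticality argument is exactly Lemma~\ref{lem:trivi}, which the paper combines with the general criterion of Theorem~\ref{thm:model} instead of checking the axioms inline. One small slip in \ax{Axiom\,I}: whether observer $g$ is at rest \wrt $\Id$ is governed by $f_{\Id g}=g^{-1}$ rather than $g$, but $g^{-1}\in\G\setminus\Trivi_{\MQ}$ as well, so your conclusion stands.
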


\medskip
We also show that Borisov's Theorem 
remains true if we remove \ax{Axiom\,III} from \BAX: 
\begin{thm}\label{thm:B0}
Assume $\BAX\setminus\{\ax{Axiom\,III}\}$ and that $\MQ$ is the
ordered field $\R$ of reals.  Then there are two possibilities: either
\begin{align*}
  \W&=\Gali_{\R}, \text{ or else} \\
  \W&=\cPoii_{\R}\ \text{ for some }\cc>0.
\end{align*}
\QED
\end{thm}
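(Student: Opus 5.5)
The plan is to go through the characterization of worldview transformation groups (Theorem~\ref{thm:wtrf}), making sure it never uses \ax{Axiom\,III}, and then to use the special properties of $\R$ to eliminate every case except the full groups $\Gali_\R$ and $\cPoii_\R$. By Proposition~\ref{prop:group}, \ax{BA\,2} and \ax{Axiom\,IV} make $\W$ a group. \ax{Axiom\,II}\,a) gives $\Trivi_\R\subseteq\W$, and \ax{Axiom\,I} gives some $f\in\W$ that does not map all vertical lines to vertical lines. Hence $\Trivi_\R<\W$.

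\textbf{Step 1 (the main obstacle).} Show, without \ax{Axiom\,III}, that every $f\in\W$ is affine and that $\W$ lies in one of $\cEucli_\R$, $\Gali_\R$ or $\cPoii_\R$ for a single $\cc>0$.
\begin{itemize}
\item The only use of \ax{Axiom\,III} would be to know that the family of worldlines seen by one observer is closed under $\Trivi_\R$. I will replace it by the following family:
\[
\mathcal{L}\de\{\,g[\ell] \colon g\in\W,\ \ell \text{ vertical}\,\}.
\]
\item By \ax{BA\,3}, \eqref{eq:wl} and \ax{BA\,4}, every member of $\mathcal{L}$ is a line of finite slope, and it is a worldline according to every observer.
\item Because $\W$ is a group containing $\Trivi_\R$, the family $\mathcal{L}$ is invariant under $\Trivi_\R$. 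So $\mathcal{L}$ already provides what \ax{Axiom\,III} was used for.
\item Starting from one tilted line $f[\ell]$ and applying spatial rotations, translations and further elements of $\W$, I aim to show that $\mathcal{L}$ contains enough lines through each point, in enough directions, to apply a fundamental theorem of affine geometry. This would make each $f\in\W$ affine.
\item From affinity together with invariance under conjugation by $\Trivi_\R$, the standard linear-algebra argument of Theorem~\ref{thm:wtrf} should go through unchanged. It yields an invariant quadratic form on $\R^4$ that is either $\cc$-Euclidean, Galilean (degenerate) or $\cc$-Minkowskian.
\end{itemize}
I expect the difficulty to lie in making $\mathcal{L}$ rich enough, since only the speeds actually realised occur. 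The first thing I would try is composing boosts in different spatial directions, which produces new speeds.

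\textbf{Step 2 (Euclidean case excluded over $\R$).} Suppose $\W\subseteq\cEucli_\R$ with $\Trivi_\R<\W$. Then $\W$ contains a linear part acting as a $\cc$-rotation by some nonzero angle $\theta$ in a time--space plane.
\begin{itemize}
\item Since $\W$ is a group, it contains all powers of this rotation.
\item Since $\R$ is Archimedean, some power $n\theta$ falls in $(\pi/2,3\pi/2)$ modulo $2\pi$.
\item That power is not orthochronous, contradicting $\W\subseteq\cEucli_\R$.
\end{itemize}
This is exactly the point where non-Archimedean fields escape, because infinitesimal angles are closed under composition.

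\textbf{Step 3 (Galilean and Poincar\'e cases are full).} Suppose $\Trivi_\R<\W\le\Gali_\R$ or $\Trivi_\R<\W\le\cPoii_\R$. I would show the inclusion is an equality.
\begin{itemize}
\item The set $V$ of speeds of boosts in $\W$ is nonzero. Using spatial rotations from $\Trivi_\R$, it is closed under the relevant velocity addition: ordinary addition in the Galilean case, relativistic addition with limit $\cc$ in the Poincar\'e case. It is also closed under negation.
\item Composing two boosts of equal speed $v$ at an angle, and letting the angle vary, yields every resultant speed in the interval $[0,2v]$, respectively in the relativistic analogue of this interval. This uses square roots and the intermediate value property of $\R$, both computed explicitly.
\item Iterating this and using the Archimedean property, $V$ is every speed in $\R$, respectively every speed in $[0,\cc)$.
\item Every element of $\Gali_\R$ or $\cPoii_\R$ is a trivial transformation composed with a boost. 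Hence $\W=\Gali_\R$ or $\W=\cPoii_\R$.
\end{itemize}
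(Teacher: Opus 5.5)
\textbf{There is a genuine gap in Step~1.} Step~1 carries all the weight of the proof, and it is not carried out. You do not show that $\mathcal{L}$ is rich enough for a fundamental theorem of affine geometry; you concede this yourself. Moreover, the ``standard linear-algebra argument of Theorem~\ref{thm:wtrf}'' that is supposed to produce an invariant $\cc$-Euclidean, Galilean or $\cc$-Minkowskian form is not in the paper. Theorem~\ref{thm:wtrf} only establishes three structural properties of $\W$: it is a group properly containing $\Trivi_{\R}$, every image $f[\taxis]$ is a line of finite slope, and vertical-preserving elements are trivial. Deriving affinity and the three-way classification from these properties is the whole substance of Borisov's Theorem. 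So as written you are re-proving that theorem and leaving its core unproved. Steps~2 and~3 are reasonable sketches over $\R$; together they amount to a direct proof of Corollary~\ref{cor-1}. However, they only become relevant once Step~1 is in place.

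\textbf{Your own observation already finishes the proof.} Every vertical line has the form $\varphi[\taxis]$ with $\varphi\in\Trivi_{\R}\subseteq\W$, so your family is $\mathcal{L}=\{g[\taxis]\colon g\in\W\}$. This is exactly the set of inertial motions of the model $\M(\W)$ from Section~\ref{sec:constructions}. By Theorem~\ref{thm:wtrf}, $\W$ satisfies hypotheses (i)--(iii) of Theorem~\ref{thm:model}. Hence $\M(\W)$ is a model of all of \BAX, and $\W_{\M(\W)}=\W$. This includes \ax{Axiom\,III}, which holds there for precisely the reason you give: $\W$ is a group containing $\Trivi_{\R}$. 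Borisov's Theorem is a cited result you may use as a black box, and applying it to $\M(\W)$ gives $\W=\Gali_{\R}$ or $\W=\cPoii_{\R}$. This is the paper's proof.

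An equivalent route, closer to your formulation, is to keep the original model but shrink $\IM$ to the motions whose worldlines (according to any, equivalently every, observer) lie in $\mathcal{L}$. One checks that \ax{BA\,3}, \ax{Axiom\,I}, \ax{Axiom\,II} and the ``at rest'' relation are unaffected, that \ax{Axiom\,III} now holds, and that $\W$ is unchanged. Borisov's Theorem then applies directly, and Steps~1--3 become unnecessary.
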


\begin{rem}\label{rem2}
  If we don't require $\MQ$ to be the field $\R$ of reals, 
then using results in \cite{MSSnop}
the
  following can be proven:
 \label{thm-otherpaper}
  Assume $\BAX\setminus\{\ax{Axiom\,III}\}$ and that every positive
  number has a square root in $\MQ$.  Then either
  \begin{align*}
    \Trivi_{\MQ} &< \W\subset\cEucli_{\MQ}, \text{ or}\\
    \Trivi_{\MQ} &< \W\leq \Gali_{\MQ}, \text{ or}\\
    \Trivi_{\MQ} &< \W\leq\cPoii_{\MQ}\ \text{ for some}\ \ \cc>0.
  \end{align*}
  Moreover, the statement remains true even if we replace \ax{BA\,4}
  with the more general assumption: ``For every inertial motion
  $i\in\IM$ and every inertial observer $k\in\IOb$, $\wl_k(i)$ is a
  line''. It
  is an open question if they remain true if we omit the assumption
  that positive numbers have square roots.
\end{rem}

The next result, Corollary~\ref{cor-1}, is a direct consequence of
Borisov's Theorem and \ref{thm:groupmodel}.

\begin{cor}\label{cor-1} 
  Let $\MQ$ be the ordered field $\R$ of reals. Then there is
  \emph{no} group $\G$ for which
  \begin{align*}
    \Trivi_{\R} &< \G \subset \cEucli_{\R}, \text{ or}\\
    \Trivi_{\R} &< \G <\Gali_{\R}, \text{ or}\\
    \Trivi_{\R} &< \G <\cPoii_{\R}.
  \end{align*}
  \QED
\end{cor}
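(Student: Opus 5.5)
The plan is to argue by contradiction, combining Borisov's Theorem with Theorem~\ref{thm:groupmodel}. Suppose $\G$ is a group satisfying one of the three displayed conditions over $\R$. Since $\R$ is an ordered field, Theorem~\ref{thm:groupmodel} applies in each case; note that $\G<\Gali_\R$ implies $\G\le\Gali_\R$, and similarly for $\cPoii_\R$. It yields a model $\M$ of \BAX over $\R$ with $\W_\M=\G$. Borisov's Theorem then forces either $\G=\Gali_\R$ or $\G={}_{\text{\normalsize $c'$}}\MATHSF{Poi}^{\uparrow}_\R$ for some $c'>0$. The work is to show that each of these alternatives is incompatible with the assumed condition on $\G$.

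In the Galilean case ($\G<\Gali_\R$), the alternative $\G=\Gali_\R$ contradicts properness. For the other alternative we need $\Poii$-type groups with constant $c'$ not to be contained in $\Gali_\R$. This is immediate: a $c'$-Lorentz boost does not preserve $|p_t|$ on, e.g., $\vve_\scx$, since its time coordinate there is nonzero.

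In the Poincar\'e case ($\G<\cPoii_\R$), the alternative $\G=\Gali_\R$ is impossible because a Galilean boost $(t,\vvp_s)\mapsto(t,\vvp_s+t\vvv)$ with $\vvv\neq 0$ fails to preserve $\|\cdot\|^2_{\cc,\mu}$ (Proposition~\ref{prop:Poi}(iii)). If $\G={}_{c'}\Poii_\R\subseteq\cPoii_\R$, I would show $c'=\cc$, which contradicts properness. The argument applies a $c'$-boost to $\vve_\sct$ to get $(\gamma,\gamma v,0,0)$ with $\gamma^2(1-v^2/c'^2)=1$. Preservation of the $\cc$-Minkowski length gives $\gamma^2(\cc^2-v^2)=\cc^2$. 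Comparing the two equations for some $v\ne0$ yields $\cc=c'$.

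In the Euclidean case ($\G\subset\cEucli_\R$), neither $\Gali_\R$ nor any ${}_{c'}\Poii_\R$ is a subset of $\cEucli_\R$. A Galilean boost sends $\vve_\sct$ to $(1,\vvv)$, whose squared $\cc$-Euclidean length is $\cc^2+\sqed{\vvv}\ne\cc^2$. A $c'$-boost sends $\vve_\sct$ to $(\gamma,\gamma v,0,0)$, which would need $\gamma^2(\cc^2+v^2)=\cc^2$, forcing $\gamma<1$. But $\gamma\ge1$ for a boost, a contradiction.

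The only mildly delicate step is this bookkeeping: making sure the constant in Borisov's conclusion may differ from the given $\cc$, and checking that these non-inclusions hold in every case. Everything else is a direct invocation of the two cited theorems.
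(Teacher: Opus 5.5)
Your proposal is correct and takes the paper's route: the paper states that Corollary~\ref{cor-1} follows directly from Theorem~\ref{thm:groupmodel}, which gives a model $\M$ of \BAX over $\R$ with $\W_{\M}=\G$, together with Borisov's Theorem. Your additions are sound. In particular, you allow Borisov's constant $c'$ to differ from \cc and then check case by case that neither $\Gali_\R$ nor any $c'$-Poincar\'e group can sit properly inside the assumed target; the paper leaves these non-inclusions to the reader.
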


Corollary~\ref{cor-1} fails for more general choices of \MQ because
(by Theorem~\ref{thm:non-arch}) for all non-Archimedean fields $\MQ$,
there are strictly descending countably infinite chains of subgroups
such that
\begin{align*}
  \Trivi_{\MQ}&<\ldots < \G^E_i <\ldots<\G^E_1<\G^E_0 \subset \cEucli_{\MQ},\\ 
  \Trivi_{\MQ}&<\ldots < \G^G_i <\ldots<\G_1^G<\G_0^G <\Gali_{\MQ}, \text{ and}\\
  \Trivi_{\MQ}&<\ldots < \G^P_i <\ldots<\G^P_1<\G^P_0<\cPoii_{\MQ}.
\end{align*}

\section{Worldview transformation groups and model constructions}
\label{sec:constructions}
${}$\\

\noindent
The symmetric group over $\Q^4$ is denoted by $\Sym(\Q^4)$. The \semph{time-axis} is defined as 
\begin{equation*}
 \taxis  \de \big\{ (t,0,0,0)\in \Q^4 \colon  t \in  \Q\big\}.  
\end{equation*}

\begin{thm}\label{thm:wtrf}
Assume $\BAX\setminus\{\ax{Axiom\, III}\}$. Then $\W$ is a group satisfying:
\begin{enumerate}
\renewcommand{\theenumi}{\roman{enumi}}
  \item\label{wtrf1} $\Trivi_{\MQ}<\W\le \Sym(\Q^4)$;
  \item\label{wtrf2} for every $f\in \W$, $f[\taxis]$ is a line of finite slope;
  \item\label{wtrf3} if $f\in \W$ takes vertical lines to vertical lines, then $f\in\Trivi_{\MQ}$.
\end{enumerate}
\end{thm}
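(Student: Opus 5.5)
By Proposition~\ref{prop:group}, \ax{BA\,2} together with \ax{Axiom\,IV} makes $\W$ a group with $\W_k=\W$ for every $k\in\IOb$. The inclusion $\W\le\Sym(\Q^4)$ is immediate, because each $f_{kh}$ is a composition of the bijections $\Co_k$ and $\Co_h^{-1}$ given by \ax{BA\,2}. The plan is to fix an arbitrary observer $k$ and read each claim off $\W_k$.

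For (\ref{wtrf1}), \ax{Axiom\,II}\,a) says that every $\varphi\in\Trivi_{\MQ}$ equals $f_{kh}$ for some $h$. Hence $\Trivi_{\MQ}\subseteq\W_k=\W$, and $\Trivi_{\MQ}$ is a subgroup because it is a group.

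Properness comes from \ax{Axiom\,I}. Take $k,h$ with $h$ not at rest \wrt $k$. Then some inertial motion $i$ is stationary \wrt $h$ but not \wrt $k$. Thus $\wl_h(i)$ is vertical, while by \eqref{eq:wl} the set $f_{kh}[\wl_h(i)]=\wl_k(i)$ is not vertical. Every element of $\Trivi_{\MQ}$ maps vertical lines to vertical lines, so $f_{kh}\notin\Trivi_{\MQ}$.

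For (\ref{wtrf2}), let $f\in\W=\W_k$, say $f=f_{kh}$. Apply \ax{BA\,3} to the vertical line $\taxis$ and the observer $h$: there is $i\in\IM$ with $\wl_h(i)=\taxis$. Then $f[\taxis]=\wl_k(i)$ by \eqref{eq:wl}, and this is a line of finite slope by \ax{BA\,4}.

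For (\ref{wtrf3}), let $f=f_{kh}\in\W$ map vertical lines to vertical lines. The aim is to show that $h$ is at rest \wrt $k$, since \ax{Axiom\,II}\,b) then gives $f\in\Trivi_{\MQ}$. So let $i$ be stationary \wrt $h$. By \ax{BA\,4}, $\wl_h(i)$ is a line, and it is vertical by assumption. Hence $\wl_k(i)=f[\wl_h(i)]$ is vertical, so $i$ is stationary \wrt $k$.

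The only delicate point is whether ``takes vertical lines to vertical lines'' means that the image of every vertical line is itself a vertical line, which is exactly what is used here. I expect no step to be a real obstacle. The care needed is mainly in quoting \eqref{eq:wl} correctly and in using $\W_k=\W$ to realize an arbitrary $f\in\W$ as $f_{kh}$ for the fixed $k$.
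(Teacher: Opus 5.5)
Your proposal is correct and follows essentially the same route as the paper's proof. Like the paper, you use Proposition~\ref{prop:group} for the group structure, \ax{Axiom\,II}(a) together with \ax{Axiom\,I} and \eqref{eq:wl} for (i), \ax{BA\,3}, \ax{BA\,4} and \eqref{eq:wl} for (ii), and for (iii) you show that $h$ is at rest \wrt $k$ before invoking \ax{Axiom\,II}(b). The only difference is harmless: you fix $k$ and appeal to $\W_k=\W$, which is unnecessary because $\W=\{f_{kh}\colon k,h\in\IOb\}$ by definition.
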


\begin{proof}
\ref{wtrf1}  
We know from Proposition~\ref{prop:group} that $\W\le \Sym(\Q^4)$, and
\ax{Axiom\,II(a)} tells us that $\Trivi_{\MQ}\le \W$.  So it is enough to show that there is
some $f\in\W\setminus\Triv_{\MQ}$. By \ax{Axiom\,I}, there are $k,h\in\IOb$
such that $h$ is not at rest \wrt $k$, \ie there is $i\in\IM$ such
that $\wl_h(i)$ is a vertical line but $\wl_k(i)$ is not a vertical
line.  Since, by equation \eqref{eq:wl}, $f_{kh}$ takes $\wl_h(i)$ to
$\wl_k(i)$ and trivial transformations take vertical lines to vertical
ones, we have $f_{kh}\not\in\Triv_{\MQ}$, as required.

\ref{wtrf2}
Let $f_{kh}\in\W$. By \ax{BA\,3}, there is
$i\in\IM$ such that $\wl_h(i)=\taxis$. By \ax{BA\,4}, $\wl_k(i)$ is a
line of finite slope.  Now, by equation \eqref{eq:wl},
$f_{kh}[\taxis]=f_{kh}\left[\wl_h(i)\right]=\wl_k(i)$, as claimed.

\ref{wtrf3}
Finally, suppose $f_{kh}\in\W$ takes
vertical lines to vertical lines.  We will prove that $h$ is at rest
\wrt $k$.  Let $i$ be an arbitrary inertial motion which is stationary
according to $h$, \ie $\wl_h(i)$ is a vertical line. Since $f_{kh}$
maps $\wl_h(i)$ to $\wl_k(i)$, $\wl_k(i)$ is also a vertical
line. Thus $h$ is at rest \wrt $k$. Hence, by \ax{Axiom\,II(b)},
$f_{kh}\in\Trivi_{\MQ}$ as required.
\end{proof}

For every $\MQ=(\Q,+,\cdot,\le)$ and $\G\subseteq \Sym(\Q^4)$, 
we will construct a model $\M(\G)$ as follows:
\begin{equation*}  
  \Ev\de\Q^4,\quad 
  \IM\de\{g[\taxis] \colon g\in\G\},\quad 
  \IOb\de\G,
\end{equation*}
\begin{equation*}
\Co(k,e,\vvp)\defiff k(e)=\vvp,\qquad \Pres(i,e)\defiff e \in i, 
\end{equation*}
\begin{equation*}
  \mathfrak{M}(\G)\de\langle \Ev,\IM,\IOb,\Q,+,\cdot,\le,\Co,\Pres\rangle.
\end{equation*}

\begin{prop}
\label{prop:egyszeru}
Suppose $\G' \subset \G \subseteq \Sym(\Q^4)$.
Then $\M(\G')<\M(\G)$.
\end{prop}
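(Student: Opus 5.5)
We need to show that $\M(\G')$ is a proper submodel of $\M(\G)$. A submodel here means that each sort universe of $\M(\G')$ is a subset of the corresponding universe of $\M(\G)$, and that the relations and operations of $\M(\G')$ are the restrictions of those of $\M(\G)$. Properness means that at least one universe is strictly smaller. The plan is to compare the two constructions sort by sort.

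First, the sorts $\Ev=\Q^4$ and $\Q$ are identical in both models. The field operations $+,\cdot$ and the ordering $\le$ are the same structure $\MQ$, so they trivially agree. For the inertial observers we have $\IOb'=\G'\subseteq\G=\IOb$. For the inertial motions we have $\IM'=\{g[\taxis]\colon g\in\G'\}\subseteq\{g[\taxis]\colon g\in\G\}=\IM$, since every $g\in\G'$ also lies in $\G$.

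Next we check that the relations restrict correctly. For $k\in\G'$, $e\in\Q^4$ and $\vvp\in\Q^4$, the relation $\Co'(k,e,\vvp)$ holds iff $k(e)=\vvp$, which is exactly the condition for $\Co(k,e,\vvp)$. Hence $\Co'=\Co\cap(\IOb'\times\Ev\times\Q^4)$. Similarly, for $i\in\IM'$ and $e\in\Ev$, $\Pres'(i,e)$ holds iff $e\in i$, which is exactly $\Pres(i,e)$. So $\Pres'$ is the restriction of $\Pres$ to $\IM'\times\Ev$. Both relations are defined by the same formula, applied to elements that are literally the same set-theoretic objects, so the restriction claims are immediate.

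Finally, properness follows because $\G'\subset\G$ is a strict inclusion. Pick any $g\in\G\setminus\G'$; then $g\in\IOb\setminus\IOb'$, so the $\IOb$ sort of $\M(\G')$ is strictly smaller. Note that $\IM'$ might coincide with $\IM$, for instance if distinct maps send $\taxis$ to the same line. This does not matter, because properness needs only one strictly smaller universe. There is no genuine obstacle; the only point needing care is to be explicit that "submodel" is meant in the many-sorted sense, so that strict inclusion in a single sort suffices.
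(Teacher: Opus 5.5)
Your proof is correct and follows the paper's approach. The paper only says the claim follows easily from the definitions of $\M(\G)$ and $\M(\G')$, and your sort-by-sort check spells that out: identical $\Ev$ and $\Q$ sorts, $\IOb'\subseteq\IOb$, $\IM'\subseteq\IM$, $\Co$ and $\Pres$ restricting correctly, and properness witnessed by any $g\in\G\setminus\G'$ lying in $\IOb$ but not in $\IOb'$.
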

\begin{proof} 
The proposition easily follows from the definitions of $\M(\G)$ and
$\M(\G')$.
\end{proof}

\begin{thm}\label{thm:model}
  Suppose $\MQ$ is an ordered field and $\G$ is a group for which:
\begin{enumerate}
\renewcommand{\theenumi}{\roman{enumi}}
  \item\label{awtrf1} $\Trivi_{\MQ}<\G\le \Sym(\Q^4)$;
  \item\label{awtrf2} For every $g\in \G$, $g[\taxis]$ is a line of
    finite slope;
  \item\label{awtrf3} If $g\in \G$ takes vertical lines to vertical
    lines, then $g\in\Trivi_{\MQ}$.
  \end{enumerate}
  Then $\M(\G)$ satisfies $\BAX$. Furthermore, $\W_{\M(\G)}=\G$.
\end{thm}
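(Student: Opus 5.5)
In $\M(\G)$ each observer is a map $k\in\G$, its coordinatization is $\Co_k=k$, and $\wl(i)=i$ for $i\in\IM$. Hence $f_{kh}=k\circ h^{-1}$ and $\wl_k(i)=k[i]$. I will first compute the worldview transformations and then check the axioms one by one.

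\textbf{Worldviews.} Since $\G$ is a group, $\W_{\M(\G)}=\{k\circ h^{-1}\colon k,h\in\G\}=\G$. Moreover, for a fixed $k$, $\W_k=\{k\circ h^{-1}\colon h\in\G\}=k\circ\G=\G$. So every worldview equals $\G$, which is a group. By Proposition~\ref{prop:group}(v), \ax{Axiom\,IV} holds, and $\W_{\M(\G)}=\G$ as claimed.

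\textbf{Basic assumptions.} \ax{BA\,1} holds by hypothesis. \ax{BA\,2} holds because each $k\in\G\le\Sym(\Q^4)$ is a bijection. For \ax{BA\,3}, let $\ell$ be a vertical line and $k$ an observer. Some translation $\tau\in\Trivi_{\MQ}$ satisfies $\tau[\taxis]=\ell$. Put $g=k^{-1}\circ\tau\in\G$; then $i=g[\taxis]\in\IM$ and $\wl_k(i)=k[g[\taxis]]=\ell$. For \ax{BA\,4}, write $i=g[\taxis]$; then $\wl_k(i)=(k\circ g)[\taxis]$ with $k\circ g\in\G$, and this is a line of finite slope by hypothesis~\ref{awtrf2}.

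\textbf{Axioms I–III.} The key step is an observation: $h$ is at rest \wrt $k$ if and only if $f_{kh}$ takes vertical lines to vertical lines. Indeed, every vertical line is $\wl_h(i)$ for some $i$ (by \ax{BA\,3} just shown), and $f_{kh}[\wl_h(i)]=\wl_k(i)$. With this observation:
\begin{itemize}
\item \ax{Axiom\,II(b)} follows from hypothesis~\ref{awtrf3}.
\item \ax{Axiom\,I} holds because, by hypothesis~\ref{awtrf1}, there is some $g\in\G\setminus\Trivi_{\MQ}$. Take $k=g$ and $h=\Id$; then $f_{kh}=g$. By hypothesis~\ref{awtrf3}, $g$ does not preserve verticality, so $h$ is not at rest \wrt $k$.
\item \ax{Axiom\,II(a)} holds because, given $\varphi\in\Trivi_{\MQ}$, the observer $h=\varphi^{-1}\circ k\in\G$ satisfies $f_{kh}=k\circ k^{-1}\circ\varphi=\varphi$.
\item \ax{Axiom\,III} holds because $\varphi[\wl_k(i)]=(\varphi\circ k\circ g)[\taxis]$. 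Let $g'=k^{-1}\circ\varphi\circ k\circ g\in\G$ and $i'=g'[\taxis]$. Then $\wl_k(i')=\varphi[\wl_k(i)]$.
\end{itemize}

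The only subtle points are that $\Trivi_{\MQ}$ actually lies inside $\G$, which hypothesis~\ref{awtrf1} provides, and the at-rest characterization. That characterization needs \ax{BA\,3} to be proved first, so the order of the steps above matters.
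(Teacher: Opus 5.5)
Your proof is correct and is essentially the paper's argument. You use the same witnesses for \ax{BA\,3}, \ax{BA\,4}, \ax{Axiom\,I} ($k=g$, $h=\Id$), \ax{Axiom\,II(a)} ($h=\varphi^{-1}\circ k$) and \ax{Axiom\,III} ($k^{-1}\circ\varphi\circ k\circ g$). The only differences are cosmetic: you obtain \ax{Axiom\,IV} from Proposition~\ref{prop:group}(v) (legitimate, since \ax{BA\,2} holds trivially in $\M(\G)$) rather than from the explicit witness $h'=k'\circ k^{-1}\circ h$, and you state once the equivalence ``$h$ is at rest \wrt $k$ iff $f_{kh}$ preserves verticality'', which the paper proves inline for \ax{Axiom\,I} and \ax{Axiom\,II(b)}.
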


\begin{rem} Assume that $\MQ$ is an ordered field in which every positive
number has a square root and that $\G$ is a
group for which the conditions 
of Theorem~\ref{thm:model} hold. Then by Theorem~\ref{thm:model} and
and Remark~\ref{thm-otherpaper}, either
\begin{align*}
  \Trivi_{\MQ}&<\G\subset\cEucli_{\MQ},\text{ or}\\
  \Trivi_{\MQ}&<\G\leq \Gali_{\MQ},\text{ or}\\
  \Trivi_{\MQ}&<\G\leq\cPoii_{\MQ}\ \text{ for some } \cc>0.
\end{align*}
\QED
\end{rem}

\begin{proof}[Proof of Thm.~\ref{thm:model}]
It is easy to see that $\Co_k =k$ and $\wl(i)=i$ for every $k\in\IOb$
and $i\in\IM$, whence $f_{kh}=k\circ h^{-1}$ and $\wl_k(i)=k[i]$.
 
Axioms \ax{BA\,1} and \ax{BA\,2} hold by construction.
To prove \ax{BA\,3}, let $\ell$ be a vertical line and
$k\in\IOb=\G$. Then, there is $f\in\Trivi_{\MQ}$ such that
$f[\taxis]=\ell$.  It follows that $k^{-1}\circ f\in\G$ since
$\Trivi_{\MQ}\subseteq \G$ and $\G$ is a group. Let $i = (k^{-1}\circ
f)[\taxis]\in\IM$. Then $\wl_k(i) = k[i]=f[\taxis]=\ell$. Thus
\ax{BA\,3} holds.

To prove \ax{BA\, 4}, suppose $i\in\IM$ and $k\in\IOb=\G$. We have to
prove that $\wl_k(i)$ is a line of finite slope.  By the definition of
$\IM$ in $\M(\G)$, $i=g[\taxis]$ for some $g\in\G$.  Since $\G$ is a
group, $k\circ g\in\G$.  But now, by assumption \ref{awtrf2},
$\wl_k(i)=k[i]=(k\circ g)[\taxis]$ is a line of finite slope.  Thus
\ax{BA\,4} holds.

Let us now prove \ax{Axiom\,I}. By assumption \ref{awtrf1}, we can fix
some $g \in \G\setminus\Trivi_{\MQ}$.  By assumption \ref{awtrf3},
there is a vertical line $\ell$ such that $g[\ell]$ is not vertical.
Let us fix such an $\ell$. Let $\Id$ denote the identity
transformation of $\Q^4$. Then $\Id, g \in\IOb=\G$.  By the already
proven \ax{BA\,3}, there is $i\in\IM$ such that
$\wl_{\Id}(i)=\ell$. So, by equation \eqref{eq:wl},
$\wl_g(i)=f_{g\Id}\big[\wl_{\Id}(i)\big]=g[\ell]$.  Therefore, $i$ is
stationary \wrt inertial observer $\Id$ but is not stationary \wrt
inertial observer $g$, which proves \ax{Axiom\,I}.

To prove \ax{Axiom\,II(a)}, let $k\in\IOb$ and
$\varphi\in\Trivi_{\MQ}$. We have to prove that there is $h\in\IOb$
such that $f_{kh}=\varphi$.  Let $h = (k^{-1}\circ\varphi)^{-1}$.  Then
$h\in\G=\IOb$ since $\G$ is a group and $f_{kh}=k\circ
h^{-1}=\varphi$, which is what we wanted to prove.

To prove \ax{Axiom\,II(b)}, let $k,h\in\IOb$ be such that $h$ is at
rest \wrt $k$.  We have to prove that $f_{kh}\in\Trivi_{\MQ}$.  Let
$\ell$ be an arbitrary vertical line.  By the already proven
\ax{BA\,3}, there is $i\in\IM$ such that $\wl_h(i)=\ell$.  Since
$\ell$ is vertical, $i$ is stationary \wrt $h$.  Thus $i$ must be
stationary \wrt $k$ because $h$ is at rest \wrt $k$.  Hence $\wl_k(i)$
is also a vertical line. By equation \eqref{eq:wl}, $f_{kh}$ takes
$\wl_h(i)=\ell$ to vertical line $\wl_k(i)$.  Since $\ell$ was an
arbitrary vertical line, $f_{kh}$ takes vertical lines to vertical
ones.  Since $\G$ is a group, we have $f_{kh}=k\circ h^{-1}\in\G$.
Hence, by assumption \ref{awtrf3}, we have that
$f_{kh}\in\Trivi_{\MQ}$, which is what we wanted to prove.

To prove \ax{Axiom\,III}, let $i\in\IM$, $k\in\IOb$, and
$\varphi\in\Trivi_{\MQ}$.  We have to prove that there is an inertial
motion $i'\in\IM$ such that $\wl_k(i')=\varphi\big[\wl_k(i)\big]$. Let
$g\in\G=\IOb$ such that $i=g[\taxis]$. Then $k^{-1}\circ\varphi\circ
k\circ g\in \G$ because $\G$ is a group containing $\Trivi_{\MQ}$.
Let $i' = (k^{-1}\circ \varphi\circ k \circ g)[\taxis]$. Then $i'\in
\IM$ by the definition of $\IM$ in $\M(\G)$ and
\[
   k[i'] = k\big[(k^{-1}\circ\varphi\circ k\circ g)[\taxis]\big]
         = (\varphi\circ k) \big[g[\taxis]\big]=\varphi\big[k[i]\big].
\] 
Since $\wl_k(i')=k[i']$ and $\wl_k(i)=k[i]$, this proves that
$\wl_k(i')=\varphi\big[\wl_k(i)]$.

To prove \ax{Axiom\,IV}, let $k,k',h\in\IOb$. We have to prove that
there is $h'$ such that $f_{kk'}=f_{hh'}$, \ie $k\circ k'^{-1}=h\circ
h'^{-1}$. Such $h'$ exists because $\IOb = \G$ is a group: setting
$h'=k'\circ k^{-1}\circ h\in \IOb$ yields $f_{kk'}=f_{hh'}$.

It remains to show that $\W_{\M(\G)}=\G$. This is straightforward
because, by definition and because $\G$ is a group,
\[
    \W_{\M(\G)} = \big\{f_{kh} \colon k,h\in \G\}
		            = \{k\circ h^{-1} \colon k,h\in\G\big\}
								= \G
\]
as required.
\end{proof}

In the proof of Theorem~\ref{thm:groupmodel}, we will use the
following lemma.

\begin{lem} 
\label{lem:trivi}
Assume \ax{BA\,1}. Assume
$L\in\cEucli_{\MQ}\cup\Gali_{\MQ}\cup\cPoii_{\MQ}$ for some $\cc>0$
and that $L$ takes vertical lines to vertical lines. Then
$L\in\Trivi_{\MQ}$.
\end{lem}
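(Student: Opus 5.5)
Write $L=\tau\circ A$, where $A$ is linear (a linear $\cc$-Euclidean isometry, a linear Galilean transformation, or a linear $\cc$-Poincar\'e transformation) and $\tau$ is a translation. The plan is to show that $A$ preserves the ordinary scalar product by checking it on the standard basis, via Proposition~\ref{prop:Eucl} with $\cc=1$. Then $L$ is a Euclidean isometry taking vertical lines to vertical lines, hence trivial, and orthochronicity gives $L\in\Trivi_{\MQ}$.

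\emph{Step 1: $A$ fixes the time direction.} Translations map vertical lines to vertical lines, and so does $\tau^{-1}$. Hence $A=\tau^{-1}\circ L$ also maps vertical lines to vertical lines. The time-axis $\taxis$ is vertical and passes through the origin, and $A$ fixes the origin. So $A[\taxis]$ is a vertical line through $(0,0,0,0)$, which must be $\taxis$ itself. Therefore $A\vve_\sct=(a,0,0,0)$ for some $a\in\Q$. Since $L$ is orthochronous, $L(\vve_\sct)_t-L(0,0,0,0)_t=(A\vve_\sct)_t=a>0$.

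\emph{Step 2: case analysis using the basis characterizations.}
\begin{itemize}
\item In the $\cc$-Euclidean case, Proposition~\ref{prop:Eucl}(ii) gives $A\vve_\sct\edot_\cc A\vve_\sct=\cc^2a^2=\cc^2$. So $a^2=1$, and since $a>0$ in an ordered field, $a=1$. For spatial $i$, $0=A\vve_\sct\edot_\cc A\vve_i=\cc^2a(A\vve_i)_t$, so $(A\vve_i)_t=0$. Then for spatial $i,j$, the product $A\vve_i\edot_\cc A\vve_j$ reduces to $(A\vve_i)_s\cdot(A\vve_j)_s=A\vve_i\cdot A\vve_j$, which equals $\vve_i\cdot\vve_j$.
\item The $\cc$-Poincar\'e case is identical using Proposition~\ref{prop:Poi}(ii). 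The only changes are the signs, which give $-A\vve_i\cdot A\vve_j=-\vve_i\cdot\vve_j$ for spatial $i,j$.
\item In the Galilean case, Proposition~\ref{prop:Gal}(ii) directly gives $(A\vve_\sct)_t=\pm1$, hence $a=1$. It also gives $(A\vve_i)_t=0$ for spatial $i$ and spatial orthonormality.
\end{itemize}

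\emph{Step 3: conclusion.} In all three cases $A\vve_\sct=\vve_\sct$ is orthogonal to the $A\vve_i$, which are spatial and orthonormal. So $A$ satisfies condition (ii) of Proposition~\ref{prop:Eucl} with $\cc=1$, and $A$ preserves the usual scalar product. Thus $L$ is a Euclidean isometry taking vertical lines to vertical lines, i.e.\ a trivial transformation, and it is orthochronous by Step~1.

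The only delicate point is Step~1. It uses that an affine bijection mapping vertical lines to vertical lines forces the linear part to send $\taxis$ onto $\taxis$, and orthochronicity is needed to rule out $a=-1$. The rest is routine bookkeeping with the basis characterizations.
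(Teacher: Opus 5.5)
Your proof is correct and follows essentially the same route as the paper's. You reduce to the linear part, use verticality plus orthochronicity to force $A\vve_\sct=\vve_\sct$, and then use the relevant characterization to show that spatial vectors stay spatial with their spatial scalar product preserved; the only cosmetic difference is that you verify the basis condition (ii) of Proposition~\ref{prop:Eucl} with $\cc=1$, whereas the paper checks preservation of squared-Euclidean length on an arbitrary vector split into its time and space parts.
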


\begin{proof}
Without loss of generality, we can assume that $L$ is linear. In this
case, we only have to prove that $L$ preserves the squared-Euclidean
length, \ie that $\sqed{L\vvp}=\sqed{\vvp}$ for every $\vvp\in\Q^4$.

Then $(L\vve_\sct)_t>0$ and $L\vve_\sct\in\taxis$ because $L$ is an
orthochronous linear transformation that takes vertical lines to
vertical lines. The only point on $\taxis$ with positive time
component and squared \cc-Minkowskian (squared \cc-Euclidean) length
of $\cc^2$ is $\vve_\sct$.  Therefore, $L\vve_\sct=\vve_\sct$ by
Propositions~\ref{prop:Eucl}, \ref{prop:Gal}, and \ref{prop:Poi}.

Let $\vvp=(p_t,p_x,p_y,p_z)\in\Q^4$ be arbitrary but fixed.  We have
to prove that $\sqed{L\vvp}=\sqed{\vvp}$. By the linearity of $L$
and $L\vve_\sct=\vve_\sct$, we have
\begin{equation}\label{eq:time}
  L(p_t,0,0,0)=(p_t,0,0,0).
\end{equation}
We also have
\begin{equation}\label{eq:space}
L(0,p_x,p_y,p_z)=(0,q_x,q_y,q_z)\quad\text{and}\quad
q_x^2+q_y^2+q_z^2=p_x^2+p_y^2+p_z^2
\end{equation}
for some $q_x,q_y,q_z\in\Q$ because of the following. If
$L\in\cPoii_{\MQ}$, then $L(0,p_x,p_y,p_z)_t=0$ because
$L(0,p_x,p_y,p_z)\mdot_\cc\vve_\sct = L(0,p_x,p_y,p_z)\mdot_\cc
L\vve_\sct =(0,p_x,p_y,p_z)\mdot_{\cc} \vve_\sct=0$ as $L$ is a linear
$\cc$-Poincar\'e transformation. Hence
$L(0,p_x,p_y,p_z)=(0,q_x,q_y,q_z)$ for some $q_x,q_y,q_z\in\Q$, and
$q_x^2+q_y^2+q_z^2=-\|L(0,p_x,p_y,p_z)\|^2_{\cc,\mu}=-\|(0,p_x,p_y,p_z)\|^2_{\cc,\mu}=p_x^2+p_y^2+p_z^2$
by Proposition~\ref{prop:Poi}.  A completely analogous proof based on
Proposition~\ref{prop:Eucl} shows that \eqref{eq:space} holds when
$L\in\cEucli_{\MQ}$. And if $L\in\Gali_{\MQ}$, then \eqref{eq:space}
holds by Proposition~\ref{prop:Gal}.

Thus, by linearity of $L$ and equations \eqref{eq:time} and
\eqref{eq:space}, we have
\begin{multline*}
    \sqed{L\vvp} = \sqed{L(p_t,0,0,0)+L(0,p_x,p_y,p_z)}
                   = \sqed{(p_t,q_x,q_y,q_z)}\\ 
                   = p_t^2+q_x^2+q_y^2+q_z^2 
                   = p_t^2+p_x^2+p_y^2+p_z^2
                   = \sqed{\vvp} 
  \end{multline*}
as claimed.
\end{proof}

\begin{proof}[Proof of Theorem~\ref{thm:groupmodel}]\label{proof:groupmodel}
Let $\G$ be a group for which $\Trivi_{\MQ}<\G$ and either
$\G\subset\cEucli_{\MQ}$, or $\G\le \Gali_{\MQ}$ or $\G\le
\cPoii_{\MQ}$.  Then $\G$ satisfies assumptions \ref{awtrf1} and
\ref{awtrf2} of Theorem~\ref{thm:model} because elements of $\G$ are
orthochronous linear transformations composed with translations. $\G$
also satisfies assumption \ref{awtrf3} of Theorem~\ref{thm:model} by
Lemma~\ref{lem:trivi}. By Theorem~\ref{thm:model}, $\M(\G)$ is a model
of \BAX and $\W_{\M(\G)}=\G$.
\end{proof}

\begin{proof}[Proof of Theorem~\ref{thm:B0}]\label{proof:B0}
Assume $\BAX\setminus\{\ax{Axiom\, III}\}$ and that $\MQ$ is the
ordered field $\R$ of reals.  Then, by Theorem~\ref{thm:wtrf}, $\W$ is
a group satisfying assumptions \ref{awtrf1}, \ref{awtrf2}, and
\ref{awtrf3} of Theorem~\ref{thm:model}. Hence $\M(\W)$ is a model of
\BAX and $\W_{\M(\W)}=\W$. The statement now follows by
Borisov's Theorem.
\end{proof}

\section{Proof  of the main theorem 
(Theorem~\ref{thm:non-arch})} 
\label{proof:non-arch}
${}$\\

Throughout this section, we assume \ax{BA\,1}, \ie that
$\MQ=(\Q,+,\cdot,\leq)$ is an ordered field.  Let us now prove our
main result, Theorem~\ref{thm:non-arch}. To do so, we first introduce some
additional notation. The \lemph{set of infinitesimals} is defined as follows:
\begin{equation*}
  \E\de\big\{x\in\Q \colon |nx|<1 \text{ for every  natural number } n \big\},
\end{equation*}
where $nx$ is an abbreviation for
$\underbrace{x+\ldots+x}_{n\text-times}$ when $x \in \Q$ and $n \in
\mathbb{N}$.

\noindent
A quantity $x\in\Q$ is said to be
\begin{itemize}
\renewcommand{\theenumi}{(\circ)}
\item \semph{infinitesimal} if and only if $x\in\E$;
\item \semph{unlimited} if and only if $1/x\in\E$; and 
\item \semph{limited} if and only if it is not unlimited. 
\end{itemize}
Let us note that $x$ is limited if and only if $|x|<n$ for some natural number
$n$.
  
When $\MQ$ is an Archimedean field, we have $\E=\{0\}$ (and hence there are no unlimited numbers), but in non-Archimedean fields there are infinitely many unlimited and infinitesimal numbers. 

We call a set $\C$ a \semph{cloud} (see Figure~\ref{fig:cloudball} below) if and only if the following hold:
\begin{itemize}
\item $\{0\} \subset \C \subseteq \E$;
\item if $x\in\C$ and $|y|\le|x|$, then $y\in\C$;
\item if $x\in\C$, then $2x\in\C$.
\end{itemize}
It is easy to see that (a) there is a cloud in $\MQ$ if and only if (b) $\E$ is a cloud if and only if (c) $\MQ$ is non-Archimedean.

\begin{lem}
\label{lem:smallest-cloud}
Suppose \MQ is non-Archimedean. Given any non-zero $\varepsilon \in
\E$, there is a smallest cloud $\C_\varepsilon$ containing
$\varepsilon$. It is the intersection of all clouds containing
$\varepsilon$, and moreover
\begin{equation*}\C_\varepsilon = \{
  \alpha\in\Q \colon |\alpha| \leq |n\varepsilon| \text{ for some } n \in
  \mathbb{N} \}.
\end{equation*}
\end{lem}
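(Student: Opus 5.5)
Set $D\de\{\alpha\in\Q \colon |\alpha|\le|n\varepsilon| \text{ for some } n\in\mathbb{N}\}$. The plan is to show three things: $D$ is a cloud, $\varepsilon\in D$, and $D$ is contained in every cloud that contains $\varepsilon$. Together these say that $D$ is the smallest cloud containing $\varepsilon$. It then equals the intersection of all clouds containing $\varepsilon$: that intersection contains $D$ because each such cloud does, and it is contained in $D$ because $D$ is one of the clouds being intersected. So we may set $\C_\varepsilon\de D$.

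To see that $D$ is a cloud, check the defining conditions one by one.
\begin{itemize}
\item $0\in D$ trivially, and $\varepsilon\in D$ with $n=1$; since $\varepsilon\neq0$, the inclusion $\{0\}\subset D$ is proper.
\item $D\subseteq\E$: if $|\alpha|\le|n\varepsilon|$ and $m$ is any natural number, then $|m\alpha|\le|mn\varepsilon|<1$, because $\varepsilon\in\E$ and $mn$ is a natural number. Hence $\alpha\in\E$.
\item Downward closure: if $|y|\le|x|\le|n\varepsilon|$ then $y\in D$ with the same $n$.
\item Doubling: if $|x|\le|n\varepsilon|$ then $|2x|\le|2n\varepsilon|$, so $2x\in D$.
\end{itemize}

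For minimality, let $\C$ be any cloud with $\varepsilon\in\C$. By induction on $k$ using the doubling property, $2^k\varepsilon\in\C$ for all $k$. Now take $\alpha\in D$ with $|\alpha|\le|n\varepsilon|$, and choose $k$ with $2^k\ge n$. Then $|\alpha|\le n|\varepsilon|\le 2^k|\varepsilon|=|2^k\varepsilon|$. Downward closure of $\C$ gives $\alpha\in\C$, so $D\subseteq\C$.

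None of these steps is a real obstacle; the only point needing care is that clouds are assumed closed under doubling rather than addition. That is why minimality goes through the powers $2^k\varepsilon$ together with downward closure, instead of forming the multiples $n\varepsilon$ directly. The non-Archimedean hypothesis is used only to guarantee that a nonzero $\varepsilon\in\E$ exists.
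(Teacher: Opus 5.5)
Your proof is correct and takes essentially the same route as the paper. Both show that the explicit set is a cloud containing $\varepsilon$ and then that it lies inside every such cloud, using $2^k\varepsilon$ together with downward closure. The paper also checks separately that the intersection of all clouds containing $\varepsilon$ is a cloud, a step your argument makes redundant. You additionally justify $D\subseteq\E$, which the paper only asserts.
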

\begin{proof}
Let $\mathbb{C}_\varepsilon = \{ \C \colon \C \text{ is a cloud containing
} \varepsilon \}$ and define $\C_\varepsilon^* = \bigcap
\mathbb{C}_\varepsilon$. Note first that $\E$ is a cloud containing
$\varepsilon$, so $\mathbb{C}_\varepsilon$ is non-empty and
$\C_\varepsilon^* \subseteq \E$. 

We know that $\C \in
\mathbb{C}_\varepsilon \implies \{0,\varepsilon\} \subseteq \C$, so
$\{0,\varepsilon\} \subseteq \C_\varepsilon^*$. Thus, because $0 \neq \varepsilon$, we have $\{0\} \subset
\C_\varepsilon^* \subseteq \E$. 
If $x \in \C_\varepsilon^*$ then $x \in \C$ for all $\C \in
\mathbb{C}_\varepsilon$. Thus if $|y|\le|x|$, we have that $y \in \C$
for all $\C \in \mathbb{C}_\varepsilon$ and hence $y \in
\C_\varepsilon^*$. The proof that $2x \in \C_\varepsilon^*$ whenever
$x \in \C_\varepsilon^*$ is equally straightforward. This shows that $\C_\varepsilon^*$ is a cloud
containing $\varepsilon$, whence it is the smallest such cloud. 

Now define $\C_\varepsilon = \{ \alpha \colon |\alpha| \leq |n\varepsilon|
\text{ for some } n \in \mathbb{N} \}$. We need to show that
$\C_\varepsilon = \C_\varepsilon^*$. Notice first that
$\C_\varepsilon$ is certainly a cloud. First, it contains both 0 and
$\varepsilon$, and all elements of $\C_\varepsilon$ are
infinitesimal. Second, if $x \in \C_\varepsilon$, there exists some $n$ for
which $|x| \le |n\varepsilon|$, whence $|2x| \le |(2n)\varepsilon|$
and so $2x \in \C_\varepsilon$. And finally, if $|y|\le|x|$, then $|y|
\le |x| \le |n\varepsilon|$, so $y \in \C_\varepsilon$. From this it follows
that $\C_\varepsilon$ is a cloud containing $\varepsilon$, and hence
$\C_\varepsilon^* \subseteq \C_\varepsilon$.

It remains to show that $\C_\varepsilon \subseteq \C_\varepsilon^*$,
so choose any $\alpha \in \C_\varepsilon$ and any $\C \in
\mathbb{C}_\varepsilon$. By definition, there exists $n \in
\mathbb{N}$ such that $|\alpha| \le |n\varepsilon|$. Choose $m \in
\mathbb{N}$ such that $n \le 2^m$ and observe that $2^m\varepsilon \in
\C$, because $\varepsilon \in \C$ and \C is closed under
doubling. Since $|\alpha| \le |n\varepsilon| \le |2^m\varepsilon|$ it
follows that $\alpha \in \C$. Thus $\alpha$ belongs to every cloud
containing $\varepsilon$, and so $\alpha \in \C_\varepsilon^*$. It
follows that $\C_\varepsilon \subseteq \C_\varepsilon^*$, as required.
\end{proof}

\begin{cor}
\label{cor:infinitely-many-clouds}
Every non-Archimedean field contains infinitely many clouds.
\end{cor}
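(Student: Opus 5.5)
To prove Corollary~\ref{cor:infinitely-many-clouds}, I will use Lemma~\ref{lem:smallest-cloud} to produce a strictly descending sequence of clouds indexed by ever smaller infinitesimals. Since $\MQ$ is non-Archimedean, $\E\neq\{0\}$, so fix some nonzero $\varepsilon\in\E$. Replacing $\varepsilon$ by $|\varepsilon|$ if necessary, I may assume $\varepsilon>0$. Then $\varepsilon^2,\varepsilon^3,\ldots$ are all nonzero and infinitesimal: if $0<x\in\E$ and $0<y<1$, then $|n xy|\le|nx|<1$ for every $n$, so $xy\in\E$.

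Define $\C_k\de\C_{\varepsilon^k}$ for $k\ge1$, using the explicit description from Lemma~\ref{lem:smallest-cloud}. First, $\C_{k+1}\subseteq\C_k$. Indeed, $\varepsilon^{k+1}\le\varepsilon^k$ gives $\varepsilon^{k+1}\in\C_k$ by downward closure, and $\C_{k+1}$ is the smallest cloud containing $\varepsilon^{k+1}$.

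The key step is strictness, which I will show by proving $\varepsilon^k\notin\C_{k+1}$. Suppose instead that $\varepsilon^k\le n\varepsilon^{k+1}$ for some $n\in\mathbb{N}$. Dividing by the positive quantity $\varepsilon^k$ gives $1\le n\varepsilon$. This contradicts $\varepsilon\in\E$, which requires $|n\varepsilon|<1$.

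Hence the clouds $\C_1\supset\C_2\supset\cdots$ are pairwise distinct, and there are infinitely many of them. I see no real obstacle here. The only points needing care are the sign normalization of $\varepsilon$ and the observation that products of infinitesimals remain nonzero infinitesimals. These clouds presumably also supply the descending chain of groups used later in the proof of Theorem~\ref{thm:non-arch}.
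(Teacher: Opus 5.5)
Your proof is correct and uses the same argument as the paper. The paper shows $\varepsilon\notin\C_{\varepsilon^2}$ by the same division step ($|\varepsilon|\le|n\varepsilon^2|$ forces $1\le|n\varepsilon|$) and leaves the iteration implicit. You instead spell out the strictly descending chain $\C_{\varepsilon}\supset\C_{\varepsilon^2}\supset\C_{\varepsilon^3}\supset\cdots$, including the inclusions and the fact that powers of $\varepsilon$ remain nonzero infinitesimals.
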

\begin{proof}
It is enough to show that for each $\varepsilon\in\E$, the smallest cloud containing $\varepsilon^2$ 
does not contain $\varepsilon$. We argue by contradiction. If $\varepsilon \in \C_{\varepsilon^2}$, then Lemma \ref{lem:smallest-cloud} tells us that there exists $n \in \mathbb{N}$ such that $|\varepsilon| \le |n\varepsilon^2|$. It follows that $1 \le |n\varepsilon|$, which contradicts the assumption that $\varepsilon\in\E$.
\end{proof}

\begin{lem}\label{lem:cloud}
  Let $\C$ be a cloud. Then
\begin{enumerate}
\renewcommand{\theenumi}{\roman{enumi}}
  \item $x,y\in\C\implies x+y\in\C,$
  \item $x\in\C\text{ and $y$ is limited}\implies xy\in\C.$
 \end{enumerate}
\end{lem}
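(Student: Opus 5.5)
Both parts follow directly from the three defining properties of a cloud: it is downward closed with respect to absolute value, and it is closed under doubling. No appeal to the explicit description in Lemma~\ref{lem:smallest-cloud} is needed, since $\C$ is an arbitrary cloud.

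For (i), let $x,y\in\C$. Without loss of generality we may assume $|x|\le|y|$; otherwise swap the roles of $x$ and $y$. By the triangle inequality in the ordered field $\MQ$ we have $|x+y|\le|x|+|y|\le 2|y|=|2y|$. Since $y\in\C$, closure under doubling gives $2y\in\C$. Then $|x+y|\le|2y|$ and downward closure give $x+y\in\C$.

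For (ii), let $x\in\C$ and let $y$ be limited. By the remark following the definition of limited quantities, there is a natural number $n$ with $|y|<n$. Choose $m\in\mathbb{N}$ with $n\le 2^m$. Applying the doubling property $m$ times (a trivial induction on $m$) shows $2^m x\in\C$. Now
\[
|xy|=|x|\,|y|\le n|x|\le 2^m|x|=|2^m x|,
\]
where the first inequality uses $|x|\ge 0$ and $|y|<n$, and the second uses $n\le 2^m$. Downward closure then yields $xy\in\C$.

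There is no real obstacle here. The only points needing care are the routine ordered-field facts $|x+y|\le|x|+|y|$ and $|xy|=|x||y|$, and the remark that a limited $y$ satisfies $|y|<n$ for some natural number $n$. The same "bound by a power of two" idea already appears in the proof of Lemma~\ref{lem:smallest-cloud}.
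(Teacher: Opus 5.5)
Your proof is correct and follows essentially the same route as the paper: in (i) you bound $|x+y|\le 2\max\{|x|,|y|\}$ and use doubling plus downward closure, and in (ii) you bound $|y|<n\le 2^m$ and use $2^m x\in\C$. The only difference is cosmetic: you work with absolute values throughout, whereas the paper first reduces to $x,y>0$ without loss of generality.
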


\begin{proof}
  To prove (i), let $x,y\in\C$. Without loss of generality, we can assume that $|y|\le |x|$. Then, by the triangle inequality, $|x+y|\le |x| + |y| \le 2|x|$. Hence $x+y\in \C$.

To prove (ii), let $x\in\C$ and $y$ be limited. Without loss of generality, we can assume that $x>0$ and $y>0$. Since $y$ is limited there is a natural number $n$ such that $y<n$. Since $x>0$, we have $xy<nx$. There is a natural number $k$ such that $n<2^k$. Hence $0<xy<2^kx$. We have $2^kx\in \C$ since $\C$ is closed under doubling. Consequently, $xy\in\C$.     
\end{proof}

Let $\C$ be a cloud. Then the $\C$\semph{-ball} \lemph{around $\vve_{\sct}$} 
is defined as follows: 
\begin{equation*}
B_{\C}\de  \big\{\vvp\in\Q^4 \colon \sqed{\vvp-\vve_{\sct}}\le r^2\text{ for some } r\in\C
\big\}, 
\end{equation*}
see Figure~\ref{fig:cloudball} below.

 \begin{figure}
\begin{center}
   \input{cloudball.pgf}
\end{center}
\caption{\label{fig:cloudball}
Illustration for clouds 
$\C$ and $\E$, and the balls $B_{\C}$ and $B_{\E}$}
\end{figure}

By Proposition~\ref{prop:infball} below, the $\C$-balls and
``$\C$-boxes'' around $\vve_{\sct}$ are the same sets.
\begin{prop}\label{prop:infball}
  Let $\C$ be a cloud. Then
  \begin{equation*}
   (1+t,x,y,z)\in B_{\C} \iff t,x,y,z\in\C.
  \end{equation*}
\end{prop}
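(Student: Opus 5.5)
We prove the two directions separately. Throughout, write $\vvp=(1+t,x,y,z)$, so that $\vvp-\vve_\sct=(t,x,y,z)$ and
\[
\sqed{\vvp-\vve_\sct}=t^2+x^2+y^2+z^2 .
\]

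\emph{($\Leftarrow$)} Suppose $t,x,y,z\in\C$. We need some $r\in\C$ with $t^2+x^2+y^2+z^2\le r^2$. Take $r\de|t|+|x|+|y|+|z|$.
\begin{itemize}
\item By the downward-closure property of clouds, $|t|,|x|,|y|,|z|\in\C$.
\item By repeated use of Lemma~\ref{lem:cloud}(i), $r\in\C$.
\item Expanding $r^2$ gives the four squares plus cross terms $2|a||b|$, each of which is nonnegative. Hence $r^2\ge t^2+x^2+y^2+z^2$, and so $\vvp\in B_\C$.
\end{itemize}
(Alternatively, take $r=4\max\{|t|,|x|,|y|,|z|\}$, which lies in $\C$ by doubling, and use $a^2\le m^2$ for $|a|\le m$.)

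\emph{($\Rightarrow$)} Suppose $t^2+x^2+y^2+z^2\le r^2$ for some $r\in\C$. Since every term is a square, hence nonnegative, we get $t^2\le r^2$ in the ordered field. Squaring is monotone on nonnegative elements, so this gives $|t|\le|r|$. Because $\C$ is closed under passing to smaller absolute values and $r\in\C$, it follows that $t\in\C$. The same argument gives $x,y,z\in\C$.

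The only step requiring any care is the implication $a^2\le b^2\implies|a|\le|b|$ in a general ordered field. It holds because $|a|>|b|\ge0$ would give $a^2=|a|^2>|b|^2=b^2$, a contradiction. There is no real obstacle: no square roots are needed, since we compare squares directly.
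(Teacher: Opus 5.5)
Your proof is correct and takes essentially the same route as the paper. The forward direction bounds each coordinate by $r$ and applies downward closure, and the converse exhibits an explicit radius in $\C$. The paper uses $2\max\{|t|,|x|,|y|,|z|\}$, which matches your alternative up to a harmless constant, and your sum $|t|+|x|+|y|+|z|$ via Lemma~\ref{lem:cloud}(i) works just as well. You are also slightly more careful than the paper: you write $|t|\le|r|$ rather than $\le r$, and you justify $a^2\le b^2\implies|a|\le|b|$ over an arbitrary ordered field.
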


\begin{proof}
By definition, $(1+t,x,y,z)\in B_{\C}$ if and only if $t^2+x^2+y^2+z^2\le r^2$
for some $r\in \C$. So, if $(1+t,x,y,z)\in B_{\C}$, we have
$\max\{|t|,|x|,|y|,|z|\}\le r$ for some $r\in\C$. Consequently,
$t,x,y,z\in\C$. The converse follows because $t^2+x^2+y^2+z^2\le (2\max\{|t|,|x|,|y|,|z|\})^2$.
\end{proof}

\begin{prop}\label{prop:infballpres}
Let $f \colon \Q^4\to\Q^4$ be a linear transformation and let $\C$ be a
cloud. Assume that $\sqed{f(\vve_{\scx})}$, $\sqed{f(\vve_{\scy})}$,
$\sqed{f(\vve_{\scz})}$ are limited and $f(\vve_{\sct})\in B_\C$. Then
$f[B_\C]\subseteq B_\C$.
\end{prop}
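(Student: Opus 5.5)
Take an arbitrary $\vvp\in B_\C$ and, by Proposition~\ref{prop:infball}, write it as $\vvp=(1+t,x,y,z)=\vve_\sct+t\vve_\sct+x\vve_\scx+y\vve_\scy+z\vve_\scz$ with $t,x,y,z\in\C$. Linearity gives
\[
f(\vvp)=f(\vve_\sct)+t f(\vve_\sct)+x f(\vve_\scx)+y f(\vve_\scy)+z f(\vve_\scz).
\]
The plan is to show that $f(\vvp)-\vve_\sct$ has all four coordinates in $\C$. Proposition~\ref{prop:infball} then gives $f(\vvp)\in B_\C$.

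We bound each summand coordinatewise.

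\emph{First summand.} Since $f(\vve_\sct)\in B_\C$, Proposition~\ref{prop:infball} lets us write $f(\vve_\sct)=(1+a,b,c,d)$ with $a,b,c,d\in\C$. So $f(\vve_\sct)-\vve_\sct=(a,b,c,d)$ already has coordinates in $\C$.

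\emph{Second summand.} The coordinates of $f(\vve_\sct)$ are $1+a,b,c,d$. All of these are limited, because infinitesimals are limited and $|1+a|<2$. Since $t\in\C$, Lemma~\ref{lem:cloud}(ii) puts each coordinate of $t f(\vve_\sct)$ in $\C$.

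\emph{Remaining summands.} Each coordinate $w$ of $f(\vve_\scx)$ satisfies $w^2\le\sqed{f(\vve_\scx)}$, and the right-hand side is limited by hypothesis, say $<n$. Then $|w|\le\max\{1,w^2\}<n+1$, so $w$ is limited. Lemma~\ref{lem:cloud}(ii) with $x\in\C$ puts each coordinate of $x f(\vve_\scx)$ in $\C$. The $\scy$ and $\scz$ terms are handled the same way.

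Summing the five contributions coordinatewise and applying Lemma~\ref{lem:cloud}(i) repeatedly shows that every coordinate of $f(\vvp)-\vve_\sct$ lies in $\C$. Hence $f(\vvp)=(1+t',x',y',z')$ with $t',x',y',z'\in\C$, and Proposition~\ref{prop:infball} gives $f(\vvp)\in B_\C$.

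The only mildly delicate step is passing from "$\sqed{f(\vve_\scx)}$ is limited" to "each coordinate of $f(\vve_\scx)$ is limited". This works without square roots through the elementary bound $|w|\le\max\{1,w^2\}$. Everything else is bookkeeping with the closure properties of clouds.
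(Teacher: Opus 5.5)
Your proof is correct and has the same structure as the paper's: you expand $f(1+t,x,y,z)=f(\vve_{\sct})+tf(\vve_{\sct})+xf(\vve_{\scx})+yf(\vve_{\scy})+zf(\vve_{\scz})$ by linearity and use Lemma~\ref{lem:cloud} to absorb each summand into $\C$. The only difference is bookkeeping. The paper estimates squared Euclidean norms using $\sqed{\vvp+\vvq}\le 2\sqed{\vvp}+2\sqed{\vvq}$, whereas you work coordinatewise through the box description of $B_{\C}$ in Proposition~\ref{prop:infball}, and your bound $|w|\le\max\{1,w^2\}$ cleanly gives limited coordinates without needing square roots.
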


\begin{proof}
Suppose $(1+t,x,y,z)\in B_\C$.  We need to show that $f(1+t,x,y,z)\in B_\C$.

Note first that $t,x,y,z\in\C$ by Proposition~\ref{prop:infball}. Since $f$ is linear,
\begin{equation*}
  f(1+t,x,y,z)=f(\vve_{\sct})+tf(\vve_{\sct})+xf(\vve_{\scx})+yf(\vve_{\scy})+zf(\vve_{\scz}).
\end{equation*}
For all $\vvp,\vvq\in\Q^4$, we have
\begin{equation}\label{sqtriangle}
  \sqed{\vvp+\vvq}\le
  2\sqed{\vvp}+2\sqed{\vvq}
\end{equation}
since $(a+b)^2\le 2a^2+2b^2$ for all $a,b\in\Q$. Consequently, we have 
\begin{align*}
  &\sqed{f(1+t,x,y,z)-f(\vve_{\sct})}
  \\ &=\sqed{tf(\vve_{\sct})+xf(\vve_{\scx})+yf(\vve_{\scy})+zf(\vve_{\scz})}
  \\ &\le 2\sqed{tf(\vve_{\sct})+xf(\vve_{\scx})}+
  2\sqed{yf(\vve_{\scy})+zf(\vve_{\scz})} \\&\le
  4\sqed{tf(\vve_{\sct})}+4\sqed{xf(\vve_{\scx})}+
  4\sqed{yf(\vve_{\scy})}+4\sqed{zf(\vve_{\scz})}\\ &\le
  4\max\left\{t^2\sqed{f(\vve_{\sct})},x^2\sqed{f(\vve_{\scx})},
  y^2\sqed{f(\vve_{\scy})},z^2\sqed{f(\vve_{\scz})}\right\}\\ &\le 
  r_0^2
\end{align*}
for some positive $r_0\in\C$ by Lemma~\ref{lem:cloud} since $f(\vve_{\sct})$ is also limited as $f(\vve_{\sct})\in B_\C$. By inequality
\eqref{sqtriangle},
\begin{equation*}
  \sqed{f(1+t,x,y,z)-\vve_{\sct}}\le 2\sqed{f(1+t,x,y,z)-f(\vve_{\sct})}+ 2\sqed{f(\vve_{\sct})-\vve_{\sct}}.
\end{equation*}
There is a positive $r_1\in\C$ such that
$\sqed{f(\vve_{\sct})-\vve_{\sct}}\le r_1^2$ because
$f(\vve_{\sct})\in B_\C$.  Therefore,
\begin{equation*}
  \sqed{f(1+t,x,y,z)-\vve_{\sct}}\le 2r_0^2+2r_1^2\le (2\max\{r_0,r_1\})^2.
\end{equation*}
Since $2\max\{r_0,r_1\}\in\C$, we have $f(1+t,x,y,z)\in B_\C$, which
is what we wanted to prove.
\end{proof}

Given any set \Trf of transformations, we will write $\Lin\Trf$ for the set of transformations in \Trf which are linear, \ie
\begin{equation*}
  \Lin\Trf\de\{ f\in \Trf \colon f \text{ is linear}\}.
\end{equation*}
In particular, if $\C$ is a cloud we will be interested in the following sets of linear transformations: 
\begin{align*}
  \Lin\Eucli_{\MQ}(\C) &\de \big\{f\in\Lin\Eucli_{\MQ} \colon
  f(\vve_{\sct})\in B_\C\big\},\\ \Lin\Gali_{\MQ}(\C) &\de
  \big\{f\in\Lin\Gali_{\MQ} \colon f(\vve_{\sct})\in
  B_\C\big\},\\ \Lin\Poii_{\MQ}(\C) &\de \big\{f\in\Lin\Poii_{\MQ} \colon
  f(\vve_{\sct})\in B_\C\big\}.
\end{align*}

\noindent We will also refer below to the following sets of affine transformations: 
\begin{align*}
 \Eucli_{\MQ}(\C) &\de \Tran_{\MQ}\circ\Lin\Eucli_{\MQ}(\C),\\
 \Gali_{\MQ}(\C)  &\de \Tran_{\MQ}\circ\Lin\Gali_{\MQ}(\C),\\
 \Poii_{\MQ}(\C)  &\de \Tran_{\MQ}\circ\Lin\Poii_{\MQ}(\C),
\end{align*}
where $\Tran_{\MQ}$ is the set of translations.

\begin{prop}\label{prop:ballpres}
  Let $\C$ be a cloud. Then
\begin{align*}
  \Lin\Eucli_{\MQ}(\C) &= \big\{f\in\Lin\Eucli_{\MQ} \colon f[B_\C]=
  B_\C\big\},\\ \Lin\Gali_{\MQ}(\C) &= \big\{f\in\Lin\Gali_{\MQ} \colon
  f[B_\C]= B_\C\big\},\\ \Lin\Poii_{\MQ}(\C) &= \big\{f\in\Lin\Poii_{\MQ} \colon
  f[B_\C]= B_\C\big\}.
\end{align*}
\end{prop}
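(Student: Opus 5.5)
Each of the three equalities is proved the same way, by double inclusion, and Proposition~\ref{prop:infballpres} does most of the work.

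\textbf{The easy inclusion ($\supseteq$).} Suppose $f$ is linear and $f[B_\C]=B_\C$. Then $\vve_\sct\in B_\C$ (take $r=0\in\C$), so $f(\vve_\sct)\in f[B_\C]=B_\C$. This gives $f\in\Lin\Eucli_{\MQ}(\C)$, respectively $\Lin\Gali_{\MQ}(\C)$ or $\Lin\Poii_{\MQ}(\C)$.

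\textbf{The converse ($\subseteq$).} Let $f$ be in one of the three sets, so $f$ is linear and $f(\vve_\sct)\in B_\C$. The plan has three steps.

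\begin{enumerate}
\item \emph{Forward inclusion.} First show $f[B_\C]\subseteq B_\C$ using Proposition~\ref{prop:infballpres}. This only requires that $\sqed{f(\vve_i)}$ be limited for $i\in\{\scx,\scy,\scz\}$.
\item \emph{Inverse lies in the same set.} Show that $f^{-1}$ belongs to the same set as $f$.
\item \emph{Reverse inclusion.} Apply step 1 to $f^{-1}$ to get $f^{-1}[B_\C]\subseteq B_\C$, that is, $B_\C\subseteq f[B_\C]$.
\end{enumerate}

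\textbf{Step 1, limitedness of the spatial images.} This splits by case.

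\begin{itemize}
\item \emph{Euclidean case.} $\sqed{f(\vve_i)}=1$.
\item \emph{Galilean case.} $\sqed{f(\vve_i)}=1$ by Proposition~\ref{prop:Gal}, since the time component is $0$ and the spatial part has squared length $1$.
\item \emph{Poincar\'e case (1-Poincar\'e, i.e.\ $\cc=1$).} This is the main obstacle. Write $f(\vve_\sct)=(1+t,\vvv)$ with $t$ and the components of $\vvv$ in $\C$. Put $w_i=f(\vve_i)$, with $(w_i)_t=a$ and spatial part $\vv{w}$.
  \begin{itemize}
  \item Minkowski orthogonality gives $a(1+t)=\vvv\cdot\vv{w}$.
  \item The normalisation gives $\sqed{\vv{w}}=1+a^2$.
  \item By Cauchy--Schwarz, $a^2(1+t)^2\le \sqed{\vvv}(1+a^2)$.
  \item Since $(1+t)^2\ge 1/2$ and $\sqed{\vvv}$ is infinitesimal, $a^2(1/2-\sqed{\vvv})\le \sqed{\vvv}$. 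Hence $a^2$ is infinitesimal, in particular limited.
  \item Therefore $\sqed{w_i}=1+2a^2$ is limited.
  \end{itemize}
\end{itemize}

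\textbf{Step 2, membership of $f^{-1}$.} Each of $\Lin\Eucli_{\MQ}$, $\Lin\Gali_{\MQ}$ and $\Lin\Poii_{\MQ}$ is closed under inverses, and orthochronicity passes to inverses. So it remains to check $f^{-1}(\vve_\sct)\in B_\C$. This is again by case.

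\begin{itemize}
\item \emph{Euclidean case.} The matrix of $f$ is orthogonal, so $f^{-1}=f^{T}$. The first column of $f^{-1}$ is the first row of $f$, namely $\big((f\vve_\sct)_t,(f\vve_\scx)_t,(f\vve_\scy)_t,(f\vve_\scz)_t\big)$. Its first entry is $1+t$ with $t\in\C$. The row has unit length, so the remaining entries have squares summing to $1-(1+t)^2=-2t-t^2$, which is in $\C$ in absolute value. Each such entry is therefore bounded in square by an element of $\C$. It would remain to argue that $|x|\le r$ for some $r\in\C$ when $x^2\le s\in\C$. This fails in general, since square roots of infinitesimals need not lie in $\C$.
\item \emph{Way around this.} Use the identity $f^{-1}(\vve_\sct)-\vve_\sct=f^{-1}\big(\vve_\sct-f(\vve_\sct)\big)$. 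The map $f^{-1}$ is linear and $\sqed{f^{-1}(\vve_j)}$ is limited: it equals $1$ in the Euclidean and Galilean cases, and in the Poincar\'e case a bound can be obtained analogously. Together with $\vve_\sct-f(\vve_\sct)$ having entries in $\C$, Lemma~\ref{lem:cloud}, applied coordinatewise to this linear combination, puts each coordinate of $f^{-1}(\vve_\sct)-\vve_\sct$ in $\C$. Hence $f^{-1}(\vve_\sct)\in B_\C$ by Proposition~\ref{prop:infball}. This route avoids square roots.
\item \emph{Galilean case.} $f^{-1}(\vve_\sct)$ has time component $1$, and the argument above applies directly.
\item \emph{Poincar\'e case.} Bound $\sqed{f^{-1}(\vve_\sct)}$, the first column of $f^{-1}=\eta f^{T}\eta$ with $\eta=\mathrm{diag}(1,-1,-1,-1)$. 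Its entries are $(1+t)$ and, up to sign, the time components $a$ from step 1. Step 1 showed these are limited, and in fact have infinitesimal squares. The limited bounds for the spatial columns of $f^{-1}$ follow the same way, from the rows of $f$ and the Minkowski normalisation.
\end{itemize}

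Step 3 then completes the proof as outlined above.
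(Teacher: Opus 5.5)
Your proof is correct. Its skeleton matches the paper's: show each set is closed under inverses, which reduces the equality to $f[B_\C]\subseteq B_\C$, then invoke Proposition~\ref{prop:infballpres}, using that the spatial columns have squared length $1$ in the Euclidean and Galilean cases. The differences lie in two ingredients.

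\textbf{Inverse closure.} The paper proves the exact identity $\sqed{\vve_\sct-f(\vve_\sct)}=\sqed{f^{-1}(\vve_\sct)-\vve_\sct}$ in all three cases. In the Poincar\'e case it uses $f(\vve_\sct)_t=\vve_\sct\mdot_1 f(\vve_\sct)=f^{-1}(\vve_\sct)\mdot_1\vve_\sct=f^{-1}(\vve_\sct)_t$ together with equality of Minkowski norms. So $f(\vve_\sct)\in B_\C\iff f^{-1}(\vve_\sct)\in B_\C$ comes for free. This needs no bounds on $f^{-1}$ and raises no square-root issue, since $B_\C$ is defined by squared distance $\le r^2$. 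In the Euclidean case it is just the isometry $f^{-1}$ applied to $\vve_\sct-f(\vve_\sct)$, so your detour there was unnecessary.

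Your route writes $f^{-1}(\vve_\sct)-\vve_\sct=f^{-1}\big(\vve_\sct-f(\vve_\sct)\big)$ and combines Lemma~\ref{lem:cloud} with the fact that $f^{-1}$ has limited entries. In the Poincar\'e case that fact is immediate: $f^{-1}=\eta f^{T}\eta$ has the entries of $f$ up to sign, and all of these are limited by your Step~1. This needs more input than the paper's identity, but it isolates a reusable principle: linear maps with limited entries preserve perturbations of $\vve_\sct$ whose coordinates lie in $\C$.

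\textbf{Poincar\'e bound.} The paper shows $\sqed{f(\vve_i)}\le\sqed{f(\vve_\sct)}$ for every Lorentz map, independently of any cloud. Your Cauchy--Schwarz estimate uses $f(\vve_\sct)\in B_\C$ but yields more. From $a^2\le 4\sqed{\vvv}\le(4m)^2$ with $m=\max_k|v_k|\in\C$, the time components $a$ of the spatial images lie in $\C$ itself.

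\textbf{A small inaccuracy.} You claim $\sqed{f^{-1}(\vve_j)}=1$ in the Galilean case, but this is false for $j=\sct$: there it is $1$ plus the squared spatial part of $f^{-1}(\vve_\sct)$. It is not needed, though. For Galilean $f$, the vector $\vve_\sct-f(\vve_\sct)$ has time component $0$, so only the spatial columns of $f^{-1}$ enter your computation.
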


\begin{proof}
Let $f \in\Lin\Eucli_{\MQ}\cup\Lin\Gali_{\MQ}\cup\Lin\Poii_{\MQ}$ and let
$\C$ be a cloud. It is enough to show that
\begin{equation*}
    f(\vve_{\sct})\in B_\C \iff f[B_\C]=B_\C.
  \end{equation*}
If $f\in\Lin\Poii_{\MQ}$, then 
\[
f(\vve_{\sct})_t =
f^{-1}(\vve_{\sct})_t\ \text{ and }\ \sqed{f(\vve_{\sct})_s} =
\sqed{f^{-1}(\vve_{\sct})_s}
\]
because
$f(\vve_{\sct})_t=\vve_{\sct}\mdot_{1}f(\vve_{\sct})=f^{-1}(\vve_{\sct})\mdot_1\vve_{\sct}=f^{-1}(\vve_{\sct})_t$,
$\sqed{\vvp_s}=p_t^2 -\|\vvp\|^2_{1,\mu}$ for every $\vvp\in\Q^4$ and
$\|f(\vve_{\sct})\|^2_{1,\mu}=\|\vve_\sct\|^2_{1,\mu}=\|f^{-1}(\vve_{\sct})\|^2_{1,\mu}$
by Proposition~\ref{prop:Poi}.
Thus
  \begin{multline*}
    \sqed{\vve_{\sct}-f(\vve_{\sct})}=(1-f(\vve_{\sct})_t)^2+|f(\vve_{\sct})_s|^2\\=(1-f^{-1}(\vve_{\sct})_t)^2+|f^{-1}(\vve_{\sct})_s|^2=\sqed{\vve_{\sct}-f^{-1}(\vve_{\sct})}.
  \end{multline*}
  If $f\in\Lin\Eucli_{\MQ}\cup\Lin\Gali_{\MQ}$,
  $$\sqed{\vve_{\sct}-f(\vve_{\sct})}=\sqed{f^{-1}(\vve_{\sct})-f^{-1}(f(\vve_{\sct}))}=\sqed{f^{-1}(\vve_{\sct})-\vve_{\sct}}$$
  because $f^{-1}$ is linear and preserves the Euclidean distance of
  $\vve_{\sct}$ and $f(\vve_{\sct})$ by Propositions~\ref{prop:Eucl}
  and \ref{prop:Gal}.

So
$\sqed{\vve_{\sct}-f(\vve_{\sct})}=\sqed{f^{-1}(\vve_{\sct})-\vve_{\sct}}$
in all three cases. Thus $f(\vve_{\sct})\in B_\C$ if and only if
$f^{-1}(\vve_{\sct})\in B_\C$ and hence $\Lin\Eucli_{\MQ}(\C)$,
$\Lin\Gali_{\MQ}(\C)$ and $\Lin\Poii_{\MQ}(\C)$ are closed under
taking inverse. Therefore, it is enough to show that
 $$f(\vve_{\sct})\in B_\C \iff f[B_\C]\subseteq B_\C.$$
Hence, by Proposition~\ref{prop:infballpres}, it is enough to show that
$f(\vve_{\sct})\in B_\C$ implies that $\sqed{f(\vve_{\scx})}$,  $\sqed{f(\vve_{\scy})}$, $\sqed{f(\vve_{\scz})}$ are limited.

If $f \in\Lin\Gali_{\MQ}$ or $f\in\Lin\Eucli_{\MQ}$,
$\sqed{f(\vve_{\scx})}=\sqed{f(\vve_{\scy})}=\sqed{f(\vve_{\scz})}=1$. Hence
they are limited.

To complete the proof in the last remaining case, let $f
\in\Lin\Poii_{\MQ}$.  We will prove that for every
$i\in\{\scx,\scy,\scz\}$, $\sqed{f(\vve_i)}\leq
\sqed{f(\vve_\sct)}$. To prove this, let $i\in\{\scx,\scy,\scz\}$ and let
\[
\vvp = f(\vve_\sct)\qquad \text{ and }\qquad \vvq = f(\vve_i).
\]
Then $p_t\cdot q_t - \vvp_s\cdot\vvq_s = f(\vve_\sct) \mdot_1
f(\vve_i) = \vve_\sct\mdot_1\vve_i =0$,
$p_t^2-\sqed{\vvp_s}=\|f(\vve_\sct)\|^2_{1,\mu}=\|\vve_\sct\|^2_{1,\mu}=1$
and
$q_t^2-\sqed{\vvq_s}=\|f(\vve_i)\|^2_{1,\mu}=\|\vve_i\|^2_{1,\mu}=-1$
by definition of Poincar\'e transformations and
Proposition~\ref{prop:Poi}.  Consequently,
\[
p_t\cdot q_t =\vvp_s\cdot\vvq_s, \quad \sqed{\vvp_s} = p_t^2-1,
\quad\text{and}\quad \sqed{\vvq_s} = q_t^2 + 1.
\]
Therefore, by the Cauchy-Schwarz inequality,\footnote{The Cauchy-Schwarz
  inequality over $\mathbb{R}$ has several elementary proofs, many of
  which remain valid over arbitrary ordered fields.}
\[
(p_t\cdot q_t)^2
     = (\vvp_s\cdot\vvq_s)^2
  \leq \sqed{\vvp_s}\cdot \sqed{\vvq_s}
     = (p_t^2-1) \cdot (q_t^2 +1)
     = (p_t\cdot q_t)^2-q_t^2+p_t^2-1.
\]
Thus $(p_t\cdot q_t)^2\leq (p_t\cdot q_t)^2-q_t^2+p_t^2-1$. This is
equivalent to $q_t^2 \leq p_t^2-1$, which is equivalent to
$q_t^2+(q_t^2 +1) \leq p_t^2 +(p_t^2-1)$, which is equivalent to
$q_t^2+\sqed{\vvq_s}\leq p_t^2 + \sqed{\vvp_s}$.  Thus $\sqed{\vvq}
\leq \sqed{\vvp}$, \ie $\sqed{f(\vve_i)} \leq \sqed{f(\vve_\sct)}$.
Now $\sqed{f(\vve_\sct)}$ is limited because $f(\vve_\sct)\in
B_\C$. Therefore, $\sqed{f(\vve_\scx)}$, $\sqed{f(\vve_\scy)}$ and
$\sqed{f(\vve_\scz)}$ are limited, too.
\end{proof}

\begin{lem}
\label{lem:therearetransformations}
Suppose $p_t,p_x,r_t,r_x\in\Q$ satisfy $p_t^2 + p_x^2 = r_t^2-r_x^2 =1$, 
and let $q$ be any value in $\Q$. Then there exist $f\in\Eucl_{\MQ}$,
$g\in\Gal_{\MQ}$ and $h\in\Poi_{\MQ}$ such that 
\begin{align*}
  f(\vve_{\sct}) &= (p_t,p_x,0,0) ,\\ 
  g(\vve_{\sct}) &= (1,q,0,0) ,\\
  h(\vve_{\sct}) &= (r_t,r_x,0,0) .
\end{align*}
\end{lem}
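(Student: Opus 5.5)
The plan is to write down explicit linear maps, given by $4\times4$ matrices, that act only on the $(t,x)$-plane and fix $\vve_{\scy}$ and $\vve_{\scz}$. Membership in the relevant group will then be checked on the standard basis using condition (ii) of Propositions~\ref{prop:Eucl}, \ref{prop:Gal} and \ref{prop:Poi} with $\cc=1$. Translations are not needed: each map is linear and therefore already a Euclidean isometry, Galilean transformation or Poincar\'e transformation, respectively, with zero translation part.

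For the Euclidean case I take $f$ to be the ``rotation'' in the $(t,x)$-plane given by
\[
f(\vve_{\sct})=(p_t,p_x,0,0),\quad f(\vve_{\scx})=(-p_x,p_t,0,0),\quad f(\vve_{\scy})=\vve_{\scy},\quad f(\vve_{\scz})=\vve_{\scz}.
\]
The scalar products of the image vectors are $p_t^2+p_x^2=1$ for the first two, $-p_tp_x+p_xp_t=0$ between them, and trivially $1$ or $0$ for the rest. So $f\in\Eucl_{\MQ}$ by Proposition~\ref{prop:Eucl}.

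For the Galilean case I take the boost $g(t,x,y,z)=(t,x+qt,y,z)$. Then $g(\vve_{\sct})=(1,q,0,0)$, the time component is preserved, and $g$ fixes $\vve_{\scx},\vve_{\scy},\vve_{\scz}$. Proposition~\ref{prop:Gal}(ii) therefore holds immediately, giving $g\in\Gal_{\MQ}$.

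For the Poincar\'e case I take the Lorentz boost given by
\[
h(\vve_{\sct})=(r_t,r_x,0,0),\quad h(\vve_{\scx})=(r_x,r_t,0,0),\quad h(\vve_{\scy})=\vve_{\scy},\quad h(\vve_{\scz})=\vve_{\scz}.
\]
Checking Proposition~\ref{prop:Poi}(ii), the Minkowski products are $r_t^2-r_x^2=1$, $r_x^2-r_t^2=-1$, and $r_tr_x-r_xr_t=0$, with the remaining entries trivial. Hence $h\in\Poi_{\MQ}$.

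Since Propositions~\ref{prop:Eucl}–\ref{prop:Poi} already reduce membership to these basis identities, no step is a genuine obstacle. The only points needing care are choosing the signs so that orthogonality holds ($-p_x$ in the Euclidean case, $+r_x$ in the Minkowski case) and noting that everything is rational in the given parameters, so no square roots are needed in $\MQ$.
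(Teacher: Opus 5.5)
Your proposal is correct and matches the paper's proof: you use the same three linear maps (the $(t,x)$-plane rotation with $f(\vve_{\scx})=(-p_x,p_t,0,0)$, the Galilean boost fixing $\vve_{\scx},\vve_{\scy},\vve_{\scz}$, and the Lorentz boost with $h(\vve_{\scx})=(r_x,r_t,0,0)$). You also verify membership the same way, via condition (ii) of Propositions~\ref{prop:Eucl}, \ref{prop:Gal} and \ref{prop:Poi}, with the basis computations written out explicitly.
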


\begin{proof}
Let $f$ be the linear transformation that takes $\vve_{\sct}$,
$\vve_{\scx}$, $\vve_{\scy}$, $\vve_{\scz}$ to $(p_t,p_x,0,0)$,
$(-p_x,p_t,0,0)$, $\vve_{\scy}$, $\vve_{\scz}$, respectively.  Let $g$
be the linear transformation that takes $\vve_{\sct}$, $\vve_{\scx}$,
$\vve_{\scy}$, $\vve_{\scz}$ to $(1,q,0,0)$, $\vve_{\scx}$,
$\vve_{\scy}$, $\vve_{\scz}$, respectively.  Let $h$ be the linear
transformation that takes $\vve_{\sct}$, $\vve_{\scx}$, $\vve_{\scy}$,
$\vve_{\scz}$ to $(r_t,r_x,0,0)$, $(r_x,r_t,0,0)$, $\vve_{\scy}$,
$\vve_{\scz}$, respectively. Then, using Propositions~\ref{prop:Eucl},
\ref{prop:Gal}, and \ref{prop:Poi}, it is easy to check that
$f\in\Eucl_{\MQ}$, $g\in\Gal_{\MQ}$ and $h\in\Poi_{\MQ}$.
\end{proof}

\begin{thm}
\label{thm:szendvics1}
 Let $\C$ be a cloud. Then
\begin{align*}
  \Trivi_{\MQ} <& \Eucli_{\MQ}(\C) \subset \Eucli_{\MQ},\\
  \Trivi_{\MQ} <& \Gali_{\MQ}(\C)< \Gali_{\MQ},\\
  \Trivi_{\MQ} <& \Poii_{\MQ}(\C)< \Poii_{\MQ}.
\end{align*}
Moreover, there is a subcloud $\C' \subset \C$ such that 
\begin{align*}
  \Trivi_{\MQ} <& \Eucli_{\MQ}(\C') < \Eucli_{\MQ}(\C) \subset \Eucli_{\MQ},\\
  \Trivi_{\MQ} <& \Gali_{\MQ}(\C')  < \Gali_{\MQ}(\C)  < \Gali_{\MQ},\\
  \Trivi_{\MQ} <& \Poii_{\MQ}(\C')  < \Poii_{\MQ}(\C)  < \Poii_{\MQ}.
\end{align*}
\end{thm}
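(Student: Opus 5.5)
Four facts need to be established for each of the three families: the sets $\Eucli_{\MQ}(\C)$, $\Gali_{\MQ}(\C)$, $\Poii_{\MQ}(\C)$ are groups (except in the Euclidean case, where only set inclusion is claimed); they contain $\Trivi_{\MQ}$; they contain something non-trivial; and they omit something. Throughout, $\Tran_{\MQ}$ is normalized by linear maps, i.e.\ $L\circ\tau\circ L^{-1}$ is again a translation. Hence $\Tran_{\MQ}\circ\Lin\G(\C)$ is a group whenever $\Lin\G(\C)$ is a group.

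\textbf{Group property.} For the Galilean and Poincar\'e cases, Proposition~\ref{prop:ballpres} rewrites $\Lin\Gali_{\MQ}(\C)$ and $\Lin\Poii_{\MQ}(\C)$ as stabilizers $\{f\in\Lin\G:\ f[B_\C]=B_\C\}$ inside the groups $\Lin\Gali_{\MQ}$ and $\Lin\Poii_{\MQ}$. A stabilizer of a set is a subgroup, so these are groups. In the Euclidean case I only need the set inclusion $\Eucli_{\MQ}(\C)\subseteq\Eucli_{\MQ}$, which holds by definition.

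\textbf{Containing the trivial transformations.} Every linear orthochronous trivial map fixes $\vve_\sct$, by the argument of Lemma~\ref{lem:trivi}: it sends $\taxis$ to a vertical line through the origin, hence to $\taxis$, with positive time component and unit length. Since $\vve_\sct\in B_\C$, every $\varphi\in\Trivi_{\MQ}$, written as a translation composed with a linear part, lies in each of the three $(\C)$-sets.

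\textbf{Strictness.} The key construction is to choose suitable parameters and apply Lemma~\ref{lem:therearetransformations}.
\begin{itemize}
\item For strictness over $\Trivi_{\MQ}$, pick $0\neq\varepsilon\in\C$ and use the Galilean $g$ with $g(\vve_\sct)=(1,\varepsilon,0,0)$, which lies in $B_\C$.
\item For the Euclidean case, use $f(\vve_\sct)=(p_t,p_x,0,0)$ with $p_x=2\varepsilon/(1+\varepsilon^2)$ and $p_t=(1-\varepsilon^2)/(1+\varepsilon^2)$. This is a rational parametrization of the circle, so no square roots are needed.
\item For the Poincar\'e case, use $r_x=2\varepsilon/(1-\varepsilon^2)$ and $r_t=(1+\varepsilon^2)/(1-\varepsilon^2)$.
\end{itemize}
The differences $p_t-1$, $p_x$, $r_t-1$, $r_x$ are infinitesimal multiples of $\varepsilon$ times limited quantities. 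By Lemma~\ref{lem:cloud} they lie in $\C$, so Proposition~\ref{prop:infball} puts these images in $B_\C$. All of these maps are orthochronous, since $p_t,r_t>0$, and they are non-trivial because they move $\taxis$ off the vertical.

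For $(\C)$-set $<$ full group, choose the same maps with a parameter $\delta\notin\C$. Take $q=1$ for $\Gali$. Take $\delta=1$ for the Euclidean case, giving $(0,1,0,0)$; this is not orthochronous, so instead I take $p_x=\tfrac{4}{5}$, $p_t=\tfrac35$. Take $\delta=\tfrac12$ for the Poincar\'e case. The resulting images differ from $\vve_\sct$ by a non-infinitesimal amount, so by Proposition~\ref{prop:infball} they are not in $B_\C$. This gives elements of the full group lying outside the $(\C)$-set. Here I must check that membership in the $(\C)$-set is decided by the linear part alone. That holds because the decomposition into translation and linear map is unique.

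\textbf{Subcloud.} By Corollary~\ref{cor:infinitely-many-clouds}'s proof, for $0\neq\varepsilon\in\C$, the cloud $\C'=\C_{\varepsilon^2}$ satisfies $\C'\subseteq\C$ (since $\C$ contains $\varepsilon^2$ and is closed as a cloud) and $\varepsilon\notin\C'$. The inclusion $\Lin\G(\C')\subseteq\Lin\G(\C)$ follows from $B_{\C'}\subseteq B_\C$. For strictness, I reuse the maps above built from $\varepsilon$. Their images lie in $B_\C$, but their $x$-coordinate is at least of size $|\varepsilon|$ times a non-infinitesimal factor, hence not in $\C'$, so the images are not in $B_{\C'}$.

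\textbf{Expected obstacle.} The main difficulty is the bookkeeping that decides whether these explicit rational points lie in a given ball, using only the cloud axioms. For example, I must show that $\varepsilon^2/(1+\varepsilon^2)$ is in $\C$ while $2\varepsilon/(1+\varepsilon^2)\notin\C_{\varepsilon^2}$. Both follow from Lemma~\ref{lem:cloud} and Lemma~\ref{lem:smallest-cloud}, but they must be checked carefully.
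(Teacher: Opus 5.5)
You handle the Galilean and Poincar\'e cases, the explicit rational witnesses, the choice $\C'=\C_{\varepsilon^2}$, and the passage from linear parts to affine maps correctly. The Euclidean case, however, has a genuine gap. You say that there ``only set inclusion is claimed'', and this misreads the statement. Only the final $\Eucli_{\MQ}(\C)\subset\Eucli_{\MQ}$ is a mere set inclusion, because $\Eucli_{\MQ}$ is not a group. By contrast, $\Trivi_{\MQ}<\Eucli_{\MQ}(\C)$ and $\Eucli_{\MQ}(\C')<\Eucli_{\MQ}(\C)$ are proper-subgroup relations, so they assert that $\Eucli_{\MQ}(\C)$ and $\Eucli_{\MQ}(\C')$ are groups. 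That is the one non-trivial point of the Euclidean line. It is also exactly what the main theorem needs, since Theorem~\ref{thm:model} applies only to groups. Your stabilizer argument does not cover it, because $\Lin\Eucli_{\MQ}$ is not a group. For example, the rotation $R$ in the $t$--$x$ plane with $R(\vve_{\sct})=(\tfrac{3}{5},\tfrac{4}{5},0,0)$ lies in $\Lin\Eucli_{\MQ}$, but $(R\circ R)(\vve_{\sct})=(-\tfrac{7}{25},\tfrac{24}{25},0,0)$ is not orthochronous. So you must show that composing orthochronous Euclidean maps that move $\vve_{\sct}$ only infinitesimally never leaves the orthochronous class.

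The repair is short. Let $S=\{f\in\Lin\Eucl_{\MQ}: f[B_{\C}]=B_{\C}\}$; this is a subgroup of the group $\Lin\Eucl_{\MQ}$.
\begin{itemize}
\item Proposition~\ref{prop:ballpres} gives $\Lin\Eucli_{\MQ}(\C)\subseteq S$.
\item Conversely, if $f\in S$, then $f(\vve_{\sct})\in B_{\C}$. Its time component is therefore $1+t$ with $t\in\C\subseteq\E$, which is positive. So $f$ is orthochronous and $f\in\Lin\Eucli_{\MQ}(\C)$.
\end{itemize}
Hence $\Lin\Eucli_{\MQ}(\C)=S$ is a group, and your normalization remark then makes $\Eucli_{\MQ}(\C)$, and likewise $\Eucli_{\MQ}(\C')$, a group. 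This is essentially the paper's argument, which checks directly that $f\circ g$ and $f^{-1}$ map $B_{\C}$ onto itself and are therefore orthochronous. With this step added, the rest of your proof goes through. Your witness $(\tfrac{3}{5},\tfrac{4}{5},0,0)$ for $\Eucli_{\MQ}(\C)\neq\Eucli_{\MQ}$ is a valid alternative to the paper's appeal to the non-closure of $\Eucli_{\MQ}$ under composition.
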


\begin{proof}
  To prove the first part of the theorem, it is enough to show that
\begin{align*}
   \Lin\Trivi_{\MQ} <& \Lin\Eucli_{\MQ}(\C) \subset \Lin\Eucli_{\MQ},
   \\ \Lin\Trivi_{\MQ} <& \Lin\Gali_{\MQ}(\C) < \Lin\Gali_{\MQ},
   \\ \Lin\Trivi_{\MQ} <& \Lin\Poii_{\MQ}(\C) < \Lin\Poii_{\MQ},
\end{align*}
because $\Trf=\Tran_{\MQ}\circ\Lin\Trf$ for each of the relevant
transformation groups $\Trf$.
  
Since $\Lin\Gali_{\MQ}$ and $\Lin\Poii_{\MQ}$ are groups, we have
\begin{equation*}
  \Lin\Gali_{\MQ}(\C)\le\Lin\Gali_{\MQ}\ \text{ and }\
  \Lin\Poii_{\MQ}(\C)\le\Lin\Poii_{\MQ}
\end{equation*}
by Proposition~\ref{prop:ballpres}.  By
Lemma~\ref{lem:therearetransformations}, there are
$g\in\Lin\Gali_{\MQ}$ and $h\in\Lin\Poii_{\MQ}$ such that
$g(\vve_\sct)=(1,1,0,0)$ and
$h(\vve_\sct)=\left(\frac{5}{3},\frac{4}{3},0,0\right)$.  Clearly for
such $g$ and $h$, we have $g\not\in\Lin\Gali_{\MQ}(\C)$ and
$h\not\in\Lin\Poii_{\MQ}(\C)$ because $(1,1,0,0),
\left(\frac{5}{3},\frac{4}{3},0,0\right)\not\in B_\C$.  Therefore,
$\Lin\Gali_{\MQ}(\C)<\Lin\Gali_{\MQ}$ and
$\Lin\Poii_{\MQ}(\C)<\Lin\Poii_{\MQ}$.

Since $\Lin\Eucli_{\MQ}$ does not form a group, we need to adopt a
different approach to prove that $\Lin\Eucli_{\MQ}(\C)$ is a group.
Let $f,g \in \Lin\Eucli_{\MQ}(\C)$.  Then $f\circ g\in\Lin\Eucl_{\MQ}$
and $f^{-1}\in\Lin\Eucl_{\MQ}$.  By Proposition~\ref{prop:ballpres},
we have $(f\circ g)[B_\C]=B_\C$ and $f^{-1}[B_\C]=B_\C$. Thus we have
$(f\circ g)(\vve_{\sct})_t>0$ and $f^{-1}(\vve_{\sct})_t>0$ because
$(f\circ g)(\vve_{\sct})\in B_\C$, $f^{-1}(\vve_{\sct})\in B_\C$ and
$\C\subseteq\E$.  Hence $f\circ g\in\Lin\Eucli_{\MQ}(\C)$ and
$f^{-1}\in\Lin\Eucli_{\MQ}(\C)$.  Consequently, $\Lin\Eucli_{\MQ}(\C)$
is a group and $\Lin\Eucli_{\MQ}(\C)\subseteq\Lin\Eucli_{\MQ}$. Hence
$\Lin\Eucli_{\MQ}(\C)\subset\Lin\Eucli_{\MQ}$ as $\Lin\Eucli_{\MQ}$ is
not closed under composition.

$\Lin\Trivi_{\MQ}$ is a subgroup of $\Lin\Eucli_{\MQ}(\C)$,
$\Lin\Gali_{\MQ}(\C)$ and $\Lin\Poii_{\MQ}(\C)$ because
$f(\vve_{\sct})=\vve_{\sct}$ for every $f\in\Lin\Trivi_{\MQ}$.
   
To prove that $\Lin\Trivi_{\MQ}$ is a proper subgroup of these groups,
let $\varepsilon\in\C$ be such that $\varepsilon>0$.  Let
\begin{equation*}
\textstyle
p_t =
  \frac{2\varepsilon+1}{2\varepsilon^2+2\varepsilon+1},
  \quad
p_x =
  \frac{2\varepsilon^2+2\varepsilon}{2\varepsilon^2+2\varepsilon+1},
  \quad
r_t =
  \frac{2\varepsilon^2+2\varepsilon+1}{2\varepsilon+1}, 
  \quad
r_x = 
  \frac{2\varepsilon^2+2\varepsilon}{2\varepsilon+1}.
\end{equation*}
and notice that $p_t^2 + p_x^2 = r_t^2 - r_x^2 = 1$, because
$(2\varepsilon^2+2\varepsilon+1)^2-(2\varepsilon^2+2\varepsilon)^2
=(2\varepsilon+1)^2$.

By Lemma~\ref{lem:therearetransformations}, there exist
transformations $f\in\Lin\Eucli_{\MQ}$, $g\in \Lin\Gali_{\MQ}$ and
$h\in\Lin\Poii_{\MQ}$ such that
\[
  f(\vve_{\sct})=(p_t,p_x,0,0),\quad 
  g(\vve_{\sct}) = (1,\varepsilon,0,0),\quad
  h(\vve_{\sct})=(r_t,r_x,0,0).
\]

It now follows, by Proposition~\ref{prop:infball}, that
$f(\vve_{\sct}),g(\vve_{\sct}),h(\vve_{\sct})\in B_{\C}$. To see this, note that
$0<\varepsilon^2<\varepsilon<1$, whence
\begin{itemize}
\item
  $\textstyle 1-2\varepsilon<1-2\varepsilon^2<1-\frac{2\varepsilon^2}{2\varepsilon^2+2\varepsilon+1}=\frac{2\varepsilon+1}{2\varepsilon^2+2\varepsilon+1}=p_t$ and $p_t<1$;
\item
  $\textstyle 0<p_x$ and $p_x=\frac{2\varepsilon^2+2\varepsilon}{2\varepsilon^2+2\varepsilon+1}<2\varepsilon^2+2\varepsilon<4\varepsilon$,
\end{itemize} 
so that $1-2\varepsilon < p_t < 1$ and $0 < p_x < 4\varepsilon$. Similarly,
\begin{itemize}
\item $\textstyle 1<r_t=\frac{2\varepsilon^2+2\varepsilon+1}{2\varepsilon+1}<
2\varepsilon^2+2\varepsilon+1<1+4\varepsilon$;
\item $\textstyle 0<r_x=\frac{2\varepsilon^2+2\varepsilon}{2\varepsilon+1}
<2\varepsilon^2+2\varepsilon<4\varepsilon$
\end{itemize}
whence $1<r_t<1+4\varepsilon$ and $0<r_x<4\varepsilon$.

Therefore, $f\in\Lin\Eucli_{\MQ}(\C)$, $g\in\Lin\Gali_{\MQ}(\C)$ and
$h\in \Lin\Poii_{\MQ}(\C)$, but $f,g,h\not\in\Lin\Trivi_{\MQ}$ because
$f(\vve_{\sct})_s\neq (0,0,0)$, $g(\vve_{\sct})_s\neq (0,0,0)$ and
$h(\vve_{\sct})_s\neq (0,0,0)$. Thus $\Lin\Trivi_{\MQ}$ is a proper
subgroup of groups $\Lin\Eucli_{\MQ}(\C)$, $\Lin\Gali_{\MQ}(\C)$ and
$\Lin\Poii_{\MQ}(\C)$.

Finally, let $\C'$ be any cloud that does not contain
$\varepsilon$. As we saw in the proof of
Corollary~\ref{cor:infinitely-many-clouds} we can take, \eg $\C' =
\C_{\varepsilon^2}$.  To complete the present proof, it only remains
to prove that $\Eucli_{\MQ}(\C')$, $\Gali_{\MQ}(\C')$ and
$\Poii_{\MQ}(\C')$ are proper subgroups of $\Eucli_{\MQ}(\C)$,
$\Gali_{\MQ}(\C)$ and $\Poii_{\MQ}(\C)$, respectively. Moreover, to
prove this, it is enough to show that $f(\vve_{\sct}), g(\vve_{\sct}),
h(\vve_{\sct}) \not\in B_{\C'}$.

We have
\begin{equation*}
  p_x=\frac{2\varepsilon^2+2\varepsilon}{2\varepsilon^2+2\varepsilon+1}>\frac{2\varepsilon}{2\varepsilon^2+2\varepsilon+1}>\frac{2\varepsilon}{2}>\varepsilon
\end{equation*}
and similarly 
\begin{equation*}
  r_x= \frac{2\varepsilon^2+2\varepsilon}{2\varepsilon+1}> \frac{2\varepsilon}{2\varepsilon+1}> \frac{2\varepsilon}{2}>\varepsilon.
\end{equation*}
Thus $p_x,r_x\not\in\C'$ as $\varepsilon\not\in\C'$.
Hence, by Proposition~\ref{prop:infball}, $f(\vve_{\sct}), g(\vve_{\sct}), h(\vve_{\sct}) \not\in B_{\C'}$. 
\end{proof}

Changing the units used to express spatial distance is represented by the following
\semph{space scaling bijections}. Given any $\lambda>0$, we write
$S_\lambda \colon \Q^4\to\Q^4$ for the map:
\begin{equation*}
  S_\lambda \colon (t,x,y,z)\mapsto (t,\lambda x,\lambda y, \lambda z),
\end{equation*}
and note that $S_{\lambda}^{-1}=S_{\frac{1}{\lambda}}$.

\begin{prop}\label{prop:conjugation}
  For every $\cc>0$,  
\begin{equation*}
  \begin{aligned}[c]
    S_\cc \circ\Eucli_{\MQ} \circ S_\cc^{-1} &=\cEucli_{\MQ},\\
        S_\cc \circ\Trivi_{\MQ} \circ S_\cc^{-1} &=\Trivi_{\MQ},
  \end{aligned}
  \qquad\quad
  \begin{aligned}[c]
   S_\cc \circ\Poii_{\MQ} \circ S_\cc^{-1} &=\cPoii_{\MQ},\\
    S_\cc \circ\Gali_{\MQ} \circ S_\cc^{-1} &=\Gali_{\MQ}.
   \end{aligned}
\end{equation*}
\end{prop}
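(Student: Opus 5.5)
The proof is a direct verification. It rests on two facts: conjugation by $S_\cc$ preserves affineness, orthochronicity and verticality of lines, and it transforms the relevant bilinear forms in a controlled way.

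First we reduce to linear maps. For a translation $\tau_{\vvv}$ we have $S_\cc\circ\tau_{\vvv}\circ S_\cc^{-1}=\tau_{S_\cc\vvv}$, and $S_\cc$ is a linear bijection. Hence conjugation by $S_\cc$ maps $\Tran_{\MQ}\circ\Lin\Trf$ onto $\Tran_{\MQ}\circ S_\cc\circ\Lin\Trf\circ S_\cc^{-1}$. It therefore suffices to prove each equality for the linear parts.

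Next we check that orthochronicity is preserved in both directions. Since $S_\cc(\vve_\sct)=\vve_\sct$ and $S_\cc^{-1}$ fixes the time component, we get $(S_\cc L S_\cc^{-1}\vve_\sct)_t=(L\vve_\sct)_t$. The same argument applies to the inverse conjugation by $S_\cc^{-1}$.

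For the key identity, let $\vvp\in\Q^4$ and note that $S_\cc^{-1}\vvp=(p_t,\vvp_s/\cc)$. We claim
\[
  \|S_\cc^{-1}\vvp\|_1^2=\cc^{-2}\,\|\vvp\|^2_\cc
  \quad\text{and}\quad
  \|S_\cc^{-1}\vvp\|_{1,\mu}^2=\cc^{-2}\,\|\vvp\|^2_{\cc,\mu}.
\]
Indeed, $p_t^2+|\vvp_s|^2/\cc^2=\cc^{-2}(\cc^2p_t^2+|\vvp_s|^2)$, and the Minkowski case is the same computation with a minus sign.

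Now suppose $L$ is linear and preserves $\|\cdot\|_1^2$. Then for any $\vvp$,
\[
  \|S_\cc L S_\cc^{-1}\vvp\|^2_\cc=\cc^2\|LS_\cc^{-1}\vvp\|_1^2=\cc^2\|S_\cc^{-1}\vvp\|_1^2=\|\vvp\|^2_\cc,
\]
so $S_\cc L S_\cc^{-1}\in\cEucl_{\MQ}$ by Proposition~\ref{prop:Eucl}. Conversely, if $M$ preserves $\|\cdot\|_\cc^2$, then $S_\cc^{-1}MS_\cc$ preserves $\|\cdot\|_1^2$, which gives the reverse inclusion. The Poincaré case is identical, using Proposition~\ref{prop:Poi}.

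For $\Gali_{\MQ}$ we use Proposition~\ref{prop:Gal}(iii). Conjugation does not change the time component of the image, because $S_\cc$ fixes time coordinates. For $p_t=0$ we have
\[
  |S_\cc L S_\cc^{-1}\vvp|^2=\cc^2|L S_\cc^{-1}\vvp|^2=\cc^2|S_\cc^{-1}\vvp|^2=|\vvp|^2 .
\]
Here $(LS_\cc^{-1}\vvp)_t=0$, so the scaling acts purely on the space part. The converse follows symmetrically with $\cc^{-1}$ in place of $\cc$.

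For $\Trivi_{\MQ}$, the linear part of a trivial map is a Galilean linear map fixing $\vve_\sct$ up to sign. Indeed, it preserves Euclidean length and maps $\taxis$ to a vertical line through the origin, which forces $L\vve_\sct=\pm\vve_\sct$ and $L$ to preserve the spatial subspace. So $L=\pm1\oplus R$ with $R$ orthogonal, and $S_\cc$ commutes with every such map. Hence conjugation fixes $\Lin\Trivi_{\MQ}$ pointwise. Together with the translation step, this gives the equality for $\Trivi_{\MQ}$.

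The only mildly delicate point is this block-diagonal description of trivial linear maps. It follows from $L\vve_\sct\in\taxis$ and orthogonality, as in the proof of Lemma~\ref{lem:trivi}. Everything else is routine computation.
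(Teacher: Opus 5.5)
Your proof is correct. For the Euclidean, Poincar\'e and Galilean equalities it is essentially the paper's argument. You reduce to linear parts via $S_\cc\circ\tau_{\vvv}\circ S_\cc^{-1}=\tau_{S_\cc\vvv}$. You then use the scaling identity $\|S_\cc\vvp\|^2_\cc=\cc^2\|\vvp\|^2_1$ and its Minkowskian analogue together with Propositions~\ref{prop:Eucl} and~\ref{prop:Poi}. For $\Gali_{\MQ}$ you check condition (iii) of Proposition~\ref{prop:Gal} directly.

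Where you genuinely differ is the case of $\Trivi_{\MQ}$. The paper argues indirectly. First, $S_\cc\circ\Trivi_{\MQ}\circ S_\cc^{-1}\subseteq\cEucli_{\MQ}$ by the already-established Euclidean equality. Second, the conjugates still take vertical lines to vertical lines because $S_\cc$ does, so Lemma~\ref{lem:trivi} puts them back in $\Trivi_{\MQ}$. The reverse inclusion is the same argument with $S_{1/\cc}$.

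You instead determine the structure of $\Lin\Trivi_{\MQ}$ explicitly. You show $L\vve_\sct=\pm\vve_\sct$ and that $L$ acts orthogonally on the spatial subspace, so $L=\pm1\oplus R$, and such maps commute with $S_\cc$. This is more elementary and gives a sharper fact: conjugation by $S_\cc$ fixes every linear trivial map and only rescales the spatial part of the translation vector. It also makes the reverse inclusion immediate. The paper's route avoids computing the block form by reusing Lemma~\ref{lem:trivi}, which it needs elsewhere anyway.
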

\begin{proof}
Since for all $T\in\Tran_{\MQ}$ and $\cc>0$, there is
$T'\in\Tran_{\MQ}$ such that $S_{\cc}\circ T=T'\circ S_{\cc}$, it is
enough to check the statements for linear transformations.  For all
$\cc>0$ and $\vvp\in\Q^4$,
  \begin{equation*}
    \|S_\cc \vvp\|^2_\cc= \cc^2\|\vvp\|^2_1\quad \text{ and }\quad   \|S_\cc \vvp\|^2_{\cc,\mu}= \cc^2\|\vvp\|^2_{1,\mu}.
  \end{equation*}
Therefore, if $L\in\Lin\Eucli_{\MQ}$, then by
Proposition~\ref{prop:Eucl},
  \begin{equation*}
    \left\|S_\cc L S_{\frac{1}{\cc}}\vvp\right\|^2_\cc=\cc^2\left\|L S_{\frac{1}{\cc}}\vvp\right\|^2_1=\cc^2\left\|S_{\frac{1}{\cc}}\vvp \right\|^2_1=\left\|S_\cc S_{\frac{1}{\cc}}\vvp \right\|^2_\cc=\|\vvp\|^2_\cc.
  \end{equation*}
Thus, by Proposition~\ref{prop:Eucl}, $S_\cc L
S_\cc^{-1}\in\Lin\cEucl_{\MQ}$.  If we compose any orthochronous
transformation from left or right with $S_\lambda$, we get an
orthochronous transformation. Therefore, we also have $S_\cc L
S_\cc^{-1}\in\Lin\cEucli_{\MQ}$. Therefore, $S_\cc \circ\Eucli_{\MQ}
\circ S_\cc^{-1} \subseteq\cEucli_{\MQ}.$ Similarly, by
Proposition~\ref{prop:Eucl}, if $L\in\Lin\cEucli_{\MQ}$, then
  \begin{equation*}
 \cc^2\left\|S_{\frac{1}{\cc}} L S_\cc\vvp\right\|^2_1=\left\|S_{\cc}S_{\frac{1}{\cc}}L S_{\cc} \vvp\right\|^2_\cc=\left\|L S_{\cc} \vvp\right\|^2_\cc=\left\|S_{\cc}\vvp \right\|^2_\cc=\cc^2\|\vvp\|^2_1.
  \end{equation*}
So $S_{\cc}^{-1} L S_\cc\in\Lin\Eucli_{\MQ}$ and thus
$S_\cc^{-1}\circ\cEucli_{\MQ}\circ S_\cc\subseteq\Eucli_{\MQ}$, which
is equivalent to $\cEucli_{\MQ}\subseteq S_\cc \circ\Eucli_{\MQ} \circ
S_\cc^{-1}$.  Consequently,
\begin{equation}
\label{eq:euclidean}
S_\cc \circ\Eucli_{\MQ} \circ S_\cc^{-1} =\cEucli_{\MQ}.
\end{equation}
An analogous proof based on Proposition~\ref{prop:Poi} using
$\|\ldots\|^2_{\cc,\mu}$ in place of $\|\ldots\|^2_\cc$ shows that
$$S_\cc \circ\Poii_{\MQ} \circ S_\cc^{-1} =\cPoii_{\MQ}.$$

Since $S_\lambda$ takes vertical lines to vertical lines for any
$\lambda$, elements of $S_\cc\circ\Trivi_{\MQ}\circ S_\cc^{-1}$ take
vertical lines to vertical ones. On the other hand,
$S_\cc\circ\Trivi_{\MQ}\circ S_\cc^{-1}\subseteq \cEucli_{\MQ}$ by
$\Trivi_{\MQ}\subseteq\Eucli_{\MQ}$ and \eqref{eq:euclidean}.
Therefore, by Lemma~\ref{lem:trivi}, $S_\cc\circ\Trivi_{\MQ}\circ
S_\cc^{-1}\subseteq\Trivi_{\MQ}$. Analogously,
$S_\cc^{-1}\circ\Trivi_{\MQ}\circ S_\cc=
S_{\frac{1}{\cc}}\circ\Trivi_{\MQ}\circ S_{\frac{1}{\cc}}^{-1}
\subseteq\Trivi_{\MQ}$. Consequently,
\[
S_\cc \circ\Trivi_{\MQ} \circ S_\cc^{-1} =\Trivi_{\MQ}.
\]

Assume $L\in\Lin\Gali_{\MQ}$. Let $\vvp=(p_t,p_x,p_y,p_z)\in\Q^4$. 
Then, we have that
\[
(S_\cc L S_\cc^{-1}\vvp)_t = p_t
\]
since $(L\vvq)_t =q_t$ and $(S_\cc \vvq)_t=q_t$ for any $\vvq\in\Q^4$.
If $p_t=0$, then
\begin{equation*}
  \sqed{S_\cc L S_\cc^{-1} (0,p_x,p_y,p_z)}= \sqed{S_\cc
  L\left(0,\frac{p_x}{\cc},\frac{p_y}{\cc},\frac{p_z}{\cc}\right)} =\\
\cc^2\sqed{L\left(0,\frac{p_x}{\cc},\frac{p_y}{\cc},\frac{p_z}{\cc}\right)}
\end{equation*}
since $L\left(0,\frac{p_x}{\cc},\frac{p_y}{\cc},\frac{p_z}{\cc}\right)_t=0$ as $L\in\Lin\Gal_{\MQ}$ and  $\sqed{S_\cc\vvq}=\cc^2\sqed{\vvq}$ for every $\vvq\in\Q^4$ with $q_t=0$. Then
\begin{equation*}
\cc^2\sqed{L\left(0,\frac{p_x}{\cc},\frac{p_y}{\cc},\frac{p_z}{\cc}\right)}=\cc^2
\sqed{\left(0,\frac{p_x}{\cc},\frac{p_y}{\cc},\frac{p_z}{\cc}\right)}=
\sqed{(0,p_x,p_y,p_z)}
\end{equation*}
since $L\in\Lin\Gal_{\MQ}$.
Consequently,
\[
p_t=0\implies \sqed{S_\cc L S_\cc^{-1}\vvp}=\sqed{\vvp}.
\]

Therefore, by Proposition~\ref{prop:Gal}, $S_\cc L S_\cc^{-1}
\in\Gali_{\MQ}$ and analogously $S_\cc^{-1}L S_\cc=S_{\frac{1}{c}}L
S_\frac{1}{c}^{-1}\in\Gali_{\MQ}$. Therefore,
\begin{equation*}
  S_\cc \circ\Gali_{\MQ} \circ S_\cc^{-1} =\Gali_{\MQ}. 
\end{equation*}
\end{proof}

For every $\cc>0$ let us introduce the following sets of affine
transformations:
\begin{align*}
 \cEucli_{\MQ}(\C)&\de S_{\cc}\circ \Eucli_{\MQ}(\C)\circ S_\cc^{-1},\\
 \cGali_{\MQ}(\C)&\de  S_{\cc} \circ \Gali_{\MQ}(\C)\circ S_\cc^{-1},\\
  \cPoii_{\MQ}(\C)&\de  S_{\cc}\circ\Poii_{\MQ}(\C)\circ S_\cc^{-1}.
\end{align*}

\begin{thm}
\label{thm:szendvics2}
Let $\C$ be a cloud. Then
there is a subcloud $\C' \subset \C$ such that
\begin{align*}
  \Trivi_{\MQ} <& \cEucli_{\MQ}(\C') < \cEucli_{\MQ}(\C) \subset
  \cEucli_{\MQ},\\ \Trivi_{\MQ} <& \cGali_{\MQ}(\C') <
  \cGali_{\MQ}(\C)< \Gali_{\MQ},\\ \Trivi_{\MQ} <& \cPoii_{\MQ}(\C')<
  \cPoii_{\MQ}(\C)< \cPoii_{\MQ}.
 \end{align*}
\end{thm}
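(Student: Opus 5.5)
The plan is to transport Theorem~\ref{thm:szendvics1} along the conjugation map $\Phi_\cc\colon T\mapsto S_\cc\circ T\circ S_\cc^{-1}$. Since $S_\cc$ is a bijection of $\Q^4$, $\Phi_\cc$ is a group automorphism of $\Sym(\Q^4)$. Hence it preserves composition, inverses, subgroup relations, set inclusion, and properness of inclusion: if $A\subsetneq B$ then $\Phi_\cc[A]\subsetneq\Phi_\cc[B]$, because $\Phi_\cc$ is injective.

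First I would take the subcloud $\C'\subset\C$ supplied by Theorem~\ref{thm:szendvics1}; concretely, $\C'=\C_{\varepsilon^2}$ for some positive $\varepsilon\in\C$. That theorem gives
\[
\Trivi_{\MQ}<\Eucli_{\MQ}(\C')<\Eucli_{\MQ}(\C)\subset\Eucli_{\MQ},
\]
together with the analogous chains for $\Gali$ and $\Poii$. Applying $\Phi_\cc$ to every member of each chain, I get
\[
\Phi_\cc[\Trivi_{\MQ}]<\cEucli_{\MQ}(\C')<\cEucli_{\MQ}(\C)\subset\Phi_\cc[\Eucli_{\MQ}]
\]
and the analogous chains in the other two cases. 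This uses only the definitions of $\cEucli_{\MQ}(\C)$, $\cGali_{\MQ}(\C)$ and $\cPoii_{\MQ}(\C)$ as conjugates, and the automorphism property of $\Phi_\cc$.

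Second, I would identify the endpoints using Proposition~\ref{prop:conjugation}. It gives $\Phi_\cc[\Trivi_{\MQ}]=\Trivi_{\MQ}$, $\Phi_\cc[\Eucli_{\MQ}]=\cEucli_{\MQ}$, $\Phi_\cc[\Gali_{\MQ}]=\Gali_{\MQ}$ and $\Phi_\cc[\Poii_{\MQ}]=\cPoii_{\MQ}$. Substituting these yields exactly the three displayed chains. In particular the Galilean chain ends in $\Gali_{\MQ}$ rather than a $\cc$-indexed group, which matches the statement.

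The only point needing care is the Euclidean chain, because $\Eucli_{\MQ}$ and $\cEucli_{\MQ}$ are not groups. There I must make sure I only transport the properties that actually hold. The relation $\Trivi_{\MQ}<\Eucli_{\MQ}(\C)$ is a genuine subgroup relation, since $\Eucli_{\MQ}(\C)$ is a group by Theorem~\ref{thm:szendvics1}, so its image is again a proper subgroup relation. The last relation $\Eucli_{\MQ}(\C)\subset\Eucli_{\MQ}$ is only a proper set inclusion, and $\Phi_\cc$ preserves proper set inclusion as a bijection on subsets of $\Sym(\Q^4)$. This bookkeeping is the main, fairly minor, obstacle; everything else is direct.
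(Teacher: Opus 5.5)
Your proposal is correct and follows the paper's own proof, which simply declares the result immediate from Theorem~\ref{thm:szendvics1} and Proposition~\ref{prop:conjugation}. You make the implicit step explicit: conjugation by $S_\cc$ is an injective automorphism of $\Sym(\Q^4)$, so it carries the chains of Theorem~\ref{thm:szendvics1} to the displayed ones, including the mere proper set inclusion in the Euclidean case, and Proposition~\ref{prop:conjugation} identifies the endpoints.
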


\begin{proof}
Immediate from Theorem~\ref{thm:szendvics1} and Proposition~\ref{prop:conjugation}.  
\end{proof}

\begin{proof}[Proof of Thm.\ref{thm:non-arch}]\label{proof:non-arch:jud}
  By Theorem~\ref{thm:szendvics2}, there are strictly descending
  countably infinite chains of subgroups such that
  \begin{align*}
    \Trivi_{\MQ} &< \ldots < \G^E_i <\ldots<\G^E_1<\G^E_0 \subset
    \cEucli_{\MQ},\\
    \Trivi_{\MQ} &< \ldots < \G^G_i <\ldots<\G_1^G<\G_0^G
    <\Gali_{\MQ},\\
    \Trivi_{\MQ} &< \ldots < \G^P_i
    <\ldots<\G^P_1<\G^P_0<\cPoii_{\MQ}.
  \end{align*}
  Groups $\G^E_i$'s, $\G^G_i$'s, and $\G^P_i$'s satisfy assumptions
  \ref{awtrf1} and \ref{awtrf2} of Theorem~\ref{thm:model}, and by
  Lemma~\ref{lem:trivi}, they also satisfy assumption \ref{awtrf3}.
  Let $\M_i^E =\M(\G_i^E)$, $\M_i^G =\M(\G_i^G)$ and
  $\M_i^P =\M(\G_i^P)$. Then, by Theorem~\ref{thm:model} and
  Proposition~\ref{prop:egyszeru}, the various $\M_i^E$, $\M_i^G$ and
  $\M_i^P$ satisfy the statement of Theorem~\ref{thm:non-arch}.
\end{proof}

\section{Open questions}
${}$\\

According to Theorem~\ref{thm:groupmodel}, whenever $\MQ$ is an ordered field and $\G$ is a group for which either
 \begin{align*}
    \Trivi_{\MQ} &< \G \subset \cEucli_{\MQ},\text{ or}\\
    \Trivi_{\MQ} &< \G \le       \Gali_{\MQ},\text{ or}\\
    \Trivi_{\MQ} &< \G \le       \cPoii_{\MQ},
 \end{align*}
there exists a model \M of \BAX over $\MQ$ such that $\W_{\M}=\G$.

By Corollary~\ref{cor-1}, if $\MQ=\R$, there are no groups strictly
between $\Trivi_{\MQ}$ and $\cEucli_{\MQ}$ ($\Gali_{\MQ}$,
$\cPoii_{\MQ}$). Contrary to this, if $\MQ$ is a non-Archimedean
field, then there are groups $\G_1$, $\G_2$, and $\G_3$ such that
\begin{align*}
    \Trivi_{\MQ} &< \G_1 \subset \cEucli_{\MQ}, \\
    \Trivi_{\MQ} &< \G_2 \le       \Gali_{\MQ},\text{ and}\\
    \Trivi_{\MQ} &< \G_3 \le       \cPoii_{\MQ}.
\end{align*}
Moreover, there are infinitely many such groups by
Theorem~\ref{thm:szendvics2}. These groups were constructed from
clouds in a very specific way. It is natural to ask whether all the
groups strictly between $\Trivi_{\MQ}$ and $\cEucli_{\MQ}$
($\Gali_{\MQ}$, $\cPoii_{\MQ}$) can be constructed from clouds the
same way. More precisely, does the following hold?
	
\begin{que}\label{question}
  Suppose $\G_1$, $\G_2$, and $\G_3$ are groups satisfying
\begin{align*}
	\Trivi_{\MQ} &< \G_1 \subset \cEucli_{\MQ}, \\
	\Trivi_{\MQ} &< \G_2 < \Gali_{\MQ}, \text{ and}\\
\Trivi_{\MQ} &< \G_3 < \cPoii_{\MQ}.  
\end{align*}
Do there exist clouds $\C_1$, $\C_2$, and $\C_3$ such that 
\begin{align*}
\G_1&=\cEucli_{\MQ}(\C_1),\\
\G_2&=\cGali_{\MQ}(\C_2), \\
\G_3&=\cPoii_{\MQ}(\C_3)?
\end{align*}
\end{que}

Even if the answer to Question~\ref{question} turns out to be negative in general, it would be interesting to discover whether such clouds exist for specific classes of ordered fields, \eg the class of real closed
fields.

It is also worth noting how the consequences of Borisov's assumptions sometimes change when we replace $\R$ with other ordered fields, and sometimes remain the same.

For example, according to Borisov's Theorem, when $\MQ$ is the field $\R$ there are only two possibilities for the set of spacetime worldview transformations: either $\W=\Gali_\R$, or $\W=\cPoii_\R$ for some $\cc>0$. As explained on page~\pageref{rem1}, however, this result remains valid if we replace $\R$ with any other Archimedean field, provided every positive number has a square root.  
\begin{que}
It remains an open question whether the existence of square roots is actually necessary in this situation.
\end{que}

On the other hand, while Theorem~\ref{thm:B0} shows that Borisov's Theorem remains valid if omit \ax{Axiom\,III}, nonetheless as explained in Remark~\ref{rem2}, if we simply retain the assumption of square roots without requiring $\MQ$ to be $\R$, we can show instead that either
\begin{align*}
\Trivi_{\MQ} &< \W\subset\cEucli_{\MQ}\text{ or}\\
\Trivi_{\MQ} &< \W\leq \Gali_{\MQ}\text{ or}\\
\Trivi_{\MQ} &< \W\leq\cPoii_{\MQ}\ \text{ for some } \cc>0,
\end{align*}
and this remains true even if we replace \ax{BA\,4} with the more
general assumption that the worldlines of inertial motions according
to inertial observers are lines (but not necessarily of finite slope).
	
\begin{que}
Once again, it remains an open question if these results remain valid if we omit the assumption
  that positive numbers have square roots.
\end{que}

\bibliographystyle{apalike}
\bibliography{LogRel2019}
\end{document}

%% file: framework.tikz
\begin{tikzpicture}[scale=1,>=latex]
\pgfmathsetmacro{\r}{0.04}

\begin{scope}[shift={(3.1,3.9)}]
\draw[fill=blue!13,blue!13] (0,0) ellipse [x radius=2,y radius=1.2] ;
\draw[blue, ultra thick] (-1.1,-1) .. controls  (0,-.8) and (0,0.8) ..  (1.1,1)node[below ]{$\wl(i')$} ;
\draw[green!80!black, ultra thick] (0,-1.2) .. controls  (0,-.8) and (0,0.8) ..  (-1.1,1) node[below ]{$\wl(i)$}  ;
\draw[red!95!black, ultra thick] (1.1,-1)  node[above]{$\wl(i'')$} .. controls  (0,-.8) and (0,0.8) ..  (0,1.2);
\draw (-1.1,1.1) node[above left] {$\Ev$};
\draw[fill] (0.1,0.16) circle (\r); 
\draw[fill] (-0.1,-0.16) circle (\r); 
\end{scope}

\draw[very thick, ->] (5.4,3.8) to[out=0,in=100] node[above right]{$\Co_{m}$} (7.4,2.3);
\draw[very thick, ->] (3,2.5) to[out=-90,in=90] node[right]{$\Co_{h}$} (3.,-.3);
\draw[very thick, ->] (0.9,3.7) to[out=180,in=80] node[above left]{$\Co_{k}$} (-1.2,2.3);

\begin{scope}[shift={(-1.2,.1)}]
\draw[->] (0,0) to (0,1.7) node[left] {$k$};
\draw[->] (0,0) to (1.7,0); 
\draw[->] (0,0) to (-1.2,-0.5);
\draw[ultra thick, blue] (-0.5,-0.5) node[below]{$\wl_k(i')$}  to (1,1.5) ;
\draw[green!80!black, ultra thick] (0,-.7) to  (-.4,1.3) node[left]{$\wl_k(i)$}  ;
\draw[red!95!black, ultra thick] (0.8,-0.6)  node[below]{$\wl_k(i'')$} to (0.8,1.5);
\draw[fill] (0.8,1.23) circle (\r); 
\draw[fill] (-0.139,-0.015) circle (\r);
\node[below right] at (-.2,-1.3) {$\Q^4$}; 
\end{scope}

\draw[very thick, ->] (2.2,-1) to[out=105,in=-20] node[above right]{$f_{kh}$} (.5,.85);

\begin{scope}[shift={(3,-2.5)}]
\draw[->] (0,0) to (0,1.7) node[left] {$h$};
\draw[->] (0,0) to (1.7,0) ; 
\draw[->] (0,0) to (-1.2,-0.5);
\draw[ultra thick, blue] (0.5,-0.5) node[below]{$\wl_h(i')$}  to (0.5,1.5);
\draw[green!80!black, ultra thick] (-0.4,-.5) to  (1,1.1) node[below right]{$\wl_h(i)$}  ;
\draw[red!95!black, ultra thick]  (-.8,.2)node[left]{$\wl_h(i'')$} to (0.8,1.4);
\draw[fill] (0.5,1.17) circle (\r); 
\draw[fill] (0.5,0.53) circle (\r); 
\node[below right] at (-.2,-1.3) {$\Q^4$}; 
\end{scope}

\draw[very thick, ->] (6,0.84) to[out=200,in=85] node[above left]{$f_{hm}$} (4.8,-1.01);

\begin{scope}[shift={(7.4,.1)}]
\draw[->] (0,0) to (0,1.7) node[left] {$m$};
\draw[->] (0,0) to (1.7,0); 
\draw[->] (0,0) to (-1.2,-0.5);
\draw[ultra thick, blue]  (1,-0.5) node[below]{$\wl_m(i')$} to (-0.5,1.2);
\draw[green!80!black, ultra thick] (-.3,-.8) node[left]{$\wl_m(i)$} to  (-.3,1.4);
\draw[red!95!black, ultra thick] (0.3,-.6)  to (1,1.2) node[right]{$\wl_m(i'')$} ;
\draw[fill] (-0.3,0.98) circle (\r); 
\draw[fill] (0.54,0.02) circle (\r); 
\node[below right] at (-.2,-1.3) {$\Q^4$}; 
\end{scope}

\node at (7,6) {$i,i',i''\in\IM$};
\node at (7,5.5) {$k,h,m\in\IOb$};

\end{tikzpicture}

%% file: ax4.tikz
\newcommand{\coordsys}[1]{
\draw[->,>=latex] (0,0) to (0,3) node[left,black] {$#1$};
\draw[->,>=latex] (0,0) to (3,0); 
\draw[->,>=latex] (0,0) to (-2,-1);
}

\tikzstyle{eltolas}=[cm={0.3,-.1,0.15,.3,(1.1,1.2)}]

\pgfmathsetmacro{\x}{8}
\pgfmathsetmacro{\y}{5.6}

\begin{tikzpicture}[scale=0.7]
\begin{scope}[shift={(0,\y)},very thick]
\coordsys{\forall k}
\begin{scope}[eltolas,thick,blue]
\coordsys{k'}
\end{scope}
\node (k) at  (2,1.5){}; 
\end{scope}

\begin{scope}[shift={(\x,\y)},very thick,blue]
\coordsys{\forall k'}
\node (k') at  (-1.2,1.5){}; 
\end{scope}

\begin{scope}[shift={(0,0)},very thick]
\coordsys{\forall h}
\begin{scope}[eltolas,dashed,thick,blue],
\coordsys{h'}
\end{scope}
\node (h) at  (2,1.5){}; 
\end{scope}

\begin{scope}[shift={(\x,0)},dashed, very thick,blue]
\coordsys{\exists h'}
\node (h') at  (-1.2,1.5){}; 
\end{scope}

\draw[->,ultra thick] (k') to[out=150,in=30]node[above]{$f_{kk'}$} (k);
\draw[->,ultra thick] (h') to[out=150,in=30]node[above]{$f_{hh'}$} (h);

\node at (\x/2,3*\y/4){$f_{kk'}=f_{hh'}$};

\end{tikzpicture}

%% file: Wk.tikz
\newcommand{\coordsys}[1]{
\draw[->,>=latex] (0,0) to (0,3) node[left,black] {$#1$};
\draw[->,>=latex] (0,0) to (3,0); 
\draw[->,>=latex] (0,0) to (-2,-1);
}

\tikzstyle{eltolas}=[cm={-0.3,-.1,-0.15,.3,(1.6,0.7)}]
\tikzstyle{eltolas2}=[cm={0.3,-.2,0.25,.3,(1.3,2.3)}]
\tikzstyle{eltolas3}=[cm={0.2,.1,-0.1,.2,(1.4,-1.2)}]

\pgfmathsetmacro{\x}{8}
\pgfmathsetmacro{\y}{5}
\pgfmathsetmacro{\s}{.6}

\begin{tikzpicture}[scale=0.4]

\draw[gray,fill=gray!10] (\x/1.5,0.8) ellipse [x radius=1.5*\x,y radius=1.4*\y];
\node at (1.5*\x,1.45*\y) {$\LargeMath{\mathbb{W}_k}$};
\begin{scope}[shift={(-.3,0)},ultra thick,scale=1.2]
\coordsys{k}
\begin{scope}[eltolas,thick,red!95!black]
\coordsys{b}
\end{scope}
\begin{scope}[eltolas2,thick,green!80!black]
\coordsys{a}
\end{scope}
\begin{scope}[eltolas3,thick,blue]
\coordsys{c}
\end{scope}
\node (bk) at  (2,-1.5){}; 
\node (tk) at  (2.3,3.3){}; 
\node (rk) at (2.7,1){};
\end{scope}

\begin{scope}[shift={(\x,\y)},very thick,scale=\s,green!80!black]
\coordsys{a}
\node (a) at  (-2,.5){}; 
\end{scope}

\begin{scope}[shift={(1.3*\x,0)},very thick,scale=\s,red!95!black]
\coordsys{b}
\node (b) at  (-1.5,1.2){}; 
\end{scope}

\begin{scope}[shift={(0.8*\x,-\y)}, very thick,scale=\s,blue]
\coordsys{c}
\node (c) at  (-1.5,1){}; 
\end{scope}

\draw[->,ultra thick] (a) to[out=180,in=60]node[above left]{$f_{ka}$} (tk);
\draw[->,ultra thick] (b) to[out=150,in=30]node[above left]{$f_{kb}$} (rk);
\draw[->,ultra thick] (c) to[out=170,in=-60]node[below left]{$f_{kc}$} (bk);

\node at (1.4*\x,-0.7*\y) {\Huge \ldots};


\end{tikzpicture}

%% file: proofdiagram.tikz
\usetikzlibrary{matrix}
\begin{tikzpicture}[scale=0.5]
  \matrix (m) [matrix of math nodes,row sep=2em,column sep=4em,minimum width=2em]
  {
     {} & {} & \mathrm{(ii)} & {} \\
     \ax{Axiom\, IV} & \mathrm{(i)} & \mathrm{(iv)} & \mathrm{(iii)} \\
	 {}& {} & \mathrm{(v)} & {} \\
};
\draw[double,->] (m-2-2) to (m-1-3);
\draw[double,<->] (m-2-1) to (m-2-2);
 \draw[double,->] (m-1-3) to (m-2-3); 
\draw [double, ->] (m-2-3) to (m-3-3);
 \draw[double,<-] (m-2-2) to (m-3-3);
\draw[double,->] (m-1-3) to (m-2-4);
\draw[double,<-] (m-3-3) to (m-2-4);
\end{tikzpicture}

%% file: cloudball.pgf
\begin{tikzpicture}[scale=2.9]
\usetikzlibrary{decorations.pathreplacing}
\pgfmathsetmacro{\c}{0.17}
\pgfmathsetmacro{\e}{0.25}
\pgfmathsetmacro{\gap}{0.1}
\pgfmathsetmacro{\size}{1.8}
\pgfmathsetmacro{\l}{0.04}

\begin{scope}[shift={(0,\size+0.5)}]
\draw[very thick] (-\size,0) to (\size,0) node[below right]{$\Q$};
\draw[fill, red,opacity=0.5] (0,0) ellipse [x radius=\e,y radius=0.05];
\draw[white, ultra thick] (0,0) ellipse [x radius=\e,y radius=0.05];
\draw[fill, green,opacity=0.5] (0,0) ellipse [x radius=\c,y radius=0.05];
\draw[white, ultra thick] (0,0) ellipse [x radius=\c,y radius=0.05];
\draw[thick] (-1,\l) to (-1,-\l) node[below]{-1}; 
\draw[thick] (0,\l) to (0,-\l) node[below]{0}; 
\draw[thick] (1,\l) to (1,-\l) node[below]{1}; 
\draw [decorate,decoration={brace,amplitude=10pt}]
(-\e,0.05) -- node [above, yshift=10] 
{$\E$} (\e, 0.05) ;
\draw (0,0) node[below right, yshift=-3,xshift=2] {$\C$} ;
\end{scope}

\draw[blue!10,fill] (-\size,-\size/4) rectangle (\size,\size);
\draw[gray] (-\size,-\size/4) grid (\size,\size);

\draw[fill, red,opacity=0.5] (0,1) circle [radius=\e];
\draw[fill, green,opacity=0.5]  (0,1) circle [radius=\c]; 

\draw[very thick, ->,>=latex] (0,0) to (0,1);
\draw[very thick, ->,>=latex] (0,0) to (1,0) node[below left]{$\vve_\scx$} ;
\draw[white, very thick] (0,1)  circle [radius=\c]; 
\draw[white, very thick] (0,1)  circle [radius=\e];

\draw (0,1) node[below left]{$\vve_\sct$} ;
\draw (-\e,1+\e) node[]{$B_\E$};
\draw (\c,1+\c) node[]{$B_\C$};
\end{tikzpicture}